\pgfplotsset{compat=1.6}
\newtheorem{thm}{Theorem}
\newtheorem{claim}{Claim}
\newtheorem{defn}{Definition}
\newtheorem{lem}{Lemma}
\newcommand\given[1][]{\:#1\vert\:}
\newcommand\Bigiven[1][]{\:#1\Big\vert\:}
\title{ Second Order and Moderate Deviation Analysis of a Block Fading Channel with Deterministic and Energy Harvesting Power Constraints }
\author{Deekshith P K, K Gautam Shenoy and Vinod Sharma\\ {Department of ECE, IISc, Bangalore, India.}\\{Email: \{deekshithp, konchady, vinod\}@iisc.ac.in}}
\begin{document}

\maketitle

\begin{abstract}
We consider a block fading additive white Gaussian noise (AWGN) channel with perfect channel state information (CSI) at the transmitter and the receiver. First, for a given codeword length and non-vanishing average probability of error, we obtain lower and upper bounds on the maximum transmission rate. We derive bounds for three kinds of power constraints inherent to a wireless transmitter. These include the canonical peak power constraint and average power constraint, and a time varying peak power constraint imposed by an \emph{energy harvesting device}-a mechanism that powers many modern-day wireless transmitters. The bounds characterize \emph{second order} deviation of finite blocklength coding rates from the channel capacity, which is in turn achieved by \textit{water-filling} power allocation across time. The bounds obtained also indicate the rate enhancement possible due to CSI at the transmitter in the finite blocklength regime. Next, we provide bounds on the optimal exponent with which error probability drops to zero when channel coding rate is simultaneously allowed to approach capacity \emph{at a certain rate}, as the codeword length increases. These bounds identify what is known as the moderate deviation regime of the block fading channel. We compare the bounds numerically to bring out the efficacy of our results.
\end{abstract}
\noindent
\begin{IEEEkeywords}
Finite block length regime, block fading, channel state information, water filling, energy harvesting, moderate deviation.
\end{IEEEkeywords}
\section{Introduction}
\label{Sec:Intro}
 \par In a wireless system, knowledge of instantaneous channel gains at the transmitter can be used to increase \textit{overall data rate} by adaptive modulation and coding, and power control \cite{alouini1999capacity}, \cite{tang2007quality}. For instance, depending on the \emph{fading dynamics}, \emph{capacity} of a flat fading AWGN channel with CSI at the transmitter (CSIT) and  the receiver (CSIR) is enhanced using \emph{water-filling} power control \cite{goldsmith1997capacity}.  However, to capture rate enhancement with CSIT, using metrics like the \emph{water-filling capacity} is fully justified only when there is \emph{no} delay constraint in the system.  Delay requirements are increasingly becoming stringent in 5G applications (see, for instance, \cite{ji2018ultra}), and more refined metrics are required to characterize the performance of wireless systems under such circumstances. In this work, we consider a wireless system with perfect CSIT and CSIR subject to a delay requirement and certain power constraints. For such a system, we characterize the data rate enhancement with CSIT, and \emph{how quickly} the error probability drops, when coding rate under \emph{power adaptation} approaches the channel  capacity, as delay tends to infinity. 
 \par Delay requirements in a wireless network can arise due to quality of service (QoS) constraints arising from the nature of data transmitted, readiness of the users to  pay for extra resources, network architecture, storage limitations etc. All these factors determine at various levels the overall system performance. However, in this work we restrict to delay incurred at the physical layer in sending a codeword to the receiver.
\par With delay constraints imposed at the physical layer, the traditional approach to study rate enhancement due to CSIT (either perfect or imperfect) is to first characterize delay limited capacity, outage capacity or average capacity of the channel (see \cite{el2011network}, Chapter 23 for details on these capacity notions) as per requirement. Once the characterization is in place, the approach is to obtain a power allocation strategy that maximizes the required capacity expression. In this regard, \cite{caire1999optimum} obtains the optimal power allocation that maximises the outage capacity under the assumption of \textit{non-causal} CSIT. In \cite{negi2002delay}, the authors obtain the optimal power allocation scheme maximizing the average capacity with causal CSIT. Nonetheless, the above mentioned notions of capacity may \emph{not} be realistic metrics to evaluate performance of delay sensitive systems with CSIT. This is because, the capacity notions are inherently asymptotic.  
\par In this work, we provide bounds to characterize the rate enhancement due to CSIT over a block fading AWGN channel, at a given codeword length  and average probability of error upto \emph{second order}, using finite block length analyses pioneered in \cite{strassen1962asymptotische}, \cite{hayashi2009information}, \cite{polyanskiy2010channel}. Further, we also characterize rate at which error probability drops to zero when the channel coding rate approaches the capacity at a pre-specified rate, by conducting a \emph{moderate deviation analysis} in the spirit of \cite{altuug2014moderate}, \cite{polyanskiy2010channelmode}. We derive the bounds under three power constraints that are relevant to a wireless transmitter. Our assumption of perfect CSIT is idealistic. However, a wireless transmitter can acquire  instantaneous channel gains either using channel reciprocity  if the wireless system is time duplex  with moderate mobility, or using a dedicated feedback link if the system is frequency duplex with a reasonable round trip delay. In addition, rates obtained under perfect CSIT assumption provide upper bounds for rates achievable without CSIT or with imperfect CSIT and many such characterizations are available in literature \cite{goldsmith1997capacity}, \cite{caire1999optimum}, \cite{negi2002delay}. Also, knowledge of the power control strategies suitable for delay constrained systems sheds insights into how system energy is to be used in such systems. Efficient usage of system energy is essential for energy constrained transmitters which will become prominent { in future wireless networks \cite{mahapatra2016energy}}.  
\par Using finite block length analysis, for instance in the spirit of  \cite{polyanskiy2010channel}, to characterize second order back off from the capacity of a fading channel has attracted a lot of reasearch interest in the past. We will briefly review some of those results and discuss our  contributions in that context. In \cite{polyanskiy2011scalar}, authors consider a scalar coherent fading channel with stationary fading (generalization of block fading)  \emph{without} CSIT and characterize \textit{dispersion} of that channel. In \cite{yang2015optimum}, authors show that the second order optimal power allocation scheme over a \emph{quasi static} fading channel with CSIT and CSIR is \emph{truncated channel inversion}. Quasi static fading corresponds to block fading with a single block. The second order backoff therein is in terms of the number of channel uses within a block. Non-ergodic setting of communication within a single coherence interval is considered also in \cite{molavianjazi2013second} and  authors therein derive a finite block length version of outage capacity.  A MIMO Rayleigh block fading channel with no CSIT and CSIR is considered in \cite{durisi2016short} and achievability and converse bounds are derived for the \textit{short packet communication} regime. A high-SNR normal approximation of the maximal coding rate over a block fading Rayleigh channel without CSIT and CSIR is obtained in \cite{lancho2017high}.  
\par With regard to the moderate deviation analysis of communication channels, results for a discrete memoryless channel is obtained in \cite{altuug2014moderate} and for an AWGN channel in \cite{polyanskiy2010channelmode}. Exact \emph{moderate deviation constants} are characterized for the corresponding channels in these works.  
\par As mentioned, one of the power constraints under which we derive second order rate bounds is imposed by an energy harvesting device. In this context, capacity of a point-to-point \emph{real} AWGN channel and a \textit{fast fading} (i.e., a single channel use within a block) channel with energy harvesting transmitter is obtained in \cite{rajesh2014capacity}.  In the spirit of \cite{hayashi2009information}, \cite{polyanskiy2010channel}, the problem of characterizing the \emph{second order back-off} from the capacity of a real AWGN channel with an energy harvesting transmitter is undertaken in \cite{shenoy2016finite}, \cite{fong2018achievable}. Even though, exact second order back-off coefficient of the channel model is still unknown, bounds on the same can be found in these works. However, such a finite codeword length analysis for a \emph{fading} channel with an energy harvesting transmitter does not seem to be available.
\par Now we summarize the key contributions of this work.
\begin{itemize}
\item We provide lower and upper bounds on the maximal channel coding rate at a codeword length $n$ and average probability of error $\epsilon$, over a  \emph{complex AWGN channel subjected to block fading}, with CSIT and CSIR. This is a refinement and extension of our work on finite state fading block channel in \cite{deekshith2018finite}. Ours is a second order characterization in terms of the number of coherence blocks over which the communication spans. Such a characterization, in terms of the number of coherence blocks, is hitherto not available. Our characterization is in stark contrast to the non-ergodic setting of communication spanning a single block in \cite{yang2015optimum}, wherein the characterization is in terms of number of channel uses within a block. Further, our rate bounds characterize the back-off from the \emph{water-filling} channel capacity in the finite block length regime. We also demonstrate numerically in Section \ref{Sec:Numerical_Egs} that water filling power adaptation at finite blocklength can provide higher rates (upto \emph{second order}) compared notions like to \emph{optimal delay limited power control} \cite{hanly1998multiaccess}. In addition, we derive bounds separately for three kinds of  power constraints  wireless transmitters are normally subjected to namely, peak power constraint, average power constraint and a time varying peak power constraint inherent to an energy harvesting transmitter.
\item  In deriving the bounds, the CSIT assumption makes the analysis involved and non-trivial. In particular, we derive the bounds for block fading channel with general, complex-valued fading gains and hence the \emph{type based} approach in \cite{tomamichel2014second} for finite state discrete memoryless renders infeasible.  Further, in obtaining the upper bounds, the dependence of the channel input on the fading states makes the  corresponding optimization problems difficult to solve. To circumvent this, we derive alternate bounds utilizing the properties of asymptotically optimal power allocation scheme, viz, the water filling scheme, McDiarmid's inequality and \emph{strong approximation of partial sums of i.i.d. random variables} \cite{csorgo2014strong}. We make use of the strong approximation result to bound the probability term in the \emph{relaxed meta converse} (Lemma \ref{Lem:EH_BF_MetaConverse}). This is in contrast to the Berry-Esseen approach for the finite state channel in \cite{tomamichel2014second}. We find that using strong approximation results is conceptually simpler and yields tighter upper bounds for the complex fading channel. An alternate approach is to discretize the state space, make use of the finite state approach and apply the limits to obtain the result for the general fading case. However, such an approach yields loose upper bounds.  The use of strong approximation results to obtain finite block length bounds is novel in such a setting and can be of independent interest.
\item We derive moderate deviation bounds in the spirit of \cite{altuug2014moderate}, \cite{polyanskiy2010channelmode}, for the block fading channel under consideration. These bounds characterize the optimal rate at which the probability of error converges to zero when the channel coding rate is allowed to scale at a particular rate as the codeword length increases. Such an analysis has not been previously undertaken for the canonical or the energy harvesting version of  a block fading channel.   
\end{itemize}  
\par This paper is organized as follows. In Section \ref{Sec:Model_Notation}, we introduce the system model and notation. We provide  lower and upper  bounds on the maximal channel coding rate with peak codeword power constraint in Section \ref{Sec:PeakPower_Trans}. Next, in Section \ref{Sec:AvgPower_Trans}, we  provide  lower and upper  bounds on the maximal channel coding rate with average codeword power constraint. Similar bounds for the case when the transmitter is harvesting energy from the environment are provided in Section \ref{Sec:EH_Trans}. Moderate deviation rate is obtained in Section \ref{sec_mod_dev}. In Section \ref{Sec:Numerical_Egs}, we compare the bounds numerically and demonstrate the utility of the bounds derived. We conclude in Section \ref{Sec:Conclusion}. Proofs are delegated to the appendices.  
\section{Model and Notation}
\label{Sec:Model_Notation}
\subsection{Channel Model}
\label{SubSec:Channel_Model}
We consider a point-to-point, discrete time, frequency non-selective block fading channel subject to additive, circularly symmetric complex Gaussian noise at the receiver. The probability density function (pdf) of the additive noise random variable with covariance matrix $\left[\begin{smallmatrix}\sigma_N^2/2 & 0\\ 0 & \sigma_N^2/2\end{smallmatrix}\right]$ is denoted as $\mathcal{CN}\left(0,\sigma_N^2\right)$. The additive noise process is  independent and identically distributed (i.i.d.) across channel uses. A \emph{slot} refers to the time period between successive channel uses; the period between $i$\textsuperscript{th} and $(i+1)$\textsuperscript{th} channel use is slot $i$. A \emph{block} refers to a time period of duration $T_c$  (channel coherence time) over which the gain of the underlying wireless channel remains constant. Let the delay constraint  imposed, by some application, e.g., voice, on the communication at the physical layer be $D$ time units. In this work, we restrict exclusively to the time for information transmission. Thus, let $D$ be the time left for transmission after taking care of the transmission of training symbols for channel state estimation, synchronization, transmission of control information etc. We assume the transmitter has perfect knowledge of the  channel state. Considering for convenience $D$ as an integer multiple of $T_c$, $B={D}/{T_c}$ is the number of blocks over which the communication spans. Let $n_c$ denote the number of times the in-phase and quadrature channels are used within a block. Then, the number of complex channel uses for the whole of communication, or equivalently, the codeword length $n=Bn_c$. 
\par The channel gain or the fading coefficient in block $b$ is denoted as $H_b\in \mathbb{C}$, the set of complex numbers, such that $\mathbb{E}\left[|H_b|^2\right]=\sigma_H^2<\infty$. Here, $\mathbb{E}[\cdot]$ denotes the expectation operator and $|\cdot|$ denotes absolute value. We assume the channel gains are i.i.d. across blocks and is independent of the additive noise process. Let $F_{\text{H}}$ (known to both the transmitter and the receiver) denote the cumulative distribution function (cdf)  common to all $H_b$, $b\in[1:B]$. (Here, for natural number $m$, $[m:m+\ell-1]$ denotes the set of $\ell$ consecutive natural numbers from $m$.) The instantaneous channel gains are assumed to be known to the transmitter  as well as the receiver and the transmitter gets to know them only causally. We refer to this as the full CSIT and CSIR assumption. Even though, in practice, a wireless transmitter will \emph{not} have perfect channel state information, rates obtained under such an assumption provide upper bounds for rates with partial CSIT. Further, our analysis can be considered as the first step in  understanding the possible reduction in rate under various partial CSIT assumptions like quantized state, minimum mean square error estimate of the state etc. Also, for the energy harvesting case considered later, our analysis yields results for no CSIT and full CSIT cases either of which are otherwise unknown.
\par Let $X_{(b-1)n_c+k}$ denote the channel input corresponding to {the} $k^{\text{th}}$ channel use in {the} $b^{\text{th}}$ block, where, $k \in [1:n_c]$, $b \in [1:B]$. For convenience, from here on, $[b,k]\triangleq (b-1)n_c+k$.  Let $Z_{[b,k]}$ and $Y_{[b,k]}$ denote the corresponding noise variable and the channel output, respectively. Then, $Y_{[b,k]}=H_{b}X_{[b,k]}+Z_{[b,k]}.$ For this channel model, if the number of blocks $B$ tends to infinity (and hence $D\rightarrow \infty$), it is well known that the channel capacity is given by 
\begin{equation}
\label{Eqn:Cap_BlockFadeCha}
\mathbf{C}\left(\bar{P}\right)\triangleq\mathbb{E}\left[\log\left(1+\frac{|H_1|^2\mathcal{P}_{\text{WF}}\left(|H_1|\right)}{\sigma_N^2}\right)\right],
\end{equation}
 where, $\mathcal{P}_{\text{WF}}(|H_1|)\triangleq \Big(\lambda-\frac{\sigma_N^2}{|H_1|^2}\Big)^+,$ $(\cdot)^+=\max(0,\cdot)$ (see \cite{el2011network}, Section 23.2 wherein the result is given for a \emph{real} block fading channel and \cite{tse2005fundamentals}, Section 5.4.6 for a \emph{complex} block fading channel with $n_c=1$). Further, $\mathbf{C}\left(\bar{P}\right)$ is the capacity under an \emph{average power constraint} and $\lambda$ is obtained by solving the equation $\mathbb{E}\left[\mathcal{P}_{\text{WF}}\left(|H_1|\right)\right]=\bar{P}.$ Here, $\mathcal{P}_{\text{WF}}(\cdot)$ is called the water-filling power allocation with average power $\bar{P}$.
\subsection{Transmitter Model}
\label{SubSec:Trans_Model}
\par Let $S$ be the message to be transmitted, chosen randomly uniformly from $[1:M]$. The message is encoded into a codeword $\mathbf{X}'=(X_1',\hdots,X_n')$, possibly randomly chosen. Let $\mathbf{X}=(X_1,\hdots,X_n)$ be the corresponding channel input vector. As we explain below, $\mathbf{X}'$ and $\mathbf{X}$ can possibly be different owing to power control, energy unavailability etc. Also, in a wireless transmitter, there can be various kinds of constraints on $\mathbf{X}$. Next, we explain these constraints.  
\par In practice, a wireless transmitter is subjected to various kinds of power constraints. Inherently, there is a restriction on the maximum power that can be expended, due to circuitry limitations, regulatory requirements etc. We refer to such a constraint as the \emph{peak power constraint}. If the transmitter has knowledge about the gain of the wireless channel, it can adjust the transmit power according to the channel gain. However, the average power that can be expended over all channel gains is usually bounded. This constraint, referred to as the \emph{average power constraint},  corresponds to the long term power utilization efficiency of the system. Though in reality these constraints are simultaneously present, in this work, as is the usual practice (see \cite{polyanskiy2011scalar} for an instance of a peak power constrained transmitter and \cite{yang2015optimum} for an average power constrained transmitter), we study them only in isolation. In addition, we also consider an \emph{energy harvesting transmitter} motivated by the fact that such communication systems are increasingly becoming popular \cite{ku2016advances}. We will consider this in more detail later.
\subsubsection{Peak Codeword Power Constrained Transmitter}
\label{SubSubSec:PeakCodePow_Trans}
Given the message $S$ to be transmitted and the fading gains $ (H_1,\hdots,H_b)\equiv H^{(b)}$ till block $b\in[1:B]$ (we denote the corresponding realizations as $h^{(b)}$), at ${[b,k]}$\textsuperscript{th} channel use, the encoder chooses the codeword symbol $X_{[b,k]}'\left(S,H^{(b)}\right) \equiv X_{[b,k]}'$. In certain circumstances, especially while proving lower bounds on the maximal achievable rate, it is convenient to consider that the transmitter encodes the message independent of channel gain realizations and later adapt the power of codeword symbols using a \emph{power controller}. Such a scheme is attractive from an implementation perspective as well, as it eliminates the need for variable rate coding by multiplexing several codebooks (see \cite{caire1999capacity}, Section IV). In such a case, $X_{[b,k]}'$ will be different from $X_{[b,k]}\left(S,H^{(b)}\right)\equiv X_{[b,k]}$, the corresponding channel input symbol. When the encoding is dependent on the causally available channel gains and hence, there is no separate power control, $\mathbf{X}=\mathbf{X}'$ . We say that a transmitter has \emph{peak codeword power constraint} (abbreviated as \textbf{PP}), if, for each $m\in[1:M]$ and  every realization $ \mathbf{h}\triangleq (h_1,\hdots,h_B)$ of the fading vector $ \mathbf{H}\triangleq(H_1,\hdots,H_B)$, the channel input vector $\mathbf{X}$ satisfies 
\begin{equation}
\label{Eqn:Defn_PCP_Constr}
 \sum_{b=1}^{B}\sum_{k=1}^{n_c}\left|X_{[b,k]}\left(m,h^{(b)}\right)\right|^2 \leq  Bn_c\bar{P}.
\end{equation} 
The above inequality holds almost surely (a.s.) with respect to the underlying joint distribution.
 \subsubsection{Average Codeword Power Constrained Transmitter}
 We say that a transmitter has \emph{average codeword power constraint} (\textbf{AP}), if for each $m\in[1:M]$, the channel input vector $\mathbf{X}$ satisfies
\begin{equation}
\label{Eqn:Defn_ACP_Constr}
\mathbb{E}\left[\sum_{b=1}^{B}\sum_{k=1}^{n_c}\left|X_{[b,k]}\left(m,H^{(b)}\right)\right|^2\right] \leq  Bn_c\bar{P}.
\end{equation} 
Here the expectation is with respect to the fading gain vector $\mathbf{H}$.
\label{SubSubSec:AvgCodePow_Trans}
\subsubsection{Energy Harvesting Transmitter}
\label{SubSubSec:EH_Trans}
\par In this case, the transmitter is connected to an energy harvesting device that provides power to transmit codeword symbols. Prior to $[b,k]$\textsuperscript{th} channel use, the device  harvests energy $E_{[b,k]}\geq 0$ from its ambient environment. The harvested energy $E_{[b,k]}$  is made available for transmission in the same slot itself. We assume the amount of energy harvested is i.i.d. across slots, with  mean $\overline{E}$, variance $\sigma_E^2$ and $\mathbb{E}[E_{[1,1]}^4]<\infty$.
 \par The harvesting device is equipped with an energy buffer (equivalently, an \emph{accumulator} or a battery) which we simply refer to  as a buffer. The buffer stores the residual energy after  $[b,k]$\textsuperscript{th} transmission for future use.  This model is often referred to as the \emph{harvest-use-store} model \cite{rajesh2014capacity}. We assume the buffer has  infinite energy storage capacity. This is a simplifying assumption and is usually made in literature \cite{rajesh2014capacity}, \cite{fong2016non}. Prior to $[b,k]\textsuperscript{th}$ transmission, given $S,~H^{(b)}$ and energy realizations $\left(E_{[1,1]},\hdots,E_{[b,k]}\right)\equiv E^{([b,k])}$, the encoder chooses  $X_{[b,k]}'\left(S,E^{([b,k])},H^{(b)}\right)\equiv X_{[b,k]}'$ as the $[b,k]$\textsuperscript{th} codeword symbol. The exact symbol transmitted in $[b,k]$\textsuperscript{th} slot could be different from $X_{[b,k]}'$. This is due to the fact that the transmitter is harvesting  energy and hence, may not have enough energy to transmit $X_{[b,k]}'$ in that slot. The corresponding channel input is $ X_{[b,k]}$, $X_{[b,k]}\in\mathbb{C}$. 
\par Let $A_{[b,k]}$ denote the energy available in the buffer  at the beginning of  $[b,k]$\textsuperscript{th} transmission.  We assume that the  buffer is initially empty, i.e. $A_{[1,1]} = 0$. Since we consider the harvest-use-store model, the total energy available to the transmitter at the beginning of $[b,k]$\textsuperscript{th} transmission is $E_{[b,k]}+A_{[b,k]}\triangleq \hat{A}_{[b,k]}$.
The buffer state $A_{[b,k]}$ evolves according to $A_{[b,k+1]}=\hat{A}_{[b,k]} - |X_{[b,k]}|^2,~ k\in[1:n_c-1],~n_c\neq 1,$  and $A_{[b+1,1]}=\hat{A}_{[b,n_c]} - |X_{[b,n_c]}|^2,~  k=n_c,$
where $|X_{[b,k]}|^2\leq \hat{A}_{[b,k]}$. This recursive relation and  $A_{[b,k]}\geq 0$ a.s. for all $b$, $k$ yields the energy harvesting constraint (\textbf{EH})
\begin{equation}
\label{Eqn:Defn_EH_Constr}
\sum_{\ell=1}^{b}\sum_{i=1}^{k}\left|X_{[\ell,i]}\right|^2\leq \sum_{\ell=1}^{b}\sum_{i=1}^{k}E_{[\ell,i]},~\text{a.s.}
\end{equation}
\par The decoder used in obtaining  our lower bounds is  same as that used in the $\beta\beta$ achievability bound \cite{yang2018beta}. We require the average probability of error $\mathbb{P}\left[\psi(\mathbf{Y},\mathbf{H}) \neq S\right]\leq \epsilon$,  $0<\epsilon<\frac{1}{2}$.
\par \emph{Goal}: Let the maximum size of the codebook with codeword length $n$ and average probability of error  $\epsilon$, with $\textbf{PP}$ constraint be denoted as $M_p^*(n,\epsilon,\bar{P})\equiv M_p^*$. The corresponding maximal coding rate (in bits per channel use) $R_p^*(n,\epsilon,\bar{P})\equiv R_p^*=n^{-1}\log M^*_p$. Similarly, for transmitters with \textbf{AP} and \textbf{EH} constraint we define $M_a^*$ and $R_a^*$, and $M_e^*$ and $R_e^*$, respectively. Our primary goal is to obtain tractable lower and upper bounds for $R_p^*$, $R_a^*$ and $R_e^*$. In addition, we also conduct a moderate deviation analysis to characterize the \emph{exponential rate} at which error probability drops to zero, when  $R_p^*,~R_a^*$ and $R_e^*$ are allowed to converge to capacity at a pre-specified 
rate.
\par We conclude this section by making note of the notation that we use in the rest of the paper. We fix all logarithms to the base $2$. However, $\exp(\cdot)$ denotes the exponent $e$. Let $ C(x)\triangleq \log\big(1+x\big)$, $\mathcal{L}(x)\triangleq  {x}/{1+x}$, {$V(x)\triangleq  {x(2+x)}/{(1+x)^2}$ }. The set of integers, positive integers, real and positive real numbers are denoted as $\mathbb{Z}$, $\mathbb{Z}_+$ $\mathbb{R}$ and $\mathbb{R}_+$, respectively. The indicator function of an \textit{event} $A$ is $\mathbbm{1}_A$ and  $A^\ell \equiv A_1\times\hdots \times A_\ell$.  For $p=1,~2$, given $\mathbf{a},~\mathbf{c}$ belonging to $\mathbb{C}^\ell$, $||\mathbf{a}||_p$ denotes the $p-$norm of $\mathbf{a}$ and $\langle{\mathbf{a},\mathbf{c}}\rangle$ the inner product of $\mathbf{a}$ and $\mathbf{c}$. In particular, $||a||_2$ is simply denoted as $||a||$. For $\alpha\in\mathbb{R}$, $\mathbf{a}^\alpha \triangleq (a_1^\alpha,\hdots,a_\ell^\alpha)$ and $\alpha\cdot \mathbf{a}=\left(\alpha a_1,\hdots,\alpha a_n\right)$. For a function $f$ defined on $\mathbb{C}$, $(f(a_1^\alpha),\hdots,f(a_\ell^\alpha))$ is sometimes compactly expressed as $f(\mathbf{a}^\alpha)$. If $\ell=Bn_c$, $\mathbf{a}_b=(a_{[b,1]},\hdots,a_{[b,n_c]})$ and $a_{[b,k]}=a_{[b,k],1}+ja_{[b,k],2}$. For $q\in\mathbb{R}_+,~\ell\in\mathbb{Z}_+$, let $\mathcal{X}_{\ell}(q)=\big\{\mathbf{x}: ||\mathbf{x}||_2=\sqrt{\ell q}\big\}$. At times, we denote the sum of  $a_1,\hdots,a_\ell$, $ \sum_{k=1}^{\ell}a_k$ as $\mathbf{S}_\ell(a_1)$. The notation $\mathbf{U}\perp\mathbf{V}$ denotes that the random vectors are independent. The variance of a random variable $U$ is denoted as $\mathbb{V}[U]$ and the expectation with respect to $\mathbf{U}$ as $\mathbb{E}_{\mathbf{U}}[\cdot]$. Also, $\mathbf{U}\sim F$ denotes $\mathbf{U}$ is distributed according to the distribution $F$. The function $\Phi(\cdot)$ denotes the cdf of a standard Gaussian random variable, $\phi(\cdot)$ denotes the corresponding pdf, $\Phi^{-1}(\cdot)$ denotes the inverse cdf and $X\sim \mathcal{N}(a,b)$ denotes that the random variable $X$ has normal distribution with mean $a$ and variance $b$. We use the standard Bachman-Landau asymptotic notation $o(\cdot),~O(\cdot)$. The notation $\overline{\lim}\equiv \limsup,$ $\underline{\lim}\equiv \liminf$. The notation $\stackrel{D}{=}$ means equivalence in distribution. 

\section{Peak Power Constrained Transmitter}

\label{Sec:PeakPower_Trans}

\par  In this section, we provide lower and upper bounds on $R^*_p$, the maximal coding rate under peak codeword power constraint mentioned in \eqref{Eqn:Defn_PCP_Constr}. With $\mathcal{P}_{\text{WF}}(\cdot)$ and $\lambda$ defined as in \eqref{Eqn:Cap_BlockFadeCha}, let $G_b \triangleq |H_b|\sqrt{\mathcal{P}_{\text{WF}}(|H_b|)} \big /{\sigma_N}$. For $\alpha\in(0,1)$,  $c_\epsilon\triangleq \sqrt{{2(n_c+1)\log (1/\alpha\epsilon)}/{\log e}}$ and
\begin{equation}
\label{Eqn:Defn_V_BF_PP}
V_{\text{BF}}(\bar{P}) \triangleq \mathbb{E}\left[V\left(G_1^2 \right)\right]+n_c\mathbb{V} \left[C\left(G_1^2\right)\right]+\mathbb{V}\left[\mathcal{L}\left(G_1^2\right)\right].
\end{equation}
With the above notation in place, we have the following result that provides  lower and upper bounds on $R_p^*$.
 \begin{thm}
\label{Th:PeakPow_Bnds}
Consider the block fading channel described in Section \ref{SubSec:Channel_Model}  with a peak power constrained transmitter as in \ref{SubSubSec:PeakCodePow_Trans}. For $n$ large enough and any $\alpha\in(0,1)$, the maximal rate $R^*_p$ is lower bounded as
\begin{equation}
\label{Eqn:Bnd_LB_R_p_star_Th_PP}
R_p^*\geq \mathbf{C}(\bar{P})-\frac{c_\epsilon}{\sqrt{n}}+\sqrt{\frac{V_{\text{BF}}(\bar{P})}{n}}\Phi^{-1}\left(\frac{(1-\alpha)\epsilon}{2}\right)+O\left(\frac{\log n}{n}\right).
\end{equation}
Further, $R^*_p$ is upper bounded as
\begin{equation}
\label{Eqn:Bnd_UB_R_p_star_Th_PP}
R^*_p\leq \mathbf{C}(\bar{P})+\sqrt{\frac{V_{\text{BF}}(\bar{P})}{n}}\Phi^{-1}(\epsilon)+o\left(\frac{1}{\sqrt{n}}\right),
\end{equation}
  \end{thm}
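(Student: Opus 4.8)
The plan is to prove the two bounds separately with the finite-blocklength toolkit, adapted to the block structure and the causal CSIT. Both directions reduce to a Gaussian approximation of a \emph{block-structured information density} $\imath=\sum_{b=1}^{B}\imath_b$ in which, conditioned on the fading $\mathbf{H}$, $\imath_b$ is a sum of $n_c$ independent per-symbol terms governed by the block-$b$ signal-to-noise ratio $G_b^{2}$. Applying a Berry--Esseen bound conditionally on $\mathbf{H}$ (the summands are identically distributed within a block but not across blocks) and then a central-limit/Taylor expansion over the $B$ i.i.d.\ blocks merges the two layers of randomness into a single Gaussian whose per-channel-use variance is exactly $V_{\text{BF}}(\bar{P})$ of \eqref{Eqn:Defn_V_BF_PP}: the term $\mathbb{E}[V(G_1^{2})]$ is the within-block dispersion averaged over fading, $n_c\,\mathbb{V}[C(G_1^{2})]$ is the block-to-block fluctuation of the conditional rate (weighted by $n_c$ because each block carries $n_c$ symbols at rate $C(G_b^{2})$), and $\mathbb{V}[\mathcal{L}(G_1^{2})]$ arises from the per-block codeword-power fluctuation being modulated by the varying per-block SNR under the global power constraint.

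For the lower bound I would fix $\alpha\in(0,1)$ and split the error budget into $\alpha\epsilon$ (reserved for a power-overshoot event) and $(1-\alpha)\epsilon$ (reserved for decoding). The codebook is drawn uniformly on the shell $\mathcal{X}_n(1)$ independently of $\mathbf{H}$, and the power controller scales the symbols of block $b$ by $\sqrt{(1-\delta_n)\,\mathcal{P}_{\text{WF}}(|H_b|)}$, which is admissible since the fading is revealed causally. Because $\mathcal{P}_{\text{WF}}(|H_b|)\le\lambda$ is bounded, $\tfrac1B\sum_{b}\mathcal{P}_{\text{WF}}(|H_b|)$ concentrates around $\bar{P}$ by Hoeffding's inequality; choosing the deterministic back-off $\delta_n=\Theta\!\big(\sqrt{\log(1/\alpha\epsilon)/B}\big)$ makes \eqref{Eqn:Defn_PCP_Constr} hold almost surely except on an event of probability at most $\alpha\epsilon$, on which no information is conveyed; the corresponding rate loss is the $-c_\epsilon/\sqrt{n}$ term. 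On the good event I would invoke the $\beta\beta$ bound \cite{yang2018beta} with error $(1-\alpha)\epsilon$, which reduces matters to lower bounding a threshold $\gamma$ for which $\mathbb{P}[\imath(\mathbf{X};\mathbf{Y}|\mathbf{H})\le\gamma]$ is at most the decoding budget; the two-layer Gaussian approximation above, together with the factor of two inherent in the structure of the $\beta\beta$ bound, yields $\gamma\ge n\mathbf{C}(\bar{P})-c_\epsilon\sqrt{n}+\sqrt{nV_{\text{BF}}(\bar{P})}\,\Phi^{-1}\!\big(\tfrac{(1-\alpha)\epsilon}{2}\big)+O(\log n)$, i.e.\ \eqref{Eqn:Bnd_LB_R_p_star_Th_PP}.

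For the upper bound I would use a relaxed meta-converse in the spirit of \cite{polyanskiy2010channel}, with auxiliary output law $Q_{\mathbf{Y}|\mathbf{H}}=\prod_{b,k}\mathcal{CN}\!\big(0,\sigma_N^{2}+|H_b|^{2}\mathcal{P}_{\text{WF}}(|H_b|)\big)$ --- the water-filling capacity-achieving output distribution, which depends only on the causally revealed fading. The obstacle noted in Section \ref{Sec:Intro} is that $\mathbf{X}$ may be an arbitrary causal function of $\mathbf{H}$, so the induced optimization over inputs is intractable, and the type-based/Berry--Esseen route of \cite{tomamichel2014second} is unavailable because the fading alphabet is continuous. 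I would circumvent this via the Karush--Kuhn--Tucker structure of water-filling: for every feasible per-block power profile $(p_b)$ with $\sum_b p_b\le B\bar{P}$, the marginal gain in the conditional mean of $\log\frac{dP_{\mathbf{Y}|\mathbf{X},\mathbf{H}}}{dQ_{\mathbf{Y}|\mathbf{H}}}$ per unit of additional power, $n_c\log e\cdot\tfrac{|H_b|^{2}/\sigma_N^{2}}{1+G_b^{2}}$, equals $n_c\log e/\lambda$ on every active block and is no larger elsewhere, so that this conditional mean is at most $\sum_b\big[n_c C(G_b^{2})-\tfrac{n_c\log e}{\lambda}(\mathcal{P}_{\text{WF}}(|H_b|)-\bar{P})\big]$, which has mean $n\mathbf{C}(\bar{P})$ over the fading. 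The tail of this conditionally Gaussian-like information density is then controlled not by Berry--Esseen but by the strong approximation of partial sums of i.i.d.\ random variables \cite{csorgo2014strong} together with McDiarmid's inequality: the strong approximation couples the block-indexed partial sums to a Brownian path, delivering the sharp $\Phi^{-1}(\epsilon)$ tail uniformly over fading realizations, while McDiarmid's bounded-difference inequality absorbs the residual dependence on the noise and fading coordinates. Substituting into the relaxed meta-converse and solving for $\log M$ gives \eqref{Eqn:Bnd_UB_R_p_star_Th_PP} with the same dispersion $V_{\text{BF}}(\bar{P})$, after an accounting that matches the within- and between-block variance contributions to those on the achievability side.

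The hard part will be this last converse step: replacing the arbitrary causal fading-dependent input by a scalar per-block power profile, showing that the least favourable profile is --- up to $o(\sqrt{n})$ --- the water-filling one, and simultaneously obtaining a Gaussian tail that is tight to order $\Phi^{-1}(\epsilon)$ with variance precisely $V_{\text{BF}}(\bar{P})$; the combination of water-filling KKT optimality, McDiarmid's inequality, and the strong-approximation coupling is exactly what is designed to deliver this. A secondary nuisance in both directions is the non-smoothness of $\mathcal{P}_{\text{WF}}(\cdot)=(\lambda-\sigma_N^{2}/|\cdot|^{2})^{+}$ at the activation boundary $\{|H_b|^{2}=\sigma_N^{2}/\lambda\}$: one must verify that this boundary has zero probability (or handle it separately) and that the functions of the fading entering the expansions are Lipschitz, so that the Berry--Esseen and Taylor remainders stay at the claimed orders $O(\log n/n)$ and $o(1/\sqrt{n})$.
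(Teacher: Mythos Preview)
Your proposal is correct and follows the paper's approach closely in both directions: the achievability via the $\beta\beta$ bound with a shell code, water-filling power control with a deterministic back-off $\delta_n$, and a Hoeffding-type bound on the power-overshoot event; and the converse via the meta-converse with the water-filling auxiliary output law, the KKT identity $\mathcal{L}(G_b^{2})=\mathcal{P}_{\text{WF}}(|H_b|)/\lambda$ (the paper's Claim~\ref{Claim:L_bound}) to dominate the information density uniformly over inputs, strong approximation of the block-indexed partial sums, and McDiarmid's inequality to replace the random conditional variance $\nu_2(\mathbf{X}',\mathbf{H})$ by its expectation.

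Two small implementation differences relative to the paper are worth flagging. First, in the lower bound the paper conditions on the \emph{codeword} $\tilde{\mathbf{X}}$ (not on $\mathbf{H}$) before applying a single Berry--Esseen step; your two-layer decomposition (condition on $\mathbf{H}$, then CLT over blocks) would also work and yields the same $V_{\text{BF}}(\bar P)$ by the law of total variance, but note that the power-overshoot analysis must handle the joint randomness of the per-block codeword energies $\|\tilde{\mathbf{X}}_b\|^{2}$ as well as the water-filling powers --- simply concentrating $B^{-1}\sum_b\mathcal{P}_{\text{WF}}(|H_b|)$ is not quite enough, and the paper spends an extra Chebyshev step on $\sum_b\|\tilde{\mathbf{X}}_b\|^{4}$ (Lemma~\ref{Claim:P_E_T_UB_PP}). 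Second, the factor of two in $\Phi^{-1}\!\big((1-\alpha)\epsilon/2\big)$ arises specifically from fixing the $\beta\beta$ parameter $\tau=(1-\alpha)\epsilon/2$, not from a generic structural feature of the bound.
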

\begin{proof}
See Appendix \ref{App:Proof_PeakPow}.
\end{proof}
\subsection{Discussion of results}
\label{SubSec:PeakPower_Trans}
First, note that, with \emph{no} CSIT and \textbf{PP} constraint, the \emph{optimal}  coefficient of $1/\sqrt{n}$ term, i.e., the channel dispersion or optimal second order coefficient, is obtained by replacing $G_1$ in the definition of $V_{\text{BF}}(\bar{P})$ in \eqref{Eqn:Defn_V_BF_PP} with $|H_1|$  \cite{polyanskiy2011scalar}. If we restrict to the second order approximation of the bounds on $R_p^*$ (i.e., discarding the terms whose order is \emph{higher} than $1/\sqrt{n}$ on the RHS of  \eqref{Eqn:Bnd_LB_R_p_star_Th_PP} and \eqref{Eqn:Bnd_UB_R_p_star_Th_PP}), in Section  \ref{SubSec:Num_Egs_Non_EH}, we will numerically evaluate the upper bound and show that the bound could be significantly lower than that predicted by the capacity expression in \eqref{Eqn:Cap_BlockFadeCha}, depending on the value of system parameters. Similarly, we will also show that the rate predicted by the second order approximation of the lower bound in \eqref{Eqn:Bnd_LB_R_p_star_Th_PP} obtained by means of power control, can be higher than that without power control in \cite{polyanskiy2011scalar}. This will also point to the rate enhancement possible, depending on the value of various system parameters, with CSIT in place. However, we also note that second order coefficients in our lower and upper bounds do not match. Hence we also study the proximity of the bounds numerically in \ref{SubSec:Num_Egs_Non_EH}. 
\par Next, we consider the \textbf{AP} constraint case.
\section{Average Power Constrained Transmitter}

\label{Sec:AvgPower_Trans}
The benefit of power control is more pronounced when the system has an average power power constraint rather than a peak power constraint. In this section, we provide lower and upper bounds on the achievable rate $R_a^*$ subject to \textbf{AP} constraint. Towards that, we have the following definition. A power allocation function is a mapping  $\mathcal{P}:\mathbb{C}\mapsto \mathbb{R}_+$ that defines the amount of power expended on channel states. With other notation as in the previous section, we have the following result.
\begin{thm}
\label{Th:AvgPow_Bnds}
Consider the block fading channel described in Section \ref{SubSec:Channel_Model}  with an average power constrained transmitter as in \ref{SubSubSec:AvgCodePow_Trans}. For $n$ large enough, the maximal rate $R^*_a$ is lower bounded as 
\begin{equation}
\label{Eqn:Bnd_LB_R_a_star_Th_AP}
R_a^*\geq  \mathbf{C}(\bar{P})+\sqrt{\frac{V_{\text{BF}}(\bar{P})}{n}}\Phi^{-1}\left(\epsilon\right)+O\left(\frac{\log n}{n}\right).
\end{equation}
Further, if we restrict to the class of power allocation policies with $\mathbb{E}\left[\mathcal{P}^{2+\delta}(H_1)\right]<\infty$, for some $\delta>0$, $R^*_a$ is upper bounded as
\begin{equation}
\label{Eqn:Bnd_UB_R_a_star_Th_AP}
R_a^*\leq  \mathbf{C}(\bar{P})+\sqrt{\frac{V_{\text{BF}}(\bar{P})}{n}}\Phi^{-1}\left(\epsilon\right)+o\left(\frac{1}{\sqrt{n}}\right).
\end{equation}
 \end{thm}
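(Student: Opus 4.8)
The plan is to treat the two inequalities separately, leveraging the fact that the peak-power result (Theorem~\ref{Th:PeakPow_Bnds}) is already in hand and that an average-power constraint is strictly weaker than the corresponding peak-power constraint whenever the realized power profile respects water-filling.

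\emph{Lower bound.} For the achievability direction I would start from the water-filling power allocation $\mathcal{P}_{\text{WF}}(|H_b|)$ and show that it can be used directly as the per-block power with only an \emph{average} constraint, whereas under \textbf{PP} one had to pay the $c_\epsilon/\sqrt n$ penalty to force the instantaneous sum below $Bn_c\bar P$ on every realization. Concretely: generate a codebook of i.i.d. Gaussian-like codewords of unit energy per symbol, then scale the $b$-th block by $\sqrt{\mathcal{P}_{\text{WF}}(|H_b|)}$; since $\mathbb{E}[\mathcal{P}_{\text{WF}}(|H_1|)]=\bar P$ by definition of $\lambda$ in \eqref{Eqn:Cap_BlockFadeCha}, the resulting input obeys \eqref{Eqn:Defn_ACP_Constr} with equality. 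Using the $\beta\beta$ achievability bound with the same decoder as in the \textbf{PP} case, the information density decomposes across blocks into a sum of $B$ i.i.d. (given the fading) terms whose conditional mean is $n_c C(G_b^2)$ and whose variance, after accounting for the cross terms, aggregates to $n V_{\text{BF}}(\bar P)$ — exactly the quantity in \eqref{Eqn:Defn_V_BF_PP}. A Berry--Esseen expansion (fourth moments are finite because $\mathcal{P}_{\text{WF}}$ is bounded by $\lambda$) then yields $R_a^*\ge \mathbf{C}(\bar P)+\sqrt{V_{\text{BF}}(\bar P)/n}\,\Phi^{-1}(\epsilon)+O(\log n/n)$; the absence of the $c_\epsilon/\sqrt n$ term is the whole point of moving from \textbf{PP} to \textbf{AP}.

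\emph{Upper bound.} For the converse I would use the relaxed meta-converse of Lemma~\ref{Lem:EH_BF_MetaConverse}, restricting attention to power allocation policies $\mathcal{P}(\cdot)$ with $\mathbb{E}[\mathcal{P}^{2+\delta}(H_1)]<\infty$ as the statement permits. The auxiliary output distribution is taken to be the capacity-achieving (water-filling) output, so the meta-converse reduces to controlling $\mathbb{P}\big[\sum_{b=1}^{B} i_b \le n\gamma\big]$ where $i_b$ is the per-block information density under the chosen $\mathcal{P}$. One first argues that any admissible $\mathcal{P}$ can do no better, to first and second order, than $\mathcal{P}_{\text{WF}}$: this is the standard "water-filling is second-order optimal under an average constraint" argument, and it is where the $2+\delta$ moment hypothesis is used — it guarantees the remainder terms in a Taylor expansion of $C(\cdot)$ and $\mathcal{L}(\cdot)$ around the water-filling point are uniformly integrable. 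Then, since under $\mathcal{P}_{\text{WF}}$ the summands are i.i.d. across blocks with finite third (indeed fourth) moment, a Berry--Esseen bound on the probability term gives the $\Phi^{-1}(\epsilon)$ characterization with $o(1/\sqrt n)$ remainder. Note that here, unlike the general \textbf{EH} upper bound where the strong-approximation machinery of \cite{csorgo2014strong} is invoked, the i.i.d.-across-blocks structure already present under an average constraint makes a plain Berry--Esseen argument sufficient.

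\emph{Main obstacle.} The delicate step is the converse reduction to water-filling: one must show that no state-dependent power policy — in particular none that "over-invests" in good states to buy a variance reduction — can beat $\mathbf{C}(\bar P)+\sqrt{V_{\text{BF}}(\bar P)/n}\,\Phi^{-1}(\epsilon)$ even at second order. The argument is a constrained optimization over $\mathcal{P}$ of the expansion $\mathbb{E}[n_c C(G^2)] + \sqrt{n V}\,\Phi^{-1}(\epsilon)$, and one needs the $2+\delta$-moment restriction to rule out pathological policies with unbounded power that could otherwise distort the dispersion term. Establishing uniform control of the Berry--Esseen remainder over this admissible class, and checking that the optimizing $\mathcal{P}$ is indeed $\mathcal{P}_{\text{WF}}$ up to $o(1/\sqrt n)$ perturbations, is the technical heart of the proof; the rest is a repackaging of the tools already developed for Theorem~\ref{Th:PeakPow_Bnds}.
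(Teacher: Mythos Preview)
Your lower bound is essentially the paper's: decouple coding from power control, scale by $\sqrt{\mathcal{P}_{\text{WF}}(|H_b|)}$, observe that the average-power constraint is met exactly (so the constraint-violation event $\mathcal{E}_T$ and its attendant $c_\epsilon/\sqrt n$ penalty disappear), and run the $\beta\beta$ analysis unchanged.

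Your converse, however, departs from the paper in two ways, and one of them is a real gap.

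First, the paper does \emph{not} reduce to water-filling by a Taylor expansion / constrained-optimization argument over policies. Instead it uses a pointwise algebraic inequality (Claim~\ref{Claim:L_bound}): because the auxiliary output law is the water-filling one, the factor $|H_b|^2/(\sigma_N^2+|H_b|^2\mathcal{P}_{\text{WF}}(|H_b|))$ multiplying $\|\mathbf{X}'_b\|^2$ in the information density is bounded above by $1/\lambda$ for every state. This replaces $L_b^{(1)}+L_b^{(3)}$ by $L_b^{(5)}+L_b^{(6)}$ with $L_b^{(5)}\propto \mathcal{P}(H_b)$, and the constraint $\mathbb{E}[\mathcal{P}(H_1)]=\bar P$ then enters linearly through the mean of $L_b^{(5)}$. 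There is no ``second-order optimality of water-filling'' step to carry out; the bound is automatic once Claim~\ref{Claim:L_bound} is invoked. Your proposed optimization of $\mathbb{E}[n_c C(\cdot)]+\sqrt{nV}\,\Phi^{-1}(\epsilon)$ over $\mathcal{P}$ does not directly correspond to what the meta-converse delivers, since the auxiliary channel is fixed at water-filling regardless of the input policy, and the information-density mean under a mismatched $\mathcal{P}$ is not simply $n$ times the capacity under $\mathcal{P}$.

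Second, your claim that Berry--Esseen suffices and strong approximation is unnecessary is contrary to the paper and, I believe, to the structure of the problem. After the algebraic bound one still has the cross term $L_b^{(2)}$, which depends on the (arbitrary) codeword $\mathbf{X}'$ and is Gaussian only conditionally on $(\mathbf{X}',\mathbf{H})$; it must be combined with the $\mathbf{H}$-dependent i.i.d. terms $n_cC(G_b^2)$, $L_b^{(5)}$, $L_b^{(6)}$. The paper handles this by strong-approximating each of the latter by a Gaussian on a high-probability event (events $\mathcal{E}_2,\mathcal{E}_4,\mathcal{E}_5$ in Lemma~\ref{Lem:P_I_gamma_LB_AP_UB}), so that the total becomes conditionally Gaussian. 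A bare Berry--Esseen CDF bound does not give this coupling. Relatedly, the role of the moment hypothesis is not to control Taylor remainders but to guarantee a polynomial rate in the KMT-type embedding for the partial sums of $L_b^{(5)}\propto \mathcal{P}(H_b)$; water-filling is bounded by $\lambda$, so only the unknown policy $\mathcal{P}$ forces the $(2+\delta)$-moment assumption.
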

\begin{proof}
See Appendix \ref{App:Proof_AvgPow}.
\end{proof}
\subsection{Discussion of results}
\label{SubSec:AvgPower_Trans}
The above bounds illustrate that water filling power allocation is second order optimal (under a mild additional assumption on the power allocation function) for the case of \textbf{AP} constraint under the finite block length and non-vanishing error probability setting. From the proof in Appendix \ref{App:Proof_AvgPow}, we note that for a block fading channel model with finite number of fading states, the upper bound in \eqref{Eqn:Bnd_UB_R_a_star_Th_AP} holds without the moment restriction. We also note that the capacity achieving water filling power allocation with average power $\bar{P}$ is peak power constrained (see \eqref{Eqn:Cap_BlockFadeCha})  and hence the additional moment constraint is naturally met. In addition, our bounds also establishes that, as in the capacity case, non-causal CSIT is not helpful in improving the rate upto second order (again, under a mild assumption on power allocation function). 
\par We illustrate the rate enhancement possible due to CSIT under \textbf{AP} constraint in Section \ref{SubSec:Num_Egs_Non_EH}.
\section{Energy Harvesting Transmitter}
\label{Sec:EH_Trans}
In this section, our goal is to obtain  tractable lower and upper bounds on the achievable rate $R_e^*$, with codeword length $n$, average probability of error  $\epsilon$, satisfying energy harvesting constraints with average harvested energy $\overline{E}$.
We begin with explaining an encoding-decoding scheme that we use in the proof of our lower bound.  
\subsection{Save and Transmit Scheme}
\label{SubSec:Save_Transmit}
\par Corresponding to each $m\in[1:M]$, the encoder generates an $n$ length codeword $\mathbf{X}'(m)\triangleq \left(X_1'(m),\hdots,X_n'(m)\right)$ of i.i.d. random variables $X_i'(m)$ distributed according to $\mathcal{CN}(0,1)$, where $n=Bn_c$ and $i\in[1:n]$. This encoding scheme is used in conjunction with a \emph{power controller} and a \emph{transmission scheme} called the \emph{Save and Transmit} scheme. The transmission scheme is known to achieve the capacity of a real AWGN channel  \cite{ozel2012achieving}. 
\par Save and Transmit scheme consists of two phases. To transmit a codeword of length $n$, the transmitter does not transmit in the first $N_n$ harvesting slots, where $N_n$ is appropriately chosen. This phase is called the energy saving phase. After the first $N_n$ slots, the buffer contains some energy and the codeword $\mathbf{X}'(S)$ is transmitted in the next $n$ slots, where $S$ is the message selected. This is the \emph{transmission phase}. If  $[b,k]$\textsuperscript{th} symbol to be transmitted requires more energy than what is available i.e., $|{X'}_{[b,k]}|^2> \hat{A}_{[b,k]}$,  we refer to the slot as being in an \emph{energy outage}. The channel output in the transmission phase is $Y_{[b,k]}=H_bX_{[b,k]}'\mathbbm{1}\{\hat{A}_{[b,k]}\geq |{X'}_{[b,k]}|^2\}+Z_{[b,k]}$.
\par   The decoder $\psi:\mathbb{C}^{(n)}\times\mathbb{C}^{(B)} \mapsto [1:M]$ upon receiving $\mathbf{Y}\triangleq \left(Y_{[1,1]},\hdots,Y_{[B,n_c]}\right)$, obtains $\psi(\mathbf{Y},\mathbf{H})\equiv \hat{S}$, an estimate of the message $S$ transmitted. The decoder used in obtaining  our lower bound is  same as that used in obtaining the lower bounds in Theorem \ref{Th:PeakPow_Bnds} and Theorem \ref{Th:AvgPow_Bnds}.
 \par We recall certain prior results that we use in deriving the bounds. The capacity of a fast fading channel (i.e., $n_c=1$) with an energy harvesting transmitter was previously characterized in \cite{rajesh2014capacity} (see Theorem 3 therein). Using the same analysis as in \cite{rajesh2014capacity}, with $\mathbb{E}[\mathcal{P}_{\text{WF}}(|H_1|)]=\overline{E}$, it is easy to see that the capacity of the block fading channel under consideration is given by 
\begin{equation}
\label{Eqn:Cap_EH_BFC}
\mathbf{C}\left(\overline{E}\right)=\mathbb{E}\left[\log \left(1+\frac{|H_1|^2\mathcal{P}_{\text{WF}}(|H_1|)}{\sigma_N^2}\right)\right].
\end{equation}  

\par The following lemma from \cite{shenoy2016finite} will aid in tackling the proof of the  lower bound by decoupling the channel statistics from the analysis of the energy outage event described in Section \ref{SubSec:Save_Transmit}.

\begin{lem}
\label{Lem:EH_Error_Control}
Consider a communication channel with an energy harvesting transmitter. Suppose the input to the channel is generated i.i.d. with mean zero and variance $\overline{E}$, and the energy arrivals are i.i.d. with mean $\overline{E}$ and variance $\sigma_E^2$. Let $\epsilon$ be the allowed probability of error (maximal or average) for the system. Then, given $0<\alpha<1$, there exists a Save and Transmit scheme, with the duration of transmission phase  $n$ slots and the duration of the saving phase  $N_n=K_{\epsilon,\alpha}\sqrt{n}$ slots. Further, with $K_{\epsilon,\alpha}=\sqrt{\frac{4(2\overline{E}^2+\sigma_E^2)}{(1-\alpha)\epsilon\overline{E}^2}}$, the probability of error in decoding $\mathbb{P}\left[\hat{S}\neq S\right]\equiv \epsilon_{n}$  referred to as the {non-energy harvesting error} is bounded as
\begin{equation}
\label{Eqn:Bnd:Non_EH_Error}
	\epsilon_{n} \leq \alpha \epsilon - \frac{4\sigma_E^2}{K_{\epsilon,\alpha}\overline{E}^2\sqrt{n}}.
\end{equation}

\end{lem}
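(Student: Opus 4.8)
The plan is to show that, with a saving phase of $N_n = K_{\epsilon,\alpha}\sqrt{n}$ slots, the probability that the Save and Transmit scheme ever runs out of energy during the $n$-slot transmission phase is so small that the overall average error exceeds the non-energy-harvesting decoding error $\epsilon_n$ (the error of the decoder $\psi$ on the channel $Y=HX'+Z$, i.e.\ with the energy-outage indicator stripped out) by at most $(1-\alpha)\epsilon + 4\sigma_E^2/(K_{\epsilon,\alpha}\overline{E}^2\sqrt{n})$; rearranging then yields the claimed bound. Concretely, let $\mathcal{O}$ be the event that some transmission slot is in energy outage. Unwinding the buffer recursion with $A_{[1,1]}=0$, and using that the energy $E$ harvested in a slot is usable in that same slot, $\mathcal{O}$ is exactly the event that for some transmission slot $j\in[1:n]$ the cumulative energy spent exceeds the cumulative energy harvested, i.e.\ $\sum_{i\le j}|X'_i|^2 > \sum_{i\le N_n+j}E_i$. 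On $\mathcal{O}^c$ the symbols actually put on the channel coincide with the intended codeword symbols, so $\mathbf{Y}$ equals the output of the original channel and $\psi$ makes the same decision; hence $\mathbb{P}[\hat{S}\neq S]\le \epsilon_n + \mathbb{P}[\mathcal{O}]$. (Since the $X'_i$ are generated i.i.d.\ irrespective of the message, $\mathbb{P}[\mathcal{O}]$ does not depend on $S$, so this works for maximal as well as average error.) It therefore suffices to bound $\mathbb{P}[\mathcal{O}]$.

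Next I would recast $\mathcal{O}$ as a first-passage event of a centred random walk. Because the channel input is zero-mean with variance $\overline{E}$, $\mathbb{E}[|X'_i|^2]=\overline{E}=\mathbb{E}[E_i]$, so the increments $|X'_i|^2-E_{N_n+i}$ are i.i.d.\ with mean zero and finite variance, bounded by $2\overline{E}^2+\sigma_E^2$ (using $\mathbb{V}[E_i]=\sigma_E^2$ and $\mathbb{V}[|X'_i|^2]\le 2\overline{E}^2$; the assumed finiteness of $\mathbb{E}[E_{[1,1]}^4]$ is far more than enough). Writing $M_j:=\sum_{i\le j}(|X'_i|^2-E_{N_n+i})$ and $R:=\sum_{i\le N_n}(E_i-\overline{E})$ — independent of $M_\cdot$, mean zero, variance $N_n\sigma_E^2$ — the outage condition at slot $j$ becomes $M_j - R > N_n\overline{E}$, so $\mathbb{P}[\mathcal{O}]=\mathbb{P}[\,\max_{1\le j\le n}(M_j-R) > N_n\overline{E}\,]$. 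The head start $N_n\overline{E}$ on the right is precisely what the saving phase buys; it is what makes this probability small even though $\sum_{i\le n}|X'_i|^2$ has mean $n\overline{E}\gg N_n\overline{E}$, so an endpoint Chebyshev bound is useless and a maximal inequality is essential.

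Splitting the threshold at its midpoint, $\mathbb{P}[\mathcal{O}]\le \mathbb{P}[\max_{j\le n}M_j > N_n\overline{E}/2] + \mathbb{P}[-R > N_n\overline{E}/2]$. Since $M_j$ is a sum of i.i.d.\ mean-zero random variables, Kolmogorov's maximal inequality gives $\mathbb{P}[\max_{j\le n}M_j > N_n\overline{E}/2]\le 4\,\mathbb{V}[M_n]/(N_n\overline{E})^2 \le 4n(2\overline{E}^2+\sigma_E^2)/(N_n\overline{E})^2$, and Chebyshev gives $\mathbb{P}[-R > N_n\overline{E}/2]\le 4N_n\sigma_E^2/(N_n\overline{E})^2 = 4\sigma_E^2/(N_n\overline{E}^2)$. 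Substituting $N_n=K_{\epsilon,\alpha}\sqrt{n}$ with $K_{\epsilon,\alpha}=\sqrt{4(2\overline{E}^2+\sigma_E^2)/((1-\alpha)\epsilon\overline{E}^2)}$, the first term collapses to exactly $4(2\overline{E}^2+\sigma_E^2)/(K_{\epsilon,\alpha}^2\overline{E}^2)=(1-\alpha)\epsilon$ and the second becomes $4\sigma_E^2/(K_{\epsilon,\alpha}\overline{E}^2\sqrt{n})$. Hence $\mathbb{P}[\mathcal{O}]\le (1-\alpha)\epsilon + 4\sigma_E^2/(K_{\epsilon,\alpha}\overline{E}^2\sqrt{n})$, and with $\mathbb{P}[\hat{S}\neq S]\le \epsilon_n + \mathbb{P}[\mathcal{O}]\le \epsilon$ this forces $\epsilon_n \le \alpha\epsilon - 4\sigma_E^2/(K_{\epsilon,\alpha}\overline{E}^2\sqrt{n})$, which is positive, and hence non-vacuous, once $n$ is large enough.

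The main obstacle is the middle step: one has to write the energy-outage event as a genuine first-passage event for a sum of \emph{independent, zero-mean} summands — keeping the walk centred, peeling off the i.i.d.\ head start $R$ harvested during the saving phase, and only then invoking a maximal (rather than pointwise) inequality — and then balance the two deviation thresholds so that the $\sqrt{n}$-scaling of $N_n$ turns the walk term into a \emph{fixed} fraction $(1-\alpha)\epsilon$ of the error budget while the head-start term is the vanishing $O(1/\sqrt{n})$ correction. Once this bookkeeping is set up correctly, the rest is just the constant-chasing forced by the stated choice of $K_{\epsilon,\alpha}$.
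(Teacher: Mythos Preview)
The paper does not prove this lemma; it simply quotes it from \cite{shenoy2016finite}. So there is no in-paper proof to compare against, and your reconstruction is exactly the kind of argument one expects from that reference: decouple the energy-outage event $\mathcal{O}$ from the decoding error, rewrite $\mathcal{O}$ as a boundary-crossing event for the centred random walk $M_j=\sum_{i\le j}(|X'_i|^2-E_{N_n+i})$ minus the independent saved-energy fluctuation $R$, then apply Kolmogorov's maximal inequality to $\max_j M_j$ and Chebyshev to $R$. With $N_n=K_{\epsilon,\alpha}\sqrt n$ the Kolmogorov term becomes the constant $(1-\alpha)\epsilon$ and the Chebyshev term becomes the $O(1/\sqrt n)$ correction, which is precisely the split encoded in the stated $K_{\epsilon,\alpha}$.

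Two small points. First, the inequality $\mathbb{V}[|X'_i|^2]\le 2\overline{E}^2$ that you invoke does not follow from ``i.i.d., mean zero, variance $\overline{E}$'' alone; it is an equality for real $\mathcal N(0,\overline{E})$ inputs (and is even slack for $\mathcal{CN}(0,\overline{E})$), which is the setting in which the lemma is actually applied. The constant $2\overline{E}^2+\sigma_E^2$ in $K_{\epsilon,\alpha}$ confirms that this is the intended assumption, so just state it explicitly. Second, your final sentence ``this forces $\epsilon_n\le\alpha\epsilon-\ldots$'' has the logical direction slightly inverted: the bound on $\mathbb{P}[\mathcal{O}]$ does not \emph{force} anything on $\epsilon_n$; rather, it shows that \emph{budgeting} $\epsilon_n\le\alpha\epsilon-4\sigma_E^2/(K_{\epsilon,\alpha}\overline{E}^2\sqrt n)$ for the non-energy-harvesting decoder suffices to keep the overall error below $\epsilon$. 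That is what the lemma is asserting, and your argument establishes it.
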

\par Next lemma is a version of the meta-converse (derived from \cite{polyanskiy2010channel}, \cite{fong2018achievable}) that is suitably adjusted to incorporate the effects of the stochastic nature of harvested energy and fading gains. We use this result in our proof of the upper bound.
\begin{lem}
\label{Lem:EH_BF_MetaConverse} 
Consider a block fading channel as in Section \ref{SubSec:Channel_Model} with an energy harvesting transmitter as in Section \ref{SubSubSec:EH_Trans}. Let the codewords be generated according to $\mathbb{P}_{\mathbf{X}'|\mathbf{E,H}} \in \mathcal{P}(\mathbf{E,H})$, where $\mathcal{P}(\mathbf{E,H})$ is the set of input distributions that satisfy the energy harvesting constraints described in Section \ref{SubSubSec:EH_Trans}. Then, any code with $M$ codewords of codeword length $n$ and average probability of error $\epsilon$, satisfies
	\begin{equation}
	\label{Eqn:Meta_Converse}
		M \leq \sup_{\mathbb{P}_{\mathbf{X'|E,H}}\in \mathcal{P}(\mathbf{E,H})} \frac{\gamma_n}{\mathbb{P}\left[\log\frac{d\mathbb{P}_{\mathbf{Y|X',H}}}{d\mathbb{Q}_\mathbf{Y|H}} \leq \log \gamma_n\right]-\epsilon} 
	\end{equation}
	for any $\gamma_n>0$ and any auxiliary channel $\mathbb{Q}_{\mathbf{Y|H}}$, whenever the RHS of \eqref{Eqn:Meta_Converse} is non-negative. Here, the probability in the denominator is with respect to the distribution $\mathbb{P}_{\mathbf{E}}\mathbb{P}_{\mathbf{H}}\mathbb{P}_{\mathbf{X'|E,H}}\mathbb{P}_{\mathbf{Y|X',H}}$.
\end{lem}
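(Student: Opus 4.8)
The plan is to obtain Lemma~\ref{Lem:EH_BF_MetaConverse} by adapting the Neyman--Pearson (change-of-measure) proof of the relaxed meta-converse of \cite{polyanskiy2010channel}, carried out conditionally on the fading realization $\mathbf{H}=\mathbf{h}$ and on the realized codeword, and then averaged back over $\mathbf{H}$. The structural observation that makes the energy-harvesting channel behave like an ordinary fading AWGN channel here is the following: since the codeword law is required to lie in $\mathcal{P}(\mathbf{E},\mathbf{H})$, the constraint \eqref{Eqn:Defn_EH_Constr} holds almost surely, so no slot is ever in energy outage, whence $\mathbf{X}=\mathbf{X}'$ and the transition law from the codeword reduces to $\mathbb{P}_{\mathbf{Y}|\mathbf{X}',\mathbf{H}}$ (i.e. $Y_{[b,k]}=H_bX'_{[b,k]}+Z_{[b,k]}$), which is conditionally independent of $\mathbf{E}$ given $(\mathbf{X}',\mathbf{H})$ -- this is precisely why only $\mathbb{P}_{\mathbf{Y}|\mathbf{X}',\mathbf{H}}$, not $\mathbb{P}_{\mathbf{Y}|\mathbf{X}',\mathbf{E},\mathbf{H}}$, appears in \eqref{Eqn:Meta_Converse}. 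So first I would fix an arbitrary code of the prescribed type: stochastic encoders $\mathbb{P}^{(m)}_{\mathbf{X}'|\mathbf{E},\mathbf{H}}\in\mathcal{P}(\mathbf{E},\mathbf{H})$ for $m\in[1:M]$, and a decoder $\psi(\mathbf{Y},\mathbf{H})$ which, for each $\mathbf{h}$, induces decoding regions $D_m(\mathbf{h})=\{\mathbf{y}:\psi(\mathbf{y},\mathbf{h})=m\}$ partitioning the output space. Set $L_n\triangleq\log\frac{d\mathbb{P}_{\mathbf{Y}|\mathbf{X}',\mathbf{H}}}{d\mathbb{Q}_{\mathbf{Y}|\mathbf{H}}}$, viewed as a function of $(\mathbf{X}',\mathbf{Y},\mathbf{H})$.

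Next I would run the core estimate conditionally on $\mathbf{H}=\mathbf{h}$ and on a realized codeword $\mathbf{X}'=\mathbf{x}'$ (so that $\mathbb{Q}_{\mathbf{Y}|\mathbf{H}=\mathbf{h}}$ is a fixed measure). The elementary change-of-measure bound gives, for any measurable $A$,
\[\mathbb{Q}_{\mathbf{Y}|\mathbf{H}=\mathbf{h}}(A)\ \geq\ \gamma_n^{-1}\,\mathbb{P}_{\mathbf{Y}|\mathbf{X}'=\mathbf{x}',\mathbf{H}=\mathbf{h}}\big(A\cap\{L_n\leq\log\gamma_n\}\big),\]
and applying it with $A=D_m(\mathbf{h})$, together with $\mathbb{P}(D_m(\mathbf{h})\cap\{L_n\leq\log\gamma_n\})\geq\mathbb{P}(D_m(\mathbf{h}))-\mathbb{P}(L_n>\log\gamma_n)$, then averaging over $\mathbf{x}'$ and over $\mathbf{E}$ under message $m$, yields a per-message inequality relating $\mathbb{Q}_{\mathbf{Y}|\mathbf{H}=\mathbf{h}}(D_m(\mathbf{h}))$ to the probability of correct decoding of message $m$ and to $\mathbb{P}[L_n>\log\gamma_n\mid S=m,\mathbf{H}=\mathbf{h}]$. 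Summing over $m$ and using that $\{D_m(\mathbf{h})\}_m$ is a partition (so $\sum_m\mathbb{Q}_{\mathbf{Y}|\mathbf{H}=\mathbf{h}}(D_m(\mathbf{h}))\leq1$), and finally taking $\mathbb{E}_{\mathbf{H}}$, I obtain
\[\gamma_n\ \geq\ \sum_{m=1}^{M}\Big(\mathbb{P}[\hat{S}=m\mid S=m]-\mathbb{P}[L_n>\log\gamma_n\mid S=m]\Big),\]
all probabilities being with respect to $\mathbb{P}_{\mathbf{E}}\mathbb{P}_{\mathbf{H}}\mathbb{P}^{(m)}_{\mathbf{X}'|\mathbf{E},\mathbf{H}}\mathbb{P}_{\mathbf{Y}|\mathbf{X}',\mathbf{H}}$.

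The remaining step is routine cleanup. Using $\sum_m\mathbb{P}[\hat{S}=m\mid S=m]=M(1-\epsilon_{\mathrm{avg}})\geq M(1-\epsilon)$ and $\mathbb{P}[L_n>\log\gamma_n\mid S=m]=1-\mathbb{P}[L_n\leq\log\gamma_n\mid S=m]$, the last display becomes $\gamma_n\geq\sum_m\mathbb{P}[L_n\leq\log\gamma_n\mid S=m]-M\epsilon$; since the codeword law for every message lies in $\mathcal{P}(\mathbf{E},\mathbf{H})$, each summand is at least the infimum of $\mathbb{P}[L_n\leq\log\gamma_n]$ over that class, so
\[\gamma_n\ \geq\ \sum_{m=1}^{M}\mathbb{P}[L_n\leq\log\gamma_n\mid S=m]-M\epsilon\ \geq\ M\Big(\inf_{\mathbb{P}_{\mathbf{X}'|\mathbf{E},\mathbf{H}}\in\mathcal{P}(\mathbf{E},\mathbf{H})}\mathbb{P}[L_n\leq\log\gamma_n]-\epsilon\Big).\]
Rearranging -- valid exactly when the bracket is positive, which is the stated non-negativity of the RHS of \eqref{Eqn:Meta_Converse} -- gives $M\leq\gamma_n/\big(\inf_{\mathcal{P}(\mathbf{E},\mathbf{H})}\mathbb{P}[L_n\leq\log\gamma_n]-\epsilon\big)=\sup_{\mathbb{P}_{\mathbf{X}'|\mathbf{E},\mathbf{H}}\in\mathcal{P}(\mathbf{E},\mathbf{H})}\gamma_n/(\mathbb{P}[L_n\leq\log\gamma_n]-\epsilon)$, which is \eqref{Eqn:Meta_Converse}.

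The main obstacle is not any hard estimate but the bookkeeping of which randomness -- the harvested-energy process $\mathbf{E}$, the fading $\mathbf{H}$, the codebook randomization, and any decoder randomization -- is conditioned at each stage. The change-of-measure inequality must be invoked with $\mathbb{Q}_{\mathbf{Y}|\mathbf{H}=\mathbf{h}}$ held fixed, which forces us to condition on $\mathbf{h}$ and on the realized codeword $\mathbf{x}'$ before using it, and only afterwards to re-average over $\mathbf{E}$, then over the codebook, then over $\mathbf{H}$; one must also check that the decoding regions partition the output space \emph{for each} $\mathbf{h}$ (since the decoder sees $\mathbf{H}$), and that the restriction to codeword laws in $\mathcal{P}(\mathbf{E},\mathbf{H})$ -- which rules out energy outages and forces $\mathbf{X}=\mathbf{X}'$ -- is exactly what decouples $\mathbf{E}$ from the information density and reduces the computation to the plain fading-AWGN meta-converse. (That the restriction to such codeword laws is itself without loss of optimality for bounding $R_e^*$ is a separate reduction, established as in \cite{fong2018achievable} and invoked in the appendix where this lemma is applied.)
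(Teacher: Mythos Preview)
Your proof is correct and is precisely the standard relaxed meta-converse argument from \cite{polyanskiy2010channel}, \cite{fong2018achievable} instantiated in the present setting; the paper does not give its own proof of this lemma but simply cites those two references, so your write-up supplies exactly the details the paper leaves implicit. The one point worth tightening is the final identification $\gamma_n/(\inf_{\mathcal{P}(\mathbf{E},\mathbf{H})}\mathbb{P}[L_n\le\log\gamma_n]-\epsilon)=\sup_{\mathcal{P}(\mathbf{E},\mathbf{H})}\gamma_n/(\mathbb{P}[L_n\le\log\gamma_n]-\epsilon)$: this holds cleanly when the infimum exceeds $\epsilon$, and when it does not the right-hand side of \eqref{Eqn:Meta_Converse} is $+\infty$ and the bound is vacuous, which is consistent with the lemma's ``whenever the RHS is non-negative'' proviso.
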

\par Now, we use the above lemmas to obtain the following result. 
\begin{thm}
\label{Thm:R_BF_LBUB_FBL_EH}
Consider the channel in Section \ref{SubSec:Channel_Model} and energy harvesting model in Section \ref{SubSubSec:EH_Trans}. With $\mathbf{C}\left(\overline{E}\right)$ as in \eqref{Eqn:Cap_EH_BFC}, $K_{\epsilon,\alpha}$ as in \eqref{Eqn:Bnd:Non_EH_Error}, $
V_{EF}'\left(\overline{E}\right)\triangleq \mathbb{E}\big[\mathcal{L}\big(G_1^2\big)\big]+n_c\mathbb{V} \big[C\big(G_1^2\big)\big]+\mathbb{V}\big[\mathcal{L}\big(G_1^2\big)\big]$ and
\begin{equation}
\label{Eqn:Defn_2nd_OrdCoeff2_LB_EH}
V_{\epsilon,\alpha}\left(\overline{E}\right) \triangleq \sqrt{V_{EF}'\left(\overline{E}\right)}\Phi^{-1}\left(\alpha\epsilon\right)-K_{\epsilon, \alpha}\mathbf{C}\left(\overline{E}\right).
\end{equation}
Then, the maximal coding rate $R_e^*$ satisfies  the following bounds.
\begin{enumerate}
\item For any $0 < \alpha < 1$ and sufficiently large $n$, 
\begin{equation}
\label{Eqn:R_BF_LB_FBL_EH}
R_e^*\geq \mathbf{C}\left(\overline{E}\right)+{V_{\epsilon,\alpha}\left(\overline{E}\right)}/{\sqrt{n}} + o(1/\sqrt{n}).
\end{equation}
\item  For sufficiently large $n$, with 
$V_1(x)\triangleq \mathcal{L}^2(x)+{\sigma_E^2}/{\lambda^2}$, for $x\in\mathbb{R}_+$, $\lambda$ as in \eqref{Eqn:Cap_EH_BFC} and $
V_{\text{EF}}''(\overline{E})\triangleq \mathbb{E}\big[V_1\big(G_1^2\big)\big]+n_c\mathbb{V} \big[C\big(G_1^2\big)\big]+\mathbb{V}\big[\mathcal{L}\big(G_1^2\big)\big],$
\begin{equation}
\label{Eqn:R_BF_UB_FBL_EH}
R_e^*\leq \mathbf{C}\left(\overline{E}\right)+\sqrt{V_{\text{EF}}''(\overline{E})/n}\Phi^{-1}(\epsilon) + o(1/\sqrt{n}).
\end{equation}
\end{enumerate}
\end{thm}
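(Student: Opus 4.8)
The plan is to establish the two bounds separately, using the Save-and-Transmit reduction together with Lemma \ref{Lem:EH_Error_Control} for achievability, and the relaxed meta-converse of Lemma \ref{Lem:EH_BF_MetaConverse} for the converse. For the lower bound \eqref{Eqn:R_BF_LB_FBL_EH}, I would partition the $n$ available slots into a saving phase of $N_n = \lceil K_{\epsilon,\alpha}\sqrt{n}\rceil$ slots, in which nothing is transmitted, and a transmission phase of $\tilde n := n - N_n$ slots, over which the i.i.d.\ $\mathcal{CN}(0,1)$ codebook of Section \ref{SubSec:Save_Transmit}, power-controlled block-wise by $\sqrt{\mathcal{P}_{\text{WF}}(|H_b|)}$ (so that the mean per-symbol power is $\mathbb{E}[\mathcal{P}_{\text{WF}}(|H_1|)] = \overline{E}$), is sent with the $\beta\beta$ decoder. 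Since the codebook power matches the mean harvested energy, Lemma \ref{Lem:EH_Error_Control} applies and shows that, at the cost of an $o(1/\sqrt n)$ inflation of the error budget, the transmission phase may be analysed as if energy were never in outage, i.e.\ as a genuine block fading channel of blocklength $\tilde n$ with allowed error $\alpha\epsilon$. Running on this channel the same information-spectrum / Berry--Esseen argument that underlies the lower bounds of Theorems \ref{Th:PeakPow_Bnds} and \ref{Th:AvgPow_Bnds} then yields $\log M_e^* \ge \tilde n\,\mathbf{C}(\overline{E}) + \sqrt{\tilde n\, V_{EF}'(\overline{E})}\,\Phi^{-1}(\alpha\epsilon) + O(\log \tilde n)$, the relevant variance being $V_{EF}'(\overline{E})$ rather than $V_{\text{BF}}(\overline{E})$ because the achievability codebook here is i.i.d.\ Gaussian (and because of its interaction with the water-filling power controller) rather than shell-shaped. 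Dividing by $n$ and using $\tilde n/n = 1 - K_{\epsilon,\alpha}/\sqrt n + o(1/\sqrt n)$ and $\sqrt{\tilde n}/n = (1+o(1))/\sqrt n$ collects the $1/\sqrt n$ terms into $\sqrt{V_{EF}'(\overline{E})}\,\Phi^{-1}(\alpha\epsilon) - K_{\epsilon,\alpha}\mathbf{C}(\overline{E}) = V_{\epsilon,\alpha}(\overline{E})$, which is \eqref{Eqn:R_BF_LB_FBL_EH}.

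For the upper bound \eqref{Eqn:R_BF_UB_FBL_EH} I would invoke Lemma \ref{Lem:EH_BF_MetaConverse} with the auxiliary channel $\mathbb{Q}_{\mathbf{Y|H}}$ taken, conditioned on $\mathbf{H} = \mathbf{h}$, to be the product over blocks of $\mathcal{CN}\big(0,\sigma_N^2 + |h_b|^2\mathcal{P}_{\text{WF}}(|h_b|)\big)^{\otimes n_c}$ --- the output law induced by the water-filling input --- and with $\log\gamma_n = n\mathbf{C}(\overline{E}) + \sqrt{n V_{\text{EF}}''(\overline{E})}\,\Phi^{-1}(\epsilon) + \xi_n$ for a suitable $\xi_n = o(\sqrt n)$. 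It then suffices to show that for every input law in $\mathcal{P}(\mathbf{E,H})$ one has $\mathbb{P}\big[\log\frac{d\mathbb{P}_{\mathbf{Y|X',H}}}{d\mathbb{Q}_{\mathbf{Y|H}}} \le \log\gamma_n\big] \ge \epsilon + 1/\sqrt n$, whence $M_e^* \le \gamma_n\sqrt n$ and $R_e^* \le \mathbf{C}(\overline{E}) + \sqrt{V_{\text{EF}}''(\overline{E})/n}\,\Phi^{-1}(\epsilon) + o(1/\sqrt n)$. Writing $\log\frac{d\mathbb{P}_{\mathbf{Y|X',H}}}{d\mathbb{Q}_{\mathbf{Y|H}}} = \sum_{b=1}^B\sum_{k=1}^{n_c} L_{[b,k]}$ and conditioning on $(\mathbf{E},\mathbf{H})$ and $\mathbf{X}'$, the conditional mean of this sum equals $\sum_b n_c[C(G_b^2) - \mathcal{L}(G_b^2)] + \sum_b |H_b|^2\|\mathbf{X}_b'\|^2\big/\big(\sigma_N^2(1+G_b^2)\big)$ and its conditional variance equals $\sum_b n_c\mathcal{L}^2(G_b^2) + \sum_b 2|H_b|^2\|\mathbf{X}_b'\|^2\big/\big(\sigma_N^2(1+G_b^2)^2\big)$.

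The energy-harvesting constraint \eqref{Eqn:Defn_EH_Constr} forces the partial sums $\sum_{\ell\le b}\|\mathbf{X}_\ell'\|^2$ to lie below $\sum_{\ell\le b}\sum_i E_{[\ell,i]}$; since the latter are partial sums of i.i.d.\ variables with mean $\overline{E}$ and finite fourth moment, a maximal (Kolmogorov/Doob-type) inequality shows that, outside an event of probability $o(1/\sqrt n)$, $\sum_b\|\mathbf{X}_b'\|^2 \le \sum_{b,k}E_{[b,k]} + o(\sqrt n)$. Combined with the concavity characterisation of water-filling --- $\mathcal{P}_{\text{WF}}$ maximises $\mathbb{E}[C(|H_1|^2\mathcal{P}(|H_1|)/\sigma_N^2)]$ subject to $\mathbb{E}[\mathcal{P}(|H_1|)]\le\overline{E}$, with sensitivity $1/\lambda$ --- this caps the conditional mean above by $n\mathbf{C}(\overline{E}) + \tfrac1\lambda\big(\sum_{b,k}E_{[b,k]} - n\overline{E}\big) + o(\sqrt n)$, whose residual fluctuation feeds the $\sigma_E^2/\lambda^2$ term of $V_1$, while the unavoidable $\sum_b n_c\mathcal{L}^2(G_b^2)$ together with the fading fluctuations in $\sum_b n_c C(G_b^2)$ and $\sum_b n_c\mathcal{L}(G_b^2)$ produce the remaining pieces of $V_{\text{EF}}''(\overline{E})$. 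Because the input law may depend adaptively on the causally revealed $(\mathbf{E},\mathbf{H})$, I would not apply Berry--Esseen directly; instead I would couple the (mean-subtracted) partial-sum process $b\mapsto \sum_{\ell\le b}\sum_i L_{[\ell,i]}$ with a Brownian motion via strong approximation of partial sums of i.i.d.\ (resp.\ martingale-difference) random variables, reducing the tail estimate to a uniform Gaussian computation, and discharging the leftover energy-outage/McDiarmid-type concentration events into the $o(1/\sqrt n)$ slack.

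The main obstacle is exactly this last step: controlling $\mathbb{P}\big[\log\frac{d\mathbb{P}_{\mathbf{Y|X',H}}}{d\mathbb{Q}_{\mathbf{Y|H}}} \le \log\gamma_n\big]$ uniformly over the \emph{supremum} over adaptive, EH-feasible input distributions in Lemma \ref{Lem:EH_BF_MetaConverse}, and in particular showing that no feasible input can simultaneously drive the mean of the log-likelihood ratio above $n\mathbf{C}(\overline{E})$ and keep its spread below $\sqrt{n V_{\text{EF}}''(\overline{E})}$ --- this is precisely where the strong-approximation machinery, rather than a classical central limit theorem, is indispensable, since a fixed limiting distribution is unavailable for an adversarially chosen adaptive input. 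A secondary technical point is that the within-block correlation of the scaled Gaussian codeword (the $n_c$ symbols of a block share $H_b$) must be checked not to disrupt the energy-outage analysis behind Lemma \ref{Lem:EH_Error_Control}; this goes through because that analysis only uses block-level independence of the energy used and harvested, not symbol-level independence.
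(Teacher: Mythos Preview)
Your plan is essentially the paper's, and the lower bound sketch matches its proof closely (Save-and-Transmit with $N_n=K_{\epsilon,\alpha}\sqrt n$, water-filling power control on an i.i.d.\ $\mathcal{CN}(0,1)$ codebook, Lemma~\ref{Lem:EH_Error_Control} to cap the non-EH error at $\alpha\epsilon$, then the $\beta\beta$/Berry--Esseen analysis of Appendices~\ref{App:Proof_PeakPow}--\ref{App:Proof_AvgPow}). The upper bound uses the same auxiliary channel and the same choice of $\log\gamma_n$.

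The one place where the paper's route is simpler than yours is precisely the ``main obstacle'' you flag. Rather than handling the supremum over adaptive EH-feasible inputs by a martingale strong approximation, the paper (i) relaxes the causal constraints \eqref{Eqn:Defn_EH_Constr} to the single terminal one $\|\mathbf{X}'\|^2\le\|\mathbf{E}\|_1$ a.s., and (ii) uses the pointwise water-filling identity/inequality $|h|^2\big/\!\big(\sigma_N^2+|h|^2\mathcal{P}_{\text{WF}}(|h|)\big)\le 1/\lambda$ and $\mathcal{L}\big(|h|^2\mathcal{P}_{\text{WF}}(|h|)/\sigma_N^2\big)=\mathcal{P}_{\text{WF}}(|h|)/\lambda$ (Claim~\ref{Claim:L_bound}) to bound the log-likelihood ratio $\mathbf{S}_B(L)$ \emph{almost surely} by a sum $\mathbf{S}_B(L')$ whose block terms depend on $(\mathbf{E}_b,H_b,\mathbf{Z}_b)$ but not on the input (except for the zero-mean cross term $L_b^{(2)}$). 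This eliminates the adaptive-input difficulty entirely and reduces the tail estimate to standard i.i.d.\ strong approximation (KMT) applied separately to $\sum_b C(G_b^2)$, $\sum_{b,k}E_{[b,k]}$, and $\sum_b \mathcal{P}_{\text{WF}}(|H_b|)\|\mathbf{Z}_b\|^2$. Your ``sensitivity $1/\lambda$'' heuristic and the maximal inequality for partial energy sums are replaced by this one-line a.s.\ bound, so no Doob/Kolmogorov step and no martingale coupling are needed. Otherwise, your decomposition of the conditional mean and variance and your identification of where the $\sigma_E^2/\lambda^2$ piece of $V_{\text{EF}}''(\overline{E})$ comes from are correct and match the paper's bookkeeping.
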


\begin{proof}
We provide a sketch of arguments involved in the proof of the lower bound in \eqref{Eqn:R_BF_LB_FBL_EH}. As in the case of \textbf{PP} and \textbf{AP} constraints, we decouple coding and power control. The encoder generates $M$ i.i.d. codewords with the symbols distributed independently $\mathcal{CN}(0,1)$. The codebook is shared with the receiver. The transmitter adopts  the Save and Transmit scheme described in Section \ref{SubSec:Save_Transmit}. Thus, the transmitter waits for first $N_n$ slots where, $N_n$ is fixed as in Lemma \ref{Lem:EH_Error_Control}. After $N_n$ slots, the  transmitter starts sending the codeword corresponding to the message chosen. Codeword symbol  $X_{[b,k]}'$ in slot $[b,k]$ is a complex number with two components $X_{[b,k],1}'$ and $X_{[b,k],2}' $, to be sent across the in-phase and quadrature channels, respectively. Each component is passed through a power controller and the output of the power controller is $X_{[b,k],i}=\sqrt{\mathcal{P}_{\text{WF}}(|H_b|)}X_{[b,k],i}'$, for $i=1,2$. If $X_{[b,k],1}^2+X_{[b,k],2}^2\leq\hat{ A}_{[b,k]}$, both symbols are transmitted, else no transmission happens in that slot. We choose $N_n$ as in Lemma \ref{Lem:EH_Error_Control} so that, the non-energy harvesting error is upper bounded as in \eqref{Eqn:Bnd:Non_EH_Error}. We use this estimate to perform a similar analysis as in  Appendix \ref{App:Proof_PeakPow} and Appendix \ref{App:Proof_AvgPow}, we obtain  \eqref{Eqn:R_BF_LB_FBL_EH}. Proof of \eqref{Eqn:R_BF_UB_FBL_EH} is deferred to the Appendix D.
\end{proof}
\subsection{Discussion of Results}
\label{SubSec:Discuss_Results} 
As mentioned earlier, this is the first attempt at providing a finite blocklength analysis of a fading channel powered by an energy harvesting transmitter. We note that the second order coefficients, i.e., coefficients of $\sqrt{n}$, of lower and upper bounds of the new result do not match. In this regard, we note the unavailability of a matching second order coefficient  even for an AWGN channel with an energy harvesting transmitter  \cite{shenoy2016finite}, \cite{fong2018achievable}. 
\par Note that an energy harvesting wireless transmitter appropriately regulates the transmission power based on its CSI and energy availability.  However, in our proof of the inner bound, we decouple the channel coding from  the power control. Similarly, the energy harvesting mechanism  is also decoupled from the channel coding in  the Save and Transmit scheme. This decoupling suggests that our proof for the full CSIT case can be readily adapted to the case of no CSIT and full CSIR. Specifically, replacing   $G_1^2$ with $|H_1|^2/\sigma_N^2$ in the expressions in  \eqref{Eqn:Defn_2nd_OrdCoeff2_LB_EH} and using it in \eqref{Eqn:R_BF_LB_FBL_EH} yields a valid lower bound for the case of no CSIT and full CSIR. Similarly, the analysis in the Appendix holds for the no CSIT, full CSIR case as well. In fact, the analysis can be considerably simplified owing to the fact that the channel input does not depend on the fading states when CSIT is not available. Thus, replacing   $G_1^2$ with $|H_1|^2/\sigma_N^2$ to compute the terms in the upper bound in \eqref{Eqn:R_BF_UB_FBL_EH} results in a valid upper bound for the no CSIT, full CSIR case.
 \section{Moderate Deviation Bounds}
\label{sec_mod_dev}
In this section, we characterize the optimal rate of convergence, viz., the moderate deviation constant, of average probability of error, to zero when the channel coding rate is converging to the capacity at a rate between $\sqrt{n}$ and $n$. We derive the moderate deviation constant for the block fading channel in Section \ref{Sec:Model_Notation} with transmitters subjected to different kinds of power constraints mentioned therein. Let $p_{e,\text{avg}}^*(n,M_n)$ denote the minimum average probability of error over a given channel at codeword length $n$ and number of messages $M_n$. We begin with the following definition in \cite{polyanskiy2010channelmode}.
\begin{defn}
\label{Defn:Mod_Dev_Prop}
A channel with capacity $C$ is said to satisfy the moderate deviation property with constant $\mu$ if for any sequence of integers $M_n$ such that $\log M_n=nC-na_n$ where, the sequence $\{a_n,n \geq 1\}$ is strictly positive, $a_n\rightarrow 0$, $na_n^2\rightarrow \infty$, we have $\lim\limits_{n\rightarrow \infty}\frac{1}{na_n^2}\log  p_{e,\text{avg}}^*(n,M_n)=-\frac{1}{2\mu}$.
\end{defn}
We have the following result.
\begin{thm}
\label{Thm_mod_dev}
For the block fading channel in Section \ref{SubSec:Channel_Model} with transmitters subjected to \textbf{PP} and \textbf{AP}  constraints, the moderate deviation constant $\mu$ is such that $V_{\text{BF}}'(\bar{P})\leq -1/2\mu \leq V_{\text{BF}}(\bar{P})$, where $V_{\text{BF}}(\bar{P})$ is defined as \eqref{Eqn:Defn_V_BF_PP} and $V_{\text{BF}}'(\bar{P})\triangleq \mathbb{E}\left[\mathcal{L}\left(G_1^2 \right)\right]+n_c\mathbb{V} \left[C\left(G_1^2\right)\right]+\mathbb{V}\left[\mathcal{L}\left(G_1^2\right)\right]$. Also, under \textbf{EH} constraint, $\mu$ is such that $V_{EF}'\left(\overline{E}\right)\leq -1/2\mu \leq V_{EF}''\left(\overline{E}\right)$. 
\end{thm}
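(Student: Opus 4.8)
The plan is to bracket the moderate-deviation constant by running a \emph{moderate-deviation refinement} of the two sides already used for the finite-blocklength theorems: a meta-converse lower bound on $p_{e,\text{avg}}^*(n,M_n)$ and an achievability upper bound, both reduced to a moderate-deviation principle (MDP) for a sum of i.i.d.\ per-block information densities. Fix any sequence with $\log M_n=n\mathbf C-na_n$, $a_n\downarrow0$, $na_n^2\to\infty$ as in Definition \ref{Defn:Mod_Dev_Prop}, and write the codeword-length information density as $i_n=\sum_{b=1}^{B}W_b$, where, conditioned on $H_b$, the term $W_b$ is the information density of the $n_c$ uses of an AWGN channel with SNR $G_b^2$ under the relevant power-controlled input. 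The $\{W_b\}$ are i.i.d.\ with $\mathbb E[W_b]=n_c\mathbb E[C(G_1^2)]=n_c\mathbf C$, hence $\mathbb E[i_n]=n\mathbf C$; moreover the variance of $i_n/\sqrt n$ equals $n_c\mathbb V[C(G_1^2)]+\mathbb V[\mathcal L(G_1^2)]$ plus $\mathbb E[\,\cdot\,]$ of the per-use conditional information variance, and this last quantity is always $\ge\mathcal L(G_1^2)$ (for \emph{any} admissible input) and equals $V(G_1^2)$ for the global power-shell codebook scaled blockwise by $\sqrt{\mathcal P_{\text{WF}}(|H_b|)}$; this is exactly the origin of the gap between $V_{\text{BF}}'(\bar P)$ and $V_{\text{BF}}(\bar P)$, and in the \textbf{EH} case the energy-outage fluctuation inflates the shell variance to $V_1(G_1^2)$.

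The workhorse is the classical i.i.d.\ MDP: if $\{W_b\}$ have mean $m$, variance $v\in(0,\infty)$ and a finite moment generating function in a neighbourhood of $0$, and $b_B\downarrow0$ with $Bb_B^2\to\infty$, then $\tfrac1{Bb_B^2}\log\mathbb P\big[\sum_{b=1}^B(W_b-m)\le -Bb_B\big]\to -\tfrac1{2v}$ (symmetrically for the upper tail). I would apply it with $b_B=n_c a_n$, so $Bb_B^2=n_c\,(na_n^2)\to\infty$ and $b_B\to0$. The hypotheses hold because $W_b$ is a sum of $n_c$ quadratic forms in circularly-symmetric Gaussians (hence sub-exponential) plus $n_cC(G_b^2)$, and $C(G_b^2)\le\big(\log(\lambda|H_b|^2/\sigma_N^2)\big)^+$ has a finite MGF near $0$: $\mathbb E[|H_b|^2]=\sigma_H^2<\infty$ forces $\mathbb P[\log|H_b|>t]\le\sigma_H^2e^{-2t}$; for the \textbf{EH} buffer terms the fourth-moment assumption on the harvested energy plays the analogous role.

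For the converse I would invoke the meta-converse --- the conditional-on-$\mathbf H$ Polyanskiy--Poor--Verd\'u bound for \textbf{PP}/\textbf{AP}, Lemma \ref{Lem:EH_BF_MetaConverse} for \textbf{EH} --- with auxiliary output $\mathbb Q_{\mathbf Y|\mathbf H}=\prod_b\mathcal{CN}(0,\sigma_N^2(1+G_b^2))$ and threshold $\log\gamma_n=\log M_n-\Delta_n$, where $\Delta_n\to\infty$, $na_n^2=o(\Delta_n)$, $\Delta_n=o(na_n)$ (e.g.\ $\Delta_n=na_n^{3/2}$), to get $p_{e,\text{avg}}^*\ge\inf_{\mathbb P_{\mathbf X}}\mathbb P[i_n\le\log\gamma_n]-2^{-\Delta_n}$. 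Since $\log\gamma_n=\mathbb E[i_n]-(na_n+\Delta_n)$ and, uniformly over admissible inputs, $\mathbb V[i_n/\sqrt n]\ge V_{\text{BF}}'(\bar P)$ (resp.\ $\ge V_{EF}'(\overline E)$), the lower-tail MDP gives $\mathbb P[i_n\le\log\gamma_n]\ge\exp\!\big(-\tfrac{na_n^2}{2V_{\text{BF}}'}(1+o(1))\big)$, which dominates $2^{-\Delta_n}$; hence $\liminf\tfrac1{na_n^2}\log p_{e,\text{avg}}^*\ge-\tfrac1{2V_{\text{BF}}'}$. For the achievability I would re-run the arguments behind the lower bounds of Theorems \ref{Th:PeakPow_Bnds}--\ref{Thm:R_BF_LBUB_FBL_EH}, replacing $\epsilon$ by the vanishing target and the $\Phi^{-1}$ term by the moderate-deviation term: with the global power-shell codebook scaled by $\sqrt{\mathcal P_{\text{WF}}(|H_b|)}$ and the $\beta\beta$-style decoder, the error is $\le\mathbb P[i_n\le\log M_n+\Delta_n']+(\text{negligible overhead})$ with $\Delta_n'=o(na_n)$, and the lower-tail MDP with the shell variance $v=V_{\text{BF}}(\bar P)$ yields error $\le\exp\!\big(-\tfrac{na_n^2}{2V_{\text{BF}}}(1+o(1))\big)$, so $\limsup\tfrac1{na_n^2}\log p_{e,\text{avg}}^*\le-\tfrac1{2V_{\text{BF}}}$; for \textbf{EH} the Save-and-Transmit scheme with Lemma \ref{Lem:EH_Error_Control} absorbs the energy-outage event and the (outage-inflated) Gaussian-codebook variance $V_{EF}''(\overline E)$ replaces $V_{\text{BF}}$. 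Combining the two sides, together with $\mathcal L\le V$ (so $V_{\text{BF}}'\le V_{\text{BF}}$), gives the claimed bounds on $\mu$.

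The main obstacle I anticipate is keeping the \emph{overhead} terms at exponent order $o(na_n^2)$ across the \emph{full} moderate-deviation range, where the effective error target $\exp(-\Theta(na_n^2))$ is far smaller than in the fixed-$\epsilon$ regime. For \textbf{PP} one cannot reuse the $c_\epsilon/\sqrt n$ power slack of Theorem \ref{Th:PeakPow_Bnds} since $c_\epsilon$ there is of order $\sqrt{na_n^2}$, costing $\Theta(a_n)$ in rate; the fix is a causal power-truncation argument whose failure event has probability $e^{-\Theta(n)}\ll\exp(-\Theta(na_n^2))$. For \textbf{EH} the energy-saving phase must be long enough that the energy-outage error is $\exp(-\Omega(na_n^2))$ --- a moderate-deviation estimate for the zero-drift buffer walk asks for $N_n=\Omega(na_n)$ --- yet short enough that its rate cost $N_n\mathbf C/n$ is $o(a_n)$; reconciling these tensions (a slowly growing saving phase, treating a controlled number of outage slots as erasures, or restricting the admissible $a_n$-range) is the delicate part of the energy-harvesting case. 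The remaining points --- verifying the MDP hypotheses, the uniform-in-input bound $\mathcal L(G_1^2)\le$ conditional info variance $\le V(G_1^2)$, and that $\Delta_n,\Delta_n'$ can be taken $o(na_n)$ --- are routine given the moment hypotheses and the boundedness of $\mathcal L$.
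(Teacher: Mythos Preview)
Your plan is essentially the paper's: meta-converse for one direction, the $\beta\beta$ achievability scheme (with water-filling power control, and Save-and-Transmit for \textbf{EH}) for the other, both reduced to a moderate-deviation statement for the information density. The paper parametrizes the threshold as $\log\gamma=n\mathbf C(\bar P)-\theta na_n$ with $\theta>1$ (converse) and $0<\theta<1$ (achievability), followed by $\theta\to1$; your $\Delta_n=na_n^{3/2}$ does the same job without the limiting step. The one genuine technical divergence is that you invoke the i.i.d.\ MDP directly, whereas the paper recycles the strong-approximation machinery of Lemmas~\ref{Lem:P_I_gamma_LB_PP_UB} and \ref{Lem:P_I_gamma_LB_AP_UB} to first replace $\mathbb P[\mathcal I_\gamma]$ by the lower tail of a sum of i.i.d.\ standard Gaussians and only then cites Dembo--Zeitouni (Theorem~3.7.1). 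Your route is cleaner, but be careful at one point: in the converse the per-block summands $W_b$ are \emph{not} i.i.d.\ once you take the infimum over admissible inputs (the code fixes $\mathbf x$, so the blocks are merely independent with block-dependent distributions), and a uniform lower bound on the variance alone does not immediately deliver a uniform MDP lower-tail bound with the right constant. The paper sidesteps this by first bounding $\mathbf S_B(L_1)\le\mathbf S_B(L_1')$ via Claim~\ref{Claim:L_bound}, where $L_b'$ is essentially input-free, and then applies the MDP to $\mathbf S_B(L_1')$; you should insert that same step before invoking the MDP. Your obstacle discussion (the $c_\epsilon$ slack under \textbf{PP}, the saving-phase length under \textbf{EH}) correctly identifies real tensions that the paper simply absorbs into ``a similar analysis yields\dots''.
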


\begin{proof}
See Appendix \ref{App:Mod_Deviation}.
\end{proof}

\section{Numerical Examples}
\label{Sec:Numerical_Egs}
\subsection{Non-energy harvesting transmitter}
\label{SubSec:Num_Egs_Non_EH}
In this section, we compare numerically, the bounds obtained for the non-energy harvesting case (Theorem \ref{Th:PeakPow_Bnds} and Theorem \ref{Th:AvgPow_Bnds}). We will compare the bounds upto the second order  term in the rate expression. We assume fading distribution to be $\mathcal{CN}(0,\sigma_H^2)$.  Also, we assume $\sigma_N^2=4$ and $\epsilon=0.05$ and $n_c=10$. By fixing $\bar{P}=5$dB, we plot the convergence of various bounds to the channel capacity $\mathbf{C}(\bar{P})=0.6892$ bits/channel use (equation (\ref{Eqn:Cap_BlockFadeCha})), as $B$ increases, in Figure \ref{Fig:R_2Ord_vs_Pbar}. The acronyms LB, UB and TIC refer to the lower bound, the upper bound and truncated channel inversion, respectively. {We have also plotted the rate with no CSIT under $\mathbf{PP}$ constraint and with the rate corresponding to truncated channel inversion. Truncated channel inversion provides the delay limited capacity for point-to-point fading channels (\cite{hanly1998multiaccess}, Section III.B.1). We observe that power allocation that achieves the delay limited capacity achieves only an inferior rate to that achieved via the water filling power allocation scheme in the finite block length regime.

\begin{figure}
  
  \begin{minipage}[t]{0.45\textwidth}
   \begin{tikzpicture}[scale=0.85]
 
\begin{axis}[title={},xlabel={ Number of blocks, $B$},
    ylabel={Rate (bits per channel use)},xmin=900, xmax=1000,
    ymin=.1, ymax=1.5,
    xtick={900,920,940,960,980,1000},
    ytick={.1,.2,.3,.4,.5,.6,.7,.8,.9,1,1.2,1.3,1.5},
    legend pos=north east,
    xmajorgrids=true,
    ymajorgrids=true,
    grid style=dashed,
    shift={(10pt,0pt)}
]
   \addplot[color=ForestGreen,mark=diamond,thick]   
      table[x index=0,y index=6] {Rate_vs_Bv_nonEH.txt};
      \addlegendentry{\small Capacity}
      
        \addplot[color=Black,mark=Mercedes,thick]   
     table[x index=0,y index=5] {Rate_vs_Bv_nonEH.txt};
     \addlegendentry{\small Rate AP, UB PP}   
    
	 \addplot[color=RedViolet,mark=o,thick]   
     table[x index=0,y index=4] {Rate_vs_Bv_nonEH.txt};
     \addlegendentry{\small LB, PP}

    \addplot[color=MidnightBlue,mark=square,thick]          
      table[x index=0,y index=3] {Rate_vs_Bv_nonEH.txt};
      \addlegendentry{\small No CSIT }
      
      \addplot[color=Orange,mark=triangle,thick]          
      table[x index=0,y index=2] {Rate_vs_Bv_nonEH.txt};
      \addlegendentry{\small TIC, AP }
      \addplot[color=Purple,mark=star,thick]          
      table[x index=0,y index=1] {Rate_vs_Bv_nonEH.txt};
      \addlegendentry{\small TIC, PP }
    
 \end{axis}
\end{tikzpicture}
\caption{\small Comparison of rate versus $B$ for the non energy harvesting case. We fix $n_c=10$, $\bar{P}=5$dB, $\epsilon=0.05$, $\sigma_N^2=4$ and $\sigma_H^2=0.1$. }
\label{Fig:R_2Ord_vs_Pbar}
\end{minipage}
\hspace{1cm}
\begin{minipage}[t]{0.45\textwidth}
 \begin{tikzpicture}[scale=0.85]
 
\begin{axis}[title={},xlabel={ Power, $\bar{P}$},
    ylabel={Rate (bits per channel use)},xmin=1, xmax=15,
    ymin=.1, ymax=2.4,
    xtick={1,2,3,4,5,6,7,8,9,10,11,12,13,14,15},
    ytick={.1,.2,.4,.6,.8,1,1.2,1.4,1.5,1.6,1.7,1.9,2.1,2.4},
    legend pos=north west,
    xmajorgrids=true,
    ymajorgrids=true,
    grid style=dashed,
    shift={(10pt,0pt)}
]
   \addplot[color=ForestGreen,mark=diamond,thick]   
      table[x index=0,y index=6] {Rate_vs_Pbar.txt};
      \addlegendentry{\small Capacity}
    
	 \addplot[color=RedViolet,mark=o,thick]   
     table[x index=0,y index=5] {Rate_vs_Pbar.txt};
     \addlegendentry{\small Rate AP, UB PP}

      \addplot[color=Black,mark=Mercedes,thick]   
     table[x index=0,y index=4] {Rate_vs_Pbar.txt};
     \addlegendentry{\small LB, PP}    
     
    \addplot[color=MidnightBlue,mark=square,thick]          
      table[x index=0,y index=3] {Rate_vs_Pbar.txt};
      \addlegendentry{\small No CSIT}
      \addplot[color=Orange,mark=triangle,thick]          
      table[x index=0,y index=2] {Rate_vs_Pbar.txt};
      \addlegendentry{\small TIC, AP}
      \addplot[color=Purple,mark=star,thick]          
      table[x index=0,y index=1] {Rate_vs_Pbar.txt};
      \addlegendentry{\small TIC, PP}
    
 \end{axis}
\end{tikzpicture}
\caption{\small Comparison of rate versus $\bar{P}$ for the non energy harvesting case. We fix $n_c=10$, $B=1000$, $\epsilon=0.05$, $\sigma_N^2=4$ and $\sigma_H^2=0.1$.}
\label{Fig:R_2Ord_vs_B_nonEH}
\end{minipage}  
\end{figure} 
\begin{figure}
  \centering
 \begin{tikzpicture}[scale=0.85]
 
\begin{axis}[title={},xlabel={ Power, $\bar{P}$},
    ylabel={Rate (bits per channel use)},xmin=1, xmax=15,
    ymin=.1, ymax=4.3,
    xtick={1,2,3,4,5,6,7,8,9,10,11,12,13,14,15},
    ytick={.1,.6,1,1.4,1.7,2.1,2.4,2.8,3.2,3.6,4.3},
    legend pos=north west,
    xmajorgrids=true,
    ymajorgrids=true,
    grid style=dashed,
    shift={(10pt,0pt)}
]
   \addplot[color=ForestGreen,mark=diamond,thick]   
      table[x index=0,y index=6] {Rate_vs_Pbar_SH1SH2.txt};
      \addlegendentry{\small Rate AP, UB PP, $\sigma_H^2=0.4$}
    
	 \addplot[color=RedViolet,mark=o,thick]   
     table[x index=0,y index=5] {Rate_vs_Pbar_SH1SH2.txt};
     \addlegendentry{\small LB, PP, $\sigma_H^2=0.4$ }

      \addplot[color=Black,mark=Mercedes,thick]   
     table[x index=0,y index=4] {Rate_vs_Pbar_SH1SH2.txt};
     \addlegendentry{\small No CSIT, $\sigma_H^2=0.4$ }    
     
    \addplot[color=MidnightBlue,mark=square,thick]          
      table[x index=0,y index=3] {Rate_vs_Pbar_SH1SH2.txt};
      \addlegendentry{\small Rate AP, UB PP, $\sigma_H^2=0.1$}
      \addplot[color=Orange,mark=triangle,thick]          
      table[x index=0,y index=2] {Rate_vs_Pbar_SH1SH2.txt};
      \addlegendentry{\small LB, PP, $\sigma_H^2=0.1$ }
      \addplot[color=Black,mark=pentagon,thick]          
      table[x index=0,y index=1] {Rate_vs_Pbar_SH1SH2.txt};
      \addlegendentry{\small No CSIT, $\sigma_H^2=0.1$ }
    
 \end{axis}
\end{tikzpicture}

\caption{\small Comparison of rate for various values of $\sigma_H$ for non-energy harvesting case. We fix $n_c=10$, $B=1000$ $\epsilon=0.05$ and $\sigma_N^2=4$.}
\label{Fig:R_2Ord_vs_PSH1SH2_nonEH}
\end{figure}
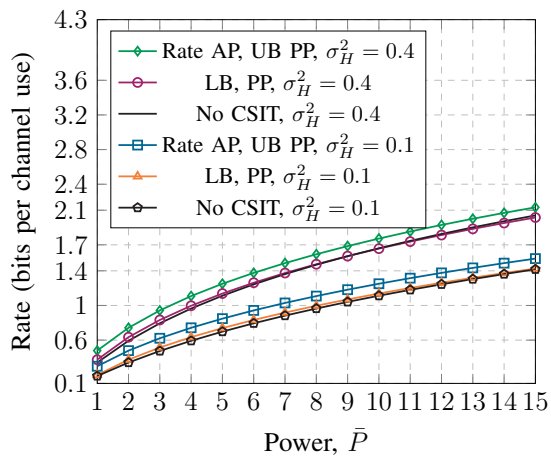  
\par In Figure \ref{Fig:R_2Ord_vs_B_nonEH}, we fix $n_c=10,~B=1000$ and provide  comparison of the various bounds for different values of $\bar{P}$.  By comparing the lower bound under the \textbf{PP} constraint with the rate for no CSIT case under \textbf{PP} constraint, we observe the rate enhancement possible due to the knowledge of CSIT (via power control). In Figure \ref{Fig:R_2Ord_vs_PSH1SH2_nonEH}, we compare the effect of fading parameter $\sigma_H$ on the second order approximation of rates. We observe that our second order approximation of LB under \textbf{PP} constraint provides better gains for lower values of $\sigma_H$ with other parameters fixed.

\subsection{Energy harvesting transmitter}
\label{SubSec:Num_Egs_EH}
In this  section, we compare our lower and upper bounds on the maximal coding rate, numerically. State-of-the-art ambient electric field energy harvesting technology promises a \emph{harvesting energy density} of $17$ micro Joules ($\mu \text{J}$) per unit time and unit volume (see, for instance \cite{akan2018internet} and  Table I therein). We consider $\overline{E}$  in the range $1$ to $17$ $\mu \text{J}$. For simplicity, we normalize and refer to $1 \mu \text{J}$ as a unit. We fix the average probability of error $\epsilon=0.1$ which is typically the targeted probability of error in \emph{massive machine-type transmission} \cite{popovski20185g}. Further, we fix $n_c=20,~B=400$, $\sigma_N^2=0.4$, $\sigma_H^2=0.9$ and $\sigma_E^2=0.1$. We assume fading distribution to be $\mathcal{CN}(0,\sigma_H^2)$. We plot the \emph{second order approximation of the lower and upper bounds} (terms excluding $o(\sqrt{n})$ term on the right hand side of \eqref{Eqn:R_BF_LB_FBL_EH} and \eqref{Eqn:R_BF_UB_FBL_EH} with an additional scaling of $1/n$)  as a function of $\overline{E}$ in Figure \ref{Fig:R_2Ord_vs_Ebar}. In obtaining the plot, we have optimized over the parameter $\alpha$ appearing in \eqref{Eqn:R_BF_LB_FBL_EH}, for $\alpha$ between $0.1~\text{and}~0.9$. For the chosen set of parameters, we observe that the second order approximation of the upper bound differs from that of the lower bound by a maximum of $\approx 15\%$. Also,  the duration of the saving phase $N_n$ (of the Save and Transmit scheme) in this example is approximately $846$ slots, the optimum $\alpha=0.104$ and $K_{\epsilon,\alpha}\approx 9.5 $. Next, we illustrate the convergence of the second order approximation of the bounds to the capacity as the number of blocks $B$ increases, in Figure \ref{Fig:R_2Ord_vs_B}. Here, we have chosen $\overline{E}=17$ units and other parameters are fixed as before. The  bounds in this case differ by $\approx 11\%$.  

 \begin{figure}
  \begin{minipage}[t]{0.45\textwidth}
 \begin{tikzpicture}[scale=0.85]
 
\begin{axis}[title={},xlabel={ Average harvested energy, $\overline{E}$},
    ylabel={Rate (bits per channel use)},xmin=1, xmax=17,
    ymin=1.7, ymax=5.7,
    xtick={1,3,5,7,9,11,13,15,17},
    ytick={1.7,2.3,2.9,3.5,4.1,4.7,5.3,5.7},
    legend pos=south east,
    xmajorgrids=true,
    ymajorgrids=true,
    grid style=dashed,
    shift={(10pt,0pt)}
]
   \addplot[color=ForestGreen,mark=diamond,thick]   
      table[x index=0,y index=3] {Rate_vs_Ebar.txt};
      \addlegendentry{Capacity}
    
	 \addplot[color=RedViolet,mark=o,thick]   
     table[x index=0,y index=2] {Rate_vs_Ebar.txt};
     \addlegendentry{Upper bound}    
    
      \addplot[color=MidnightBlue,mark=square,thick]          
      table[x index=0,y index=1] {Rate_vs_Ebar.txt};
      \addlegendentry{Lower bound}
    
 \end{axis}
\end{tikzpicture}

\caption{\small Comparison of rate versus $\overline{E}$ for the energy harvesting case. We fix $\epsilon=0.1$,  $n_c=20,~B=400$, $\sigma_N^2=0.4$, $\sigma_H^2=0.9$ and $\sigma_E^2=0.1$. }
\label{Fig:R_2Ord_vs_Ebar}
\end{minipage}  
\hspace{1cm}
\begin{minipage}[t]{0.45\textwidth}
 \begin{tikzpicture}[scale=0.85]
\begin{axis}[title={}, xlabel={ Number of blocks, $B$},ylabel={Rate (bits per channel use)}, xmin=450, xmax=550,ymin=4.5, ymax=5.7,xtick={450,470,490,510,530,550},
    ytick={4.5,4.6,4.7,4.8,4.9,5,5.1,5.2,5.3,5.4,5.5,5.6,5.7},
    legend pos= south east,
    xmajorgrids=true,
    ymajorgrids=true,
    grid style=dashed,
    shift={(10pt,0pt)}
]  
      \addplot[color=ForestGreen,mark=diamond,thick]   
      table[x index=0,y index=3] {Rate_vs_Bv.txt};
      \addlegendentry{Capacity}
    
	 \addplot[color=RedViolet,mark=o,thick]   
     table[x index=0,y index=2] {Rate_vs_Bv.txt};
     \addlegendentry{Upper bound}    
    
      \addplot[color=MidnightBlue,mark=square,thick]          
      table[x index=0,y index=1] {Rate_vs_Bv.txt};
      \addlegendentry{Lower bound}

 \end{axis}
\end{tikzpicture}

\caption{\small Comparison of rate versus $B$ for the energy harvesting case. We fix $\epsilon=0.1$,  $n_c=20,\overline{E}=17$, $\sigma_N^2=0.4$, $\sigma_H^2=0.9$ and $\sigma_E^2=0.1$.}
\label{Fig:R_2Ord_vs_B}
\end{minipage}
\end{figure} 

%
%
%
%
%
%
%
%

\section{Conclusion}
\label{Sec:Conclusion}
In this paper, we {have} obtained upper and lower bounds for the maximal coding rate over a block fading channel {with peak, average and energy harvesting} power constraints on the transmitted codeword. The bounds obtained shed light {on} the rate enhancement possible due to the availability of CSIT. The bounds also characterize the performance of water-filling power allocation in the finite block length regime. Further, we have derived moderate deviation bounds. These bounds characterizes the rate of convergence of probability of error to zero when the channel coding rate is allowed to scale with codeword length in a pre-specified way. 


\appendices
\section{Proof of Theorem \ref{Th:PeakPow_Bnds} }
\label{App:Proof_PeakPow}
 \subsection{Lower bound on rate with \textbf{PP} constraint}
 \label{SubSec:LB_PP}
\par  We prove the lower bound for a \emph{real} block fading channel with  gain $G_b=|H_b|\sqrt{\mathcal{P}_{\text{WF}}(|H_b|)}/\sigma_N$ in block $b$. The scheme presented can be readily adapted to the actual model in Section \ref{SubSec:Channel_Model} by \emph{independently coding over in-phase and quadrature channels} to get the lower bounds we use. Let $\mathbf{G}\triangleq (G_1,\hdots,G_B)$ and the additive noise in slot $[b,k]$ be $\tilde{Z}_{[b,k]}$, where $\tilde{Z}_{[b,k]}\sim \mathcal{N}(0,1/2)$.  In a block, the transmitter uses the channel $2n_c$ times ($n_c$ is defined as in Section \ref{SubSec:Channel_Model}). 
\subsubsection{Codebook Generation} 
\label{SubSubSec:CodebookGen_PP}
Given codeword length $n$, fix $\delta_n\in (0,1)$; for our choice of $\delta_n$, see Section \ref{SubSubSec:Analys_ConstrPowContr_PP}). Corresponding to each message $m\in[1:M]$, generate i.i.d., \textbf{PP} constrained codewords uniformly randomly from  $\mathcal{X}_{n}(1-\delta_n)\subset \mathbb{R}^n$ (where, as defined in Section \ref{Sec:Model_Notation}, $\mathcal{X}_{\ell}(q)$ denotes the surface of a sphere in $\mathbb{R}^\ell$ with radius $\sqrt{\ell q}$) and share it with the receiver. The codeword corresponding to message $m$, $\tilde{\mathbf{X}}(m)$ is such that  $\tilde{\mathbf{X}}(m)\perp \mathbf{G}$ for all $m$. For each $m$, the channel input symbol in slot $[b,k]$ is $\tilde{X}_{[b,k]}(m)$.  
\subsubsection{Power Control}
\label{SubSubSec:PowContr_PP}
\par To send message $m$, at the beginning of block $b$, if $G_b=0$ the transmitter does not transmit any symbols in that block and waits to observe $G_{b+1}$ (unless it is the last block). Else, if $G_b>0$, transmitter computes $|H_b|$ from $G_b$ and checks if 
\begin{equation}
\label{Eqn:Constr_Defn_PowContr_PP}
\sum_{\ell=1}^{b}\sum\limits_{k=1}^{2n_c}\tilde{X}_{[\ell,k]}^2(m)\mathcal{P}_{\text{WF}}\left(\vert H_\ell\vert\right) \leq 2Bn_c\bar{P}
\end{equation}
is met. If the constraint is not met, the \textbf{PP} constraint  in \eqref{Eqn:Defn_PCP_Constr} is violated and the  channel input is $0$ for blocks $\ell\in[b:B]$. That is, the transmission is halted. Else, if the constraint is met, the corresponding codeword symbols are transmitted in the block. If the transmitter sends codeword symbols successfully in all $B$ blocks without violating the constraint in \eqref{Eqn:Constr_Defn_PowContr_PP}, the constraint in \eqref{Eqn:Defn_PCP_Constr} is met. Then, the  channel output $\tilde{Y}_{[b,k]}=G_b\tilde{X}_{[b,k]}(m)+\tilde{Z}_{[b,k]}$, $b \in [1:B]$ and $k\in[1:2n_c]$.
\subsubsection{Decoding} 
\label{SubSubSec:Decoding_PP}
The decoder declares $\hat{S}$ as the transmitted message using the \emph{Neyman-Pearson decoder} employed in obtaining the $\beta\beta$ achievability bound in \cite{yang2018beta} (see Theorem 1 therein). We choose the test subject to a \emph{detection probability} $1-\epsilon_n'+\tau$, where we fix $\epsilon_n'\in(0,\epsilon)$  and $\tau\in(0,\epsilon_n')$ (our choices of $\epsilon_n'$ and $\tau$ are provided in Section \ref{SubSubSec:Analys_ConstrPowContr_PP}). Further, the \emph{auxiliary channel} in using the decoder is  $\mathbb{Q}_{\mathbf{\tilde{Y}}\vert \mathbf{G}}=\prod\limits_{b=1}^{B}\prod\limits_{k=1}^{2n_c}\mathcal{N}\left(0,\left(1-\delta_n\right)G_b^2\sigma_N^2+\sigma_N^2\right).$ Then, using the $\beta\beta$ achievability bound, under an average probability of error in decoding  $\epsilon_n'$, 
\begin{equation}
\label{Eqn:Defn_BetaBeta_Bound}
M \geq \frac{\beta_{\tau}\Big(\mathbb{P}_{\mathbf{\tilde{Y}},\mathbf{G}},\mathbb{Q}_{\mathbf{\tilde{Y}},\mathbf{G}}\Big)}{\beta_{1-\epsilon_n'+\tau}\Big(\mathbb{P}_{\mathbf{\tilde{X}}(1),\mathbf{G}}\mathbb{P}_{\mathbf{Y}\mid \mathbf{\tilde{X}}(1),\mathbf{G}},\mathbb{P}_{\mathbf{\tilde{X}}(1),\mathbf{G}}\mathbb{Q}_{\mathbf{\tilde{Y}}\mid \mathbf{G}}\Big)},
\end{equation}
where the $\beta$ notation is as in \cite{polyanskiy2010channel}.
\subsubsection{Analysis of $\beta\beta$ bound in \eqref{Eqn:Defn_BetaBeta_Bound}}
\label{SubSubSec:Analys_BetaBeta_PP}
\par For convenience, denote the inequality in \eqref{Eqn:Defn_BetaBeta_Bound} as $M\geq \beta_\tau\left(\mathbb{P}_1,\mathbb{Q}_1\right)/\beta_{1-\epsilon_n'+\tau}\left(\mathbb{P}_2,\mathbb{Q}_2\right).$  Here, $\mathbb{P}_1\equiv \mathbb{P}_{\mathbf{\tilde{Y}},\mathbf{G}}$ and other probability terms are similarly defined. To obtain a tractable lower bound for the RHS of \eqref{Eqn:Defn_BetaBeta_Bound} we upper bound $\beta_{1-\epsilon_n'+\tau}\left(\mathbb{P}_2,\mathbb{Q}_2\right)$ as
\begin{equation}
\label{Eqn:Ineq_UB_BetaP2Q2_PP}
\beta_{1-\epsilon_n'+\tau}\left(\mathbb{P}_2,\mathbb{Q}_2\right)\stackrel{(a)}{\le} \mathbb{Q}_2\left[ \frac{d\mathbb{P}_2}{d\mathbb{Q}_2} \ge  \gamma_0\right]\stackrel{(b)}{\le}\frac{c_2}{\gamma_0\sqrt{n}}.
\end{equation}
Here,  $\log \gamma_0=\mu_n+\sqrt{\nu_n}\Phi^{-1}\left(\epsilon_n''\right)$ and,  $\mu_n$ and $\nu_n$ are the mean and variance of $\log \frac{d\mathbb{P}_2}{d\mathbb{Q}_2}$ under $\mathbb{P}_2$, respectively. Also,  $\epsilon_n''=\epsilon_n'-\tau-({2c_1}/{\sqrt{n}})$ and $c_1>0$. In addition, $\gamma_0$ is such that $\mathbb{P}_2\left[\frac{d\mathbb{P}_2}{d\mathbb{Q}_2} \ge  \gamma_0\right] \ge 1-\epsilon_n'+\tau$. This follows from applying Berry-Esseen theorem (\cite{feller}, Chapter 16) to $\mathbb{P}_2[ \frac{d\mathbb{P}_2}{d\mathbb{Q}_2}\geq  \gamma_0\given \mathbf{\tilde{X}}(1)=\mathbf{\tilde{x}}(1) ]$ ($c_1$ is the Berry-Esseen correction term). Also, $(b)$ follows from Lemma 20 in \cite{polyanskiy2010channel1} and $c_2>0$. 
\par Next, from \cite{polyanskiy2011scalar} (see equation (59) therein), choosing $\tau$ to be a constant (i.e., not depending on $n$), $\beta_{\tau}\left(\mathbb{P}_1,\mathbb{Q}_1\right)=O\left(1\right)$. Combining this fact with \eqref{Eqn:Defn_BetaBeta_Bound} and \eqref{Eqn:Ineq_UB_BetaP2Q2_PP}, we obtain  
\begin{equation}
\label{Eqn:Ineq_LB1_Rate_PP}
\log M_p^*\geq\mu_n+\sqrt{\nu_n}\Phi^{-1}\left(\epsilon_n''\right)+O\left(\log n\right).
\end{equation}
\par By Taylor's theorem,  $ \mu_n \ge n\mathbf{C}(\bar{P})-n\delta_n'',$ where $\delta_n''=\big(\bar{P}/(2\lambda)\big)\delta_n+\delta_n^2$. Further, from Lemma \ref{Claim:V_BF_monotone}, $nV_{\text{BF}}(\bar{P})\geq \nu_n $. Combing these inequalities with the fact $\Phi^{-1}\left(\epsilon_n''\right)<0$  to lower bound the RHS of \eqref{Eqn:Ineq_LB1_Rate_PP}, we obtain $ R_p^*\geq \mathbf{C}(\bar{P})-\delta_n''+\sqrt{\frac{V_{\text{BF}}(\bar{P})}{n}}\Phi^{-1}\left(\epsilon_n''\right)+O\left(\frac{\log n}{n}\right).$
What remains is a characterization of $\delta_n''$ and $\epsilon_n''$ appearing in the above equation. Next, we do that by analysing the constraint in \eqref{Eqn:Constr_Defn_PowContr_PP}.
\label{SubSubSec:Analys_ConstrPowContr_PP}
\subsubsection{Analysis of the  constraint in  \eqref{Eqn:Constr_Defn_PowContr_PP}}
\label{SubSubSec:Analys_ConstrPowContr_PP}
\par Let $\mathcal{E}_T$ be the event that the constraint in \eqref{Eqn:Constr_Defn_PowContr_PP} is violated for some $b\in[1:B]$. Then,  $$\mathbb{P}\left[\mathcal{E}_T\cup \left\{S\neq\hat{S}\right\}\right]\leq \mathbb{P}\left[\mathcal{E}_T\right]+\epsilon_n'.$$ In order to upper bound $\mathbb{P}\left[\mathcal{E}_T\right]$, we choose $\delta_n$ (mentioned in Section \ref{SubSubSec:CodebookGen_PP}) to be $\left(2\lambda c_\epsilon\right)/\left(\bar{P}\sqrt{n}\right)$,
where for any $\alpha\in(0,1)$, $$c_\epsilon=\sqrt{{(2n_c+2)\log (1/\alpha \epsilon)}/{\log e}}.$$  Then, from Lemma \ref{Claim:P_E_T_UB_PP} we obtain the bound $\mathbb{P}\left[\mathcal{E}_T\right]\leq {\alpha\epsilon}+{c_3}/{\sqrt{n}},$ where $c_3=64(2n_c+3)^4$. Based on this bound, we choose  $\epsilon_n'=(1-\alpha)\epsilon-(c_3/\sqrt{n})$ so that 
$\mathbb{P}[\mathcal{E}_T\cup \{S\neq\hat{S})\}]\leq \epsilon.$ Here, we require $n$ such that $\epsilon_n'>0$. Further, as required in obtaining \eqref{Eqn:Ineq_LB1_Rate_PP}, we choose $\tau=(1-\alpha)\epsilon/2$, i.e., not depending on $n$. Thus, we obtain $\epsilon_n''$ in \eqref{Eqn:Ineq_LB1_Rate_PP} to be $(1-\alpha)\epsilon/2-c_4/\sqrt{n},$ where $c_4=2c_2+c_3$ and $n$ is such that $\epsilon_n''>0$. Also, by virtue of choice of $\delta_n$, $\delta_n''={c_\epsilon}/{\sqrt{n}}+(4\lambda^2c_\epsilon^2)/(\bar{P}^2n)$. Plugging in the expressions for $\epsilon_n''$ and $\delta_n''$ in the lower bound on $R_p^*$, denoting $\epsilon_\alpha=(1-\alpha)\epsilon/2$ and applying Taylor's theorem, we obtain the required result.
  \qed
  \begin{lem}
\label{Claim:V_BF_monotone}
With the notation as in Section \ref{Sec:Model_Notation} and $\alpha\in[0,1]$, let $V_{\text{BF}}(\bar{P},\alpha)\triangleq \mathbb{E}\left[V\left(\alpha G_1^2 \right)\right]+n_c\mathbb{V} \left[C\left(\alpha G_1^2\right)\right]+\mathbb{V}\left[\mathcal{L}\left(\alpha G_1^2\right)\right].$ The function $V_{\text{BF}}(\bar{P},\alpha)$ is monotonically decreasing in $\alpha$.
\end{lem}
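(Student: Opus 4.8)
The plan is to establish the lemma directly, by differentiating $V_{\text{BF}}(\bar{P},\alpha)$ in $\alpha$ and showing the derivative keeps a fixed sign on $[0,1]$, rather than trying to compare the three summands by ad hoc estimates. Write $t\triangleq G_1^2\ge 0$, a random variable with $\mathbb{E}[t]<\infty$ (which holds since $\mathcal{P}_{\text{WF}}(|H_1|)\le\lambda$ and $\mathbb{E}[|H_1|^2]<\infty$). Because each of $V(\alpha t)$, $C(\alpha t)$, $\mathcal{L}(\alpha t)$ is smooth in $\alpha$ with $\alpha$-derivative dominated by an integrable function of $t$, I may differentiate under the expectation. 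For the mean term this gives $\frac{d}{d\alpha}\mathbb{E}[V(\alpha t)]=\mathbb{E}[t\,V'(\alpha t)]$, while for each variance term the chain rule produces a covariance, e.g. $\frac{d}{d\alpha}\mathbb{V}[C(\alpha t)]=2\,\mathrm{Cov}\big(C(\alpha t),\,t\,C'(\alpha t)\big)$, and analogously for $\mathcal{L}$.

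The key step, and the one I expect to carry the whole argument, is to refuse to treat the three summands separately and instead exploit the algebraic identity $V(x)=2\mathcal{L}(x)-\mathcal{L}^2(x)$ (equivalently $V(x)=1-(1+x)^{-2}$). This lets the mean term and the $\mathcal{L}$-variance term fuse: writing $m(\alpha)\triangleq\mathbb{E}[\mathcal{L}(\alpha t)]$,
\begin{equation*}
\mathbb{E}[V(\alpha t)]+\mathbb{V}[\mathcal{L}(\alpha t)]=2\mathbb{E}[\mathcal{L}(\alpha t)]-\mathbb{E}[\mathcal{L}(\alpha t)]^2=1-\big(1-m(\alpha)\big)^2,
\end{equation*}
since the $\mathbb{E}[\mathcal{L}^2(\alpha t)]$ supplied by $V$ cancels exactly against the one inside the variance. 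Hence $V_{\text{BF}}(\bar{P},\alpha)=1-(1-m(\alpha))^2+n_c\,\mathbb{V}[C(\alpha t)]$, a sum of only two pieces.

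It then remains to settle the monotone behaviour of these two pieces. For the fused piece, for each fixed $t>0$ the map $\alpha\mapsto\mathcal{L}(\alpha t)=\alpha t/(1+\alpha t)$ is monotone with values in $[0,1)$, so $m(\alpha)$ is monotone and $m(\alpha)\in[0,1)$; since $m\mapsto 1-(1-m)^2$ is strictly increasing on $[0,1)$, this piece inherits the monotonicity of $m$. For the variance piece, both $C(\alpha t)=\log(1+\alpha t)$ and $t\,C'(\alpha t)=t/\big((1+\alpha t)\ln 2\big)$ are increasing in $t$ (the latter because $t/(1+\alpha t)$ has positive $t$-derivative $1/(1+\alpha t)^2$), so Chebyshev's correlation (FKG) inequality for a single random variable yields $\mathrm{Cov}\big(C(\alpha t),\,t\,C'(\alpha t)\big)\ge 0$, which fixes the sign of $\frac{d}{d\alpha}\mathbb{V}[C(\alpha t)]$. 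Since both summands move in the same direction, so does $V_{\text{BF}}(\bar{P},\alpha)$, giving the asserted monotonicity on $[0,1]$; in particular it supplies the ordering $V_{\text{BF}}(\bar{P},1)\ge V_{\text{BF}}(\bar{P},1-\delta_n)$, i.e. $nV_{\text{BF}}(\bar{P})\ge\nu_n$, invoked just after \eqref{Eqn:Ineq_LB1_Rate_PP}.

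The main obstacle is the sign bookkeeping: taken in isolation, the mean term $\mathbb{E}[V(\alpha t)]$ and the two variance terms do not obviously cooperate, so a naive term-by-term attempt stalls on the $\mathcal{L}$-covariance, whose integrand $t\,\mathcal{L}'(\alpha t)=t/(1+\alpha t)^2$ is \emph{not} monotone in $t$ (it rises then falls at $\alpha t=1$). The identity $V=2\mathcal{L}-\mathcal{L}^2$ is precisely what dissolves this difficulty, collapsing the troublesome $\mathcal{L}$-covariance against the $\mathbb{E}[V]$ contribution and leaving only the scalar $m(\alpha)$ together with the single variance $\mathbb{V}[C(\alpha t)]$, for which the correlation inequality applies cleanly. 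The remaining technical points—justifying differentiation under the expectation and the single-variable correlation inequality—are routine given the finite moment of $G_1^2$.
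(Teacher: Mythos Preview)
Your proof is correct and is essentially the paper's own argument: both fuse $\mathbb{E}[V(\alpha G_1^2)]+\mathbb{V}[\mathcal{L}(\alpha G_1^2)]$ into $1-(1-\mathbb{E}[\mathcal{L}(\alpha G_1^2)])^2$ via $V=2\mathcal{L}-\mathcal{L}^2$, and both handle $\mathbb{V}[C(\alpha G_1^2)]$ by Chebyshev's association inequality (the paper applies it to the pair $\big(C(\alpha G_1^2),\,C(G_1^2)-C(\alpha G_1^2)\big)$ coming from a Taylor remainder, you apply it to the pair $\big(C(\alpha t),\,tC'(\alpha t)\big)$ coming from the $\alpha$-derivative). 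As you already noted, the word ``decreasing'' in the lemma is a slip---what is proved (here and in the paper) and what is used after \eqref{Eqn:Ineq_LB1_Rate_PP} is the \emph{increasing} direction $V_{\text{BF}}(\bar{P},1)\ge V_{\text{BF}}(\bar{P},1-\delta_n)$.
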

 \begin{proof}
Observe that $V_{\text{BF}}(\bar{P},\alpha)$ can be written as $\left[1-\mathbb{E}^2\left[1-\mathcal{L}(\alpha G_1^2)\right]\right]+n_c\mathbb{V}\left[C\left(\alpha G_1^2\right)\right].$ Since $\mathcal{L}(\alpha G_1^2)$ is monotonically increasing in $\alpha$, $$\left[1-\mathbb{E}^2\left[1-\mathcal{L}(G_1^2)\right]\right]\geq \left[1-\mathbb{E}^2\left[1-\mathcal{L}(\alpha G_1^2)\right]\right].$$ Thus, the result follows if we show $\mathbb{V}[C(G_1^2)]\geq \mathbb{V}[C(\alpha  G_1^2)]$. Towards that, first we invoke Taylor's theorem and express $$C\big(G_1^2\big)=C(\alpha G_1^2)+(1-\alpha)({\mathcal{L}(uG_1^2)}/{u}),$$ for some $u\in(\alpha,1).$
 Next, for convenience we denote  $f_1(G_1)=C(\alpha G_1^2)$, $\theta=(1-\alpha)/u$ and $f_2(G_1)=\theta\mathcal{L}(uG_1^2)$ and define $\gamma=\mathbb{E}\left[f_1(G_1)f_2(G_1)\right]- \mathbb{E}\left[f_1(G_1)\right]\mathbb{E}\left[f_2(G_1)\right].$ Then, using Chebyshev's association inequality (\cite{boucheron2013concentration}, Theorem 2.14), we observe $\gamma\geq 0$. Using this observation and the Taylor's expansion for $C(G_1^2)$, $$\mathbb{V}[C(G_1^2)]=\mathbb{V}\left[f_1(G_1)\right]+\mathbb{V}\left[f_2(G_1)\right]+2\gamma \geq \mathbb{V}\big[C(\alpha G_1^2)\big].$$
Combining this with the monotonicity of $\mathcal{L}(\alpha G_1^2)$ in $\alpha$, we obtain the required result.
 \end{proof}

\begin{lem}
\label{Claim:P_E_T_UB_PP}
Fix the channel described in Section \ref{SubSec:Channel_Model}. With $\mathcal{X}_n(\cdot)$ defined as in Section \ref{Sec:Model_Notation}, let $\mathbf{\tilde{X}}$ be uniformly distributed on $\mathcal{X}_n\left(1-\delta_n\right)\subset \mathbb{R}^{n}$, for $\delta_n>0$. Let  $\mathcal{P}_{\text{WF}}(\cdot)$ be the water filling power allocation function with average power $\bar{P}$, as in \eqref{Eqn:Cap_BlockFadeCha}.  Further, the event $\mathcal{E}_T$ corresponding to violating the transmission constraint in \eqref{Eqn:Constr_Defn_PowContr_PP} in some block is $$\mathcal{E}_T = \bigcup_{b=1}^{B}\left\{ \sum\limits_{\ell=1}^{b}||\mathbf{\tilde{X}}_\ell||^2\mathcal{P}_{\text{WF}}\left(|H_\ell|\right) > n\bar{P}\right\},$$ where the notation $\mathbf{\tilde{X}}_\ell$ is as mentioned in Section \ref{Sec:Model_Notation}. Then, with $\lambda$  as in the definition of $\mathcal{P}_{\text{WF}}(\cdot)$,
$$\mathbb{P}\left[\mathcal{E}_T\right]\leq \exp\left(-\frac{n\bar{P}^2\delta_n^2}{8(n_c+1)\lambda^2}\right) +\frac{64(2n_c+3)^4}{\sqrt{n}}.$$
\end{lem}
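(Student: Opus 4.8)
The plan is to collapse the union defining $\mathcal{E}_T$ to a single event, bound that event's probability by conditioning on the codeword and applying a bounded–differences inequality in the fading variables, and finally split on a high–probability event that controls the per–block energies $\|\tilde{\mathbf{X}}_\ell\|^2$.

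First, since $\mathcal{P}_{\text{WF}}(\cdot)\ge 0$ and $\|\tilde{\mathbf{X}}_\ell\|^2\ge 0$, the partial sums $S_b\triangleq \sum_{\ell=1}^{b}\|\tilde{\mathbf{X}}_\ell\|^2\mathcal{P}_{\text{WF}}(|H_\ell|)$ are nondecreasing in $b$, so $\{S_b>n\bar P\}\subseteq\{S_{b+1}>n\bar P\}$ and $\mathcal{E}_T=\big\{\sum_{\ell=1}^{B}\|\tilde{\mathbf{X}}_\ell\|^2\mathcal{P}_{\text{WF}}(|H_\ell|)>n\bar P\big\}$. I would then record the elementary facts $0\le \mathcal{P}_{\text{WF}}(|H_\ell|)\le\lambda$ (immediate from $\mathcal{P}_{\text{WF}}(|h|)=(\lambda-\sigma_N^2/|h|^2)^+$), $\mathbb{E}[\mathcal{P}_{\text{WF}}(|H_1|)]=\bar P$, and $\sum_{\ell=1}^{B}\|\tilde{\mathbf{X}}_\ell\|^2=\|\tilde{\mathbf{X}}\|^2=n(1-\delta_n)$, with each block carrying $n_c$ coordinates so that $B=n/n_c$.

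Next, condition on $\tilde{\mathbf{X}}$, which is independent of $\mathbf H$. The map $\mathbf h\mapsto f(\mathbf h)\triangleq \sum_{\ell=1}^{B}\|\tilde{\mathbf{X}}_\ell\|^2\mathcal{P}_{\text{WF}}(|h_\ell|)$ has bounded differences in the independent coordinates $h_1,\dots,h_B$: altering $h_\ell$ changes $f$ by at most $\lambda\|\tilde{\mathbf{X}}_\ell\|^2$; moreover $\mathbb{E}_{\mathbf H}[f(\mathbf H)\mid\tilde{\mathbf X}]=\bar P\|\tilde{\mathbf X}\|^2=n\bar P(1-\delta_n)$. McDiarmid's inequality then gives, writing $W\triangleq\sum_{\ell=1}^{B}\|\tilde{\mathbf X}_\ell\|^4$,
$$\mathbb{P}[\mathcal{E}_T\mid\tilde{\mathbf X}]=\mathbb{P}\big[\,f(\mathbf H)-\mathbb{E}_{\mathbf H}[f(\mathbf H)\mid\tilde{\mathbf X}]>n\bar P\delta_n\,\big|\,\tilde{\mathbf X}\,\big]\le \exp\Big(-\frac{2n^2\bar P^2\delta_n^2}{\lambda^2 W}\Big).$$
Splitting on a threshold $\tau_n$ gives $\mathbb{P}[\mathcal{E}_T]\le \exp(-2n^2\bar P^2\delta_n^2/(\lambda^2\tau_n))+\mathbb{P}[W>\tau_n]$. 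Here $\|\tilde{\mathbf X}_\ell\|^2\stackrel{D}{=}n(1-\delta_n)V_\ell$ with $V_\ell\sim\mathrm{Beta}(n_c/2,(n-n_c)/2)$, and the standard second and fourth Beta moments give $\mathbb{E}[W]\le n(n_c+2)$ and $\mathbb{V}[\|\tilde{\mathbf X}_1\|^4]\le n_c(n_c+2)(n_c+4)(n_c+6)\le (n_c+3)^4$ (the last step by AM--GM on the four factors). Taking $\tau_n=16n(n_c+1)$ makes the first term exactly $\exp(-n\bar P^2\delta_n^2/(8(n_c+1)\lambda^2))$, and since then $\tau_n-\mathbb{E}[W]=\Theta(n)$, a one–sided Chebyshev bound using $\mathbb{V}[W]=B\,\mathbb{V}[\|\tilde{\mathbf X}_1\|^4]$ shows $\mathbb{P}[W>\tau_n]=O(1/n)$, comfortably within $64(2n_c+3)^4/\sqrt n$.

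The main obstacle is this last step: one must pin down the Beta moments of the per–block energies — keeping careful track of the $\delta_n$–dependent sphere radius and the $1/(n+2),1/(n+4),\dots$ correction factors — and choose $\tau_n$ so that the surviving exponent is \emph{exactly} $n\bar P^2\delta_n^2/(8(n_c+1)\lambda^2)$. This precision matters because, when the lemma is applied, substituting $\delta_n=2\lambda c_\epsilon/(\bar P\sqrt n)$ with $c_\epsilon^2=(2n_c+2)\log(1/\alpha\epsilon)/\log e$ must turn this exponent into $\log(1/\alpha\epsilon)/\log e$, so that $\exp(-\cdot)=\alpha\epsilon$ and the bound collapses to $\mathbb{P}[\mathcal{E}_T]\le\alpha\epsilon+c_3/\sqrt n$; everything else is routine bookkeeping.
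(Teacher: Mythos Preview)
Your overall strategy---collapse the union to the $b=B$ term, condition on $\tilde{\mathbf X}$, apply a bounded-differences inequality in the fading gains, and then control $W=\sum_\ell\|\tilde{\mathbf X}_\ell\|^4$ via a tail bound---matches the paper's proof almost exactly (the paper phrases the second step as Hoeffding rather than McDiarmid, but for a sum of independent bounded terms the two coincide). The departure is in how you control $W$. The paper represents $\tilde{\mathbf X}\stackrel{D}{=}\sqrt{n(1-\delta_n)}\,\hat{\mathbf X}/\|\hat{\mathbf X}\|$ with $\hat{\mathbf X}$ an i.i.d.\ standard Gaussian vector, then applies Chebyshev first to $\|\hat{\mathbf X}\|^2$ (to remove the normalizing denominator) and then to $\sum_\ell\|\hat{\mathbf X}_\ell\|^4$, which is now a genuine i.i.d.\ sum. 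Your route via the marginal Beta law of $\|\tilde{\mathbf X}_\ell\|^2/\|\tilde{\mathbf X}\|^2$ is more direct and spares one Chebyshev step, but it has a small gap: you invoke $\mathbb{V}[W]=B\,\mathbb{V}[\|\tilde{\mathbf X}_1\|^4]$, yet the per-block energies are \emph{not} independent---they sum deterministically to $n(1-\delta_n)$, so $(V_1,\dots,V_B)$ is Dirichlet, not a product of Betas. The fix is easy (for symmetric Dirichlet one checks $\mathrm{Cov}(V_i^2,V_j^2)<0$, so $\mathbb{V}[W]\le B\,\mathbb{V}[\|\tilde{\mathbf X}_1\|^4]$ still holds), but it should be stated; the paper's Gaussian detour is precisely what sidesteps this dependence.
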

\begin{proof}
Let $U_b\triangleq	||\mathbf{\tilde{x}}_b||^2\mathcal{P}_{\text{WF}}\left(|H_\ell|\right)-||\mathbf{\tilde{x}}_b||^2\bar{P}$. Making use of the notation for summation $\mathbf{S}_B(\cdot)$ defined in Section \ref{Sec:Model_Notation}, observe $\mathbb{P}\left[\mathcal{E}_T\Bigiven \mathbf{\tilde{X}}=\mathbf{\tilde{x}}\right]=\mathbb{P}[\mathbf{S}_B(U_1) > n\delta_n\bar{P}].$ Since $\mathcal{P}_{\text{WF}}\left(\cdot\right)$ is such that  $0\leq \mathcal{P}_{\text{WF}}\left(\cdot\right)\leq \lambda$, $U_b$ is a zero mean, bounded random variable whose support set is contained in $\left[-||\mathbf{\tilde{x}}_b||^2\lambda,||\mathbf{\tilde{x}}_b||^2\lambda\right]$. From Hoeffding's inequality (\cite{boucheron2013concentration}, Theorem 2.8), $$\mathbb{P}\left[\mathcal{E}_T \Bigiven \mathbf{\tilde{X}}=\mathbf{\tilde{x}}\right]\leq \exp\left(-{n^2\tilde{\delta_n}^2}/{\mathbf{S}_B(||\mathbf{\tilde{x}}_1||^4)}\right),$$ where $\tilde{\delta}_n=\big((\delta_n\bar{P})/(2\lambda)\big)$. The above inequality yields $$\mathbb{P}\left[\mathcal{E}_T \right]\leq \mathbb{E}\left[\exp\left(-{n^2\tilde{\delta_n}^2}/{S_B(||\mathbf{\tilde{X}}_1||^4)}\right)\right].$$
\par To further bound $\mathbb{P}\left[\mathcal{E}_T \right]$, let $\mathbf{\hat{X}}$ be an i.i.d. $\mathcal{N}(0,1)$ vector. Then, $\mathbf{\tilde{X}}\stackrel{D}{=}\big(\sqrt{n(1-\delta_n)}/||\mathbf{\hat{X}}||\big)\cdot\mathbf{\hat{X}}$. From Chebyshev's inequality $\mathbb{P}\left[||\mathbf{ \hat{X}}||^2 > (n+n^{3/4})\right]\leq (2/\sqrt{n})$,  \begin{equation}
\label{Eqn:Ineq_UB_P_E_T_Claim_PP1}
\mathbb{P}\left[\mathcal{E}_T \right]\leq \mathbb{E}\left[\exp\left(-{n^2\hat{\delta}_n^2}/{\mathbf{S}_B(||\mathbf{\hat{X}}_1||^4)}\right)\right]+{2}/{\sqrt{n}},
\end{equation}
where we denote $\hat{\delta}_n=\big(\tilde{\delta}_n/(1-\delta_n)\big)\left(1-(1/n^{1/4})\right)$. Next, we tackle the expectation term on the RHS of the above equation. Towards that, we apply Chebyshev's inequality to obtain 
\begin{equation}
\label{Eqn:Ineq_UB_P_E_T_Claim_PP3}
\mathbb{P}\left[\mathbf{S}_B\left(||\mathbf{\hat{X}}_1||^4\right) > B\mathbb{E}\left[||\mathbf{\hat{X}}_1||^4\right]+B^{3/4}\right]\leq \frac{32(2n_c+3)^4}{\sqrt{n}}.
\end{equation}
Combining \eqref{Eqn:Ineq_UB_P_E_T_Claim_PP1} and \eqref{Eqn:Ineq_UB_P_E_T_Claim_PP3} with the fact  $\mathbb{E}[||\mathbf{\hat{X}}_1||^4]= 4(n_c+1)n_c$, we obtain the result.
 \end{proof}
 \subsection{Upper bound on rate with \textbf{PP} constraint}
 \label{SubSec:UB_PP}
\par Without loss of optimality (upto second order), we assume that \textbf{PP} constraint in \eqref{Eqn:Defn_PCP_Constr} is satisfied with equality (see for instance, \cite{polyanskiy2010channel1}, Lemma 65). Also, we assume that CSIT is known \emph{non-causally} to the transmitter, as it can only improve the rate. Let $M$ denote the size of an arbitrary codebook $\mathcal{C}$ with codewords of length $n$ satisfying the \textbf{PP} with equality, and average probability of error $\epsilon$ if used over the block fading channel described in Section \ref{Sec:Model_Notation}.  We divide the proof into various steps. Next, we explain those steps. 
\subsubsection{Choice of auxiliary channel} 
\label{SubSubec:Ch_Aux_PP_UB}
In order to invoke the meta converse (\cite{polyanskiy2010channel},  Theorem 26), we choose the auxiliary channel $\mathbb{Q}_{\mathbf{Y}\big\vert\mathbf{H}}=\prod_{b=1}^{B}\prod_{k=1}^{2n_c}\mathcal{C}\mathcal{N}\Big(0,|H_b|^2\mathcal{P}_{\text{WF}}(|H_b|)+\sigma_N^2\Big).$
 Note that, in $[b,k]\textsuperscript{th}$ slot, output $Y_{[bk]}$ depends only on $H_b$  and is independent of the channel input. 
 \subsubsection{Meta-converse bound and its relaxation} 
 \label{SubSubec:UB_via_MetaConv_PP_UB}
Let $\beta$ notation be as in \cite{polyanskiy2010channel}, $\mathbb{P}_{\mathbf{Y}\vert\mathbf{H},\mathbf{X}}$ be the channel in Section \ref{SubSec:Channel_Model} and $\epsilon'$ be the average probability of error if $\mathcal{C}$ is used over $\mathbb{Q}_{\mathbf{Y}\vert\mathbf{H},}$. Invoking meta-converse \cite{polyanskiy2010channel} and Claim \ref{Claim:MetaConv_Fade_SpecAuxCha},$\hspace{20pt}
\beta_{1-\epsilon}(\mathbb{P}_{\mathbf{X},\mathbf{H},\mathbf{Y}},\mathbb{Q}_{\mathbf{X},\mathbf{H},\mathbf{Y}})\leq 1/M.$
We lower bound $ \beta_{1-\epsilon}(\mathbb{P}_{\mathbf{X},\mathbf{H},\mathbf{Y}},\mathbb{Q}_{\mathbf{X},\mathbf{H},\mathbf{Y}})$ (see \cite{polyanskiy2010channel}, equation (102)) as $$\left(\left(\mathbb{P}\left[ \mathcal{I}_\gamma\right]-\epsilon\right)^+\big/\gamma\right)\leq  \beta_{1-\epsilon}(\mathbb{P}_{\mathbf{X},\mathbf{H},\mathbf{Y}},\mathbb{Q}_{\mathbf{X},\mathbf{H},\mathbf{Y}}),$$ where $\gamma$ is an arbitrary positive real number, $\mathcal{I}_\gamma \triangleq \Big\{i(\mathbf{X},\mathbf{H},\mathbf{Y})\leq \log \gamma\Big\}$ 
and $i(\mathbf{X},\mathbf{H},\mathbf{Y}) \triangleq \log \frac{d\mathbb{P}_{\mathbf{X},\mathbf{H}, \mathbf{Y}}}{d\mathbb{Q}_{\mathbf{X},\mathbf{H},\mathbf{Y}}}$.
Considering the relaxation of meta-converse, and optimizing over the set $\mathcal{F}_n^{(p)}$ of input distributions with support on $\mathcal{X}_n(\bar{P})$, $$ \log M_p^* \leq \log \gamma -\inf\limits_{\mathcal{F}_n^{(p)}}\log\Big(\mathbb{P}\big[\mathcal{I}_\gamma \big]-\epsilon\Big)^+.$$
\subsubsection{Lower bounding $\mathbb{P}[\mathcal{I}_\gamma]$ and the final bound}
\label{SubSubec:LB_P_I_gamma_PP_UB}
From Lemma \ref{Lem:P_I_gamma_LB_PP_UB}, for some positive constants $c_{4},~c_{6}$ depending only on the fading distribution and $\sigma_N^2$ and an appropriate choice of $\gamma$, $\mathbb{P}\left[\mathcal{I}_\gamma\right]\geq \epsilon+{c_6}/{n^{c_4}}$. 
Plugging this in the bound on $\log M_p^*$, we obtain $
\log M_p^* \leq \log \gamma +c_4\log n-\log c_4.$ With $V_{\text{BF}}(\bar{P})$ as in \eqref{Eqn:Defn_V_BF_PP}, choose $$\log \gamma=n\mathbf{C}(\bar{P})+\Phi^{-1}(\epsilon)\sqrt{nV_{\text{BF}}(\bar{P})}+o(\sqrt{n}).$$ Plugging this in the bound on $\log M_p^*$, we obtain the final bound in \eqref{Eqn:Bnd_UB_R_p_star_Th_PP}.
 \qed 
 
 \begin{claim}
 \label{Claim:MetaConv_Fade_SpecAuxCha}
For the auxiliary channel defined in Section \ref{SubSubec:Ch_Aux_PP_UB}), given an arbitrary codebook  $\mathcal{C}$ of size $M$, the average probability of decoding error $\epsilon'=1/M$.
 \end{claim}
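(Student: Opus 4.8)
The plan is to compute the average error probability $\epsilon'$ directly, exploiting the fact that under the auxiliary channel $\mathbb{Q}_{\mathbf{Y}|\mathbf{H}}$ chosen in Section~\ref{SubSubec:Ch_Aux_PP_UB}, the output $\mathbf{Y}$ does not depend on the transmitted codeword at all: each $Y_{[b,k]}$ is $\mathcal{CN}\bigl(0,|H_b|^2\mathcal{P}_{\text{WF}}(|H_b|)+\sigma_N^2\bigr)$ regardless of which message $S$ was sent. First I would fix an arbitrary codebook $\mathcal{C}=\{\mathbf{x}(1),\dots,\mathbf{x}(M)\}$ and an arbitrary (possibly randomized) decoder $\psi:\mathbb{C}^{(n)}\times\mathbb{C}^{(B)}\mapsto[1:M]$, recall that the message $S$ is uniform on $[1:M]$, and write the conditional success probability given the fading realization $\mathbf{H}=\mathbf{h}$ as $\tfrac{1}{M}\sum_{m=1}^{M}\mathbb{Q}_{\mathbf{Y}|\mathbf{H}=\mathbf{h}}\bigl[\psi(\mathbf{Y},\mathbf{h})=m\bigr]$.

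The key step is the observation that, because $\mathbb{Q}_{\mathbf{Y}|\mathbf{H}}$ is independent of the input, the events $\{\psi(\mathbf{Y},\mathbf{h})=m\}$ for $m\in[1:M]$ partition the output space up to $\mathbb{Q}$-null sets (or their probabilities sum to at most $1$ if $\psi$ is randomized), so $\sum_{m=1}^{M}\mathbb{Q}_{\mathbf{Y}|\mathbf{H}=\mathbf{h}}\bigl[\psi(\mathbf{Y},\mathbf{h})=m\bigr]\le 1$. Hence the conditional success probability is at most $1/M$, and averaging over $\mathbf{H}$ gives $1-\epsilon'\le 1/M$, i.e. $\epsilon'\ge 1-1/M$. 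For the matching upper bound on $\epsilon'$ (so that equality holds), I would note that the maximum-likelihood / best decoder against $\mathbb{Q}$ can do no better than a blind guess since $\mathbf{Y}\perp S$ under $\mathbb{Q}$; more carefully, by symmetry the decoder that ignores $\mathbf{Y}$ and outputs a fixed message achieves success probability exactly $1/M$, and no decoder beats this because the sum constraint above is tight for this choice. Therefore the average probability of decoding error when $\mathcal{C}$ is used over $\mathbb{Q}_{\mathbf{Y}|\mathbf{H}}$ equals $1-1/M$; here I should double-check the convention in the paper for whether ``$\epsilon'=1/M$'' in the claim statement actually refers to the success probability or whether a typo/convention swap is intended, and phrase the proof to match the quantity the meta-converse in Section~\ref{SubSubec:UB_via_MetaConv_PP_UB} consumes.

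I do not expect a genuine obstacle here; the only subtlety is bookkeeping: ensuring the argument covers randomized decoders (handle via Fubini and the partition-of-unity bound on expected indicator sums), confirming measurability so that the conditional probabilities given $\mathbf{H}=\mathbf{h}$ are well-defined for $\mathbb{P}_{\mathbf{H}}$-a.e.\ $\mathbf{h}$, and being careful that the independence $\mathbf{Y}\perp\mathbf{X}$ under $\mathbb{Q}$ holds \emph{conditionally on} $\mathbf{H}$, which is exactly how $\mathbb{Q}_{\mathbf{Y}|\mathbf{H}}$ was constructed as a product of Gaussians whose variances are functions of $\mathbf{H}$ alone. Once these points are in place, the claim follows in a couple of lines, and it is precisely what is needed to turn the meta-converse inequality $\beta_{1-\epsilon}(\mathbb{P}_{\mathbf{X},\mathbf{H},\mathbf{Y}},\mathbb{Q}_{\mathbf{X},\mathbf{H},\mathbf{Y}})\le\epsilon'$ into the usable bound $\beta_{1-\epsilon}(\mathbb{P}_{\mathbf{X},\mathbf{H},\mathbf{Y}},\mathbb{Q}_{\mathbf{X},\mathbf{H},\mathbf{Y}})\le 1/M$ used at the start of Section~\ref{SubSubec:UB_via_MetaConv_PP_UB}.
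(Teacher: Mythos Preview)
Your approach is correct and essentially identical to the paper's: both exploit that under $\mathbb{Q}_{\mathbf{Y}|\mathbf{H}}$ the output is conditionally independent of the transmitted message, so the success probability is exactly $\tfrac{1}{M}\sum_{m}\mathbb{Q}[\hat S=m]=1/M$ for any (possibly randomized) decoder. Your suspicion about the convention is well-founded: the paper's own one-line computation in the proof of the claim in fact yields the \emph{success} probability equal to $1/M$ (so $\epsilon'=1-1/M$), and it is the resulting bound $\beta_{1-\epsilon}\le 1/M$ that is used downstream in Section~\ref{SubSubec:UB_via_MetaConv_PP_UB}.
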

 \begin{proof}
Let $S$ be the message, $\hat{S}$ the output of the decoder and $\mathcal{S}\triangleq \mathbb{C}^B \times \mathcal{X}_n(\bar{P}) \times  \mathbb{C}^n $. Then
\begin{align*}
\epsilon'&\stackrel{(a)}{=}\sum_{m=1}^{M}\int\limits_{\mathcal{S}}\frac{1}{M}dF_{\mathbf{H}}\mathbb{Q}\big[\mathbf{y}|\mathbf{h}\big]\mathbb{P}\big[\hat{S}=m|\mathbf{h},\mathbf{y}\big]d\mathbf{y}dF_{\mathbf{X}|\mathbf{h}}=\frac{1}{M},
\end{align*} 
where $(a)$ follows by noting $S\perp\mathbf{H}$, definition of $\mathbb{Q}$ channel and the Markov relation  $S-(\mathbf{H},\mathbf{Y})-\hat{S}$.
 \end{proof}

\begin{claim}
\label{Claim:L_bound}
For $h\in \mathbb{C}$ , $\lambda$ and $\mathcal{P}_{\text{WF}}(\cdot)$ as defined in \eqref{Eqn:Cap_BlockFadeCha}, 
${|h|^2}\big/{\left(\sigma_N^2+|h|^2\mathcal{P}_{\text{WF}}(|h|)\right)}\leq{1}/{\lambda}.$ Also, with $\mathcal{L}(\cdot)$ as in Section \ref{Sec:Model_Notation},  $\mathcal{L}\big(|h|^2\mathcal{P}_{\text{WF}}(|h|)\big)=\frac{\mathcal{P}_{\text{WF}}(|h|)}{\lambda}.$
\end{claim}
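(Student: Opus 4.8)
The plan is to prove both assertions by a single case analysis on the sign of $\lambda-\sigma_N^2/|h|^2$, since the water-filling power $\mathcal{P}_{\text{WF}}(|h|)=\big(\lambda-\sigma_N^2/|h|^2\big)^+$ (as in \eqref{Eqn:Cap_BlockFadeCha}) behaves qualitatively differently in the ``active'' and ``inactive'' regimes. The whole content is the algebraic identity $\sigma_N^2+|h|^2\mathcal{P}_{\text{WF}}(|h|)=|h|^2\lambda$, valid exactly on the set where water-filling is active.

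\emph{Active regime.} Suppose $|h|^2\lambda>\sigma_N^2$, so that $\mathcal{P}_{\text{WF}}(|h|)=\lambda-\sigma_N^2/|h|^2>0$. Substituting this expression I would first establish $\sigma_N^2+|h|^2\mathcal{P}_{\text{WF}}(|h|)=\sigma_N^2+|h|^2\lambda-\sigma_N^2=|h|^2\lambda$. Plugging this into the left-hand side of the first claimed bound gives $|h|^2/(|h|^2\lambda)=1/\lambda$, so the inequality holds, in fact with equality. For the second identity I would write $\mathcal{L}$ in the form $\mathcal{L}(y)=y/(1+y)$ from Section \ref{Sec:Model_Notation}, evaluated at the relevant signal-to-noise ratio: the numerator equals $|h|^2\mathcal{P}_{\text{WF}}(|h|)$ and the denominator equals $\sigma_N^2+|h|^2\mathcal{P}_{\text{WF}}(|h|)=|h|^2\lambda$ by the same identity; cancelling the factor $|h|^2$ leaves $\mathcal{P}_{\text{WF}}(|h|)/\lambda$, as claimed.

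\emph{Inactive regime.} Suppose $|h|^2\lambda\le\sigma_N^2$, so $\mathcal{P}_{\text{WF}}(|h|)=0$ (this also subsumes the degenerate case $|h|=0$ via the $(\cdot)^+$ convention). Then the left-hand side of the first bound is simply $|h|^2/\sigma_N^2$, which is at most $1/\lambda$ precisely because $|h|^2\le\sigma_N^2/\lambda$; and both sides of the second identity vanish. Combining the two regimes yields the claim. I do not anticipate any genuine obstacle here, the statement being essentially a bookkeeping consequence of the water-filling formula; the only point requiring a moment of care is to make sure the case split is exhaustive and that the boundary case $|h|^2\lambda=\sigma_N^2$ (where both descriptions agree, $\mathcal{P}_{\text{WF}}(|h|)=0$) is handled, which it is by the inactive regime.
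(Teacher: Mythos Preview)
Your proposal is correct and follows essentially the same route as the paper. The paper handles the first inequality in one stroke via $\mathcal{P}_{\text{WF}}(|h|)\ge \lambda-\sigma_N^2/|h|^2$ (since $(\cdot)^+\ge\cdot$), which after rearrangement gives $\lambda|h|^2\le\sigma_N^2+|h|^2\mathcal{P}_{\text{WF}}(|h|)$ without an explicit case split; for the second identity it uses the indicator $\mathbbm{1}_{\{\lambda|h|^2>\sigma_N^2\}}$, which is exactly your active/inactive dichotomy in different notation.
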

\begin{proof}
Since $\mathcal{P}_{\text{WF}}(|h|)\geq \Big(\lambda-\frac{\sigma_N^2}{|h|^2}\Big)$, we have $\lambda |h|^2 \leq \sigma_N^2 +|h|^2 \mathcal{P}_{\text{WF}}(|h|)$ and the first claim follows. Now, let $\mathcal{A} \triangleq \{\lambda |h|^2 > \sigma_N^2\}$. Observe $\mathcal{P}_{\text{WF}}(|h|)= \mathbbm{1}_{\mathcal{A}} \mathcal{P}_{\text{WF}}(|h|)$. Hence, $$\mathcal{L}\big(|h|^2\mathcal{P}_{\text{WF}}(|h|)\big)=\mathbbm{1}_{\mathcal{A}}{|h|^2\mathcal{P}_{\text{WF}}(h)}\big/{\left(\sigma_N^2+|h|^2\mathcal{P}_{\text{WF}}(|h|)\right)}.$$ However, $\mathbbm{1}_{\mathcal{A}}\frac{|h|^2\mathcal{P}_{\text{WF}}(h)}{\sigma_N^2+|h|^2\mathcal{P}_{\text{WF}}(|h|)}=\mathbbm{1}_{\mathcal{A}}\frac{|h|^2\mathcal{P}_{\text{WF}}(|h|)}{\sigma_N^2+|h|^2\left(\lambda-\left({\sigma_N^2}/{|h|^2}\right)\right)}$ which, is equal to ${\mathcal{P}_{\text{WF}}(|h|)}/{\lambda }.$
\end{proof}

 \begin{lem}
\label{Lem:P_I_gamma_LB_PP_UB}
For $\gamma>0$, let $\mathcal{I}_\gamma$ be defined as in Section \ref{SubSubec:UB_via_MetaConv_PP_UB}, $\epsilon$ be the average probability of error for the channel model in Section \ref{SubSec:Channel_Model}. There exist positive constants $c_{4},~c_{6}$ depending only on the channel parameters such that  $\mathbb{P}\big[\mathcal{I}_\gamma\big]\geq  \epsilon+{c_{6}}/{n^{c_{4}}}.$
\end{lem}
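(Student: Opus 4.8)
The plan is to lower bound $\mathbb{P}[\mathcal{I}_\gamma]$ by analyzing the information density $i(\mathbf{X},\mathbf{H},\mathbf{Y})$ under the true joint law $\mathbb{P}_{\mathbf{X},\mathbf{H},\mathbf{Y}}$ when $\mathbf{X}$ is drawn uniformly from the power shell $\mathcal{X}_n(\bar{P})$ and the auxiliary channel $\mathbb{Q}_{\mathbf{Y}|\mathbf{H}}$ is the product of $\mathcal{CN}(0,|H_b|^2\mathcal{P}_{\text{WF}}(|H_b|)+\sigma_N^2)$ chosen in Section \ref{SubSubec:Ch_Aux_PP_UB}. First I would write $i(\mathbf{x},\mathbf{h},\mathbf{y})$ explicitly: since both $\mathbb{P}_{\mathbf{Y}|\mathbf{X},\mathbf{H}}$ and $\mathbb{Q}_{\mathbf{Y}|\mathbf{H}}$ are products of complex Gaussians, the log-likelihood ratio decomposes as a sum over the $n$ (real $2n$) channel uses of terms that are quadratic in the noise $Z_{[b,k]}$ and the input $X_{[b,k]}$, with coefficients depending only on $|H_b|$ through $\mathcal{P}_{\text{WF}}(|H_b|)$. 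Using Claim \ref{Claim:L_bound} I would simplify the per-block coefficients: the factor $|h_b|^2/(\sigma_N^2+|h_b|^2\mathcal{P}_{\text{WF}}(|h_b|))$ is bounded by $1/\lambda$, and the "mean'' contribution of each block is governed by $C(G_b^2)$ while the fluctuation terms are governed by $V(G_b^2)$, $\mathcal{L}(G_b^2)$ — this is precisely why $V_{\text{BF}}(\bar{P})$ in \eqref{Eqn:Defn_V_BF_PP} has the three-term structure $\mathbb{E}[V(G_1^2)] + n_c\mathbb{V}[C(G_1^2)] + \mathbb{V}[\mathcal{L}(G_1^2)]$.

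Next I would condition on the fading realization $\mathbf{h}$ and on $\mathbf{x}\in\mathcal{X}_n(\bar{P})$, so that $i$ becomes a sum of independent (across slots, given $\mathbf{h}$) bounded-moment random variables; here the power-shell constraint ensures the conditional mean is exactly $\sum_b n_c\, C(\cdot)$ up to lower-order corrections and the conditional variance is controlled. Because the summands have bounded third (indeed fourth) moments — using the moment hypotheses on the fading and the boundedness from Claim \ref{Claim:L_bound} — a Berry–Esseen bound gives, for the choice $\log\gamma = n\mathbf{C}(\bar{P}) + \Phi^{-1}(\epsilon)\sqrt{nV_{\text{BF}}(\bar{P})} + o(\sqrt{n})$, that the conditional probability $\mathbb{P}[\mathcal{I}_\gamma \mid \mathbf{X}=\mathbf{x},\mathbf{H}=\mathbf{h}]$ is at least $\epsilon$ plus a residual of order $n^{-1/2}$ uniformly, modulo the variance mismatch between $nV_{\text{BF}}(\bar{P})$ and the actual conditional variance. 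The mismatch is handled by the monotonicity Lemma \ref{Claim:V_BF_monotone} (so replacing the true second-order variance by $V_{\text{BF}}(\bar{P})$ only costs us in the right direction) together with a coarse concentration argument showing the empirical per-block statistics concentrate around their expectations at rate $n^{-c_4}$ for some $c_4\in(0,1/2]$ — this is where the $n^{-c_4}$ rather than $n^{-1/2}$ appears. Averaging over $\mathbf{h}$ and $\mathbf{x}$ then yields $\mathbb{P}[\mathcal{I}_\gamma]\geq \epsilon + c_6/n^{c_4}$ with $c_6>0$ depending only on the channel parameters.

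The main obstacle I anticipate is the uniformity of the Berry–Esseen step over fading realizations: for atypical $\mathbf{h}$ (e.g. many blocks with $|H_b|$ near the water-filling threshold, where $\mathcal{P}_{\text{WF}}$ is near zero and the per-block information density is degenerate) the conditional variance can be small or the third-absolute-moment-to-variance ratio large, so the Berry–Esseen correction term need not be $O(n^{-1/2})$. I would circumvent this by splitting the expectation over $\mathbf{h}$ into a typical event — on which the number of "good'' blocks is at least a constant fraction of $B$, so the conditional variance is $\Omega(n)$ and the third moment is $O(n)$, giving a genuine $O(n^{-1/2})$ correction — and its complement, which has probability decaying polynomially (or faster) in $n$ by Chebyshev, hence is absorbed into the $n^{-c_4}$ slack. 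A secondary technical point is that the input is on the shell rather than i.i.d. Gaussian; one passes between the two as in the proof of Lemma \ref{Claim:P_E_T_UB_PP} via $\tilde{\mathbf{X}}\stackrel{D}{=}(\sqrt{n\bar{P}}/\|\hat{\mathbf{X}}\|)\cdot\hat{\mathbf{X}}$ and a Chebyshev bound on $\|\hat{\mathbf{X}}\|^2$, which again only contributes a term of order $n^{-1/2}$ to be absorbed.
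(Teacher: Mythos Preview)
Your plan diverges from the paper's in two substantive ways, and one of the divergences hides a genuine gap.

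First, a correctness issue. You open by taking $\mathbf{X}$ uniform on the power shell; but this is a \emph{converse} step, so the input law is adversarial. The relaxed meta-converse in Section~\ref{SubSubec:UB_via_MetaConv_PP_UB} reads $\log M_p^*\le \log\gamma-\inf_{\mathcal{F}_n^{(p)}}\log(\mathbb{P}[\mathcal{I}_\gamma]-\epsilon)^+$, so you must lower bound $\mathbb{P}[\mathcal{I}_\gamma]$ for \emph{every} input distribution supported on $\mathcal{X}_n(\bar P)$, not just the uniform one. Your later conditioning on $\mathbf{x}$ is the right instinct, but then the ``averaging over $\mathbf{x}$'' at the end only delivers the lemma if the conditional bound is uniform in $\mathbf{x}$; nothing in your sketch establishes that uniformity. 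In particular, the conditional mean of $i(\mathbf{x},\mathbf{h},\mathbf{Y})$ contains the term $L_b^{(1)}=\sum_i H_{b,i}^2\|\mathbf{x}_{b,i}\|^2/[\sigma_N^2(1+G_b^2)]$, which is not ``$\sum_b n_c C(G_b^2)$ up to lower-order corrections'' for a generic $\mathbf{x}$ on the shell --- an adversary can pile power on blocks with large $|H_b|$. The paper removes this $\mathbf{x}$-dependence \emph{before} any CLT-type step: using Claim~\ref{Claim:L_bound} together with the equality \textbf{PP} constraint gives $\sum_b L_b^{(1)}\le n\bar P/\lambda$, whence $\mathbf{S}_B(L_1)\le \mathbf{S}_B(L_1')$ with $L_b'$ depending on $\mathbf{x}$ only through the zero-mean cross term $L_b^{(2)}$. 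Without this replacement the Berry--Esseen route cannot be made uniform in $\mathbf{x}$, and your invocation of Lemma~\ref{Claim:V_BF_monotone} does not help here --- that lemma concerns monotonicity of $V_{\text{BF}}(\bar P,\alpha)$ in the scaling $\alpha$ and is used only in the achievability direction.

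Second, on method. The paper deliberately avoids conditional Berry--Esseen. After the replacement $\mathbf{S}_B(L_1)\le \mathbf{S}_B(L_1')$, it couples the two $\mathbf{x}$-free partial sums $\sum_b n_c C(G_b^2)$ and $\sum_b L_b^{(4)}$ to Gaussian partial sums via \emph{strong approximation} (KMT), producing events $\mathcal{E}_2,\mathcal{E}_3$ of probability $1-O(B^{-c})$. This yields the $\Phi(\cdot)$ lower bound in one stroke with a polynomial remainder, and the only remaining $\mathbf{x}$-dependence sits in the variance-like quantity $\nu_2(\mathbf{x},\mathbf{h})$. That is then handled by an explicit minimisation over $\mathbf{x}\in\mathcal{X}_n(\bar P)$ --- equivalently over empirical power allocations --- showing the minimiser is bounded by $\lambda$, after which McDiarmid's inequality concentrates $\nu_2$ around its mean. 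Your typical/atypical split on $\mathbf{h}$ plus conditional Berry--Esseen could in principle be pushed through once you insert the missing replacement step, but it is more delicate (you must control the Lyapunov ratio uniformly in $\mathbf{x}$ on the typical set) and is exactly the route the paper says it abandoned in favour of strong approximation.
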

\begin{proof}
With $\mathbf{Y}$ being the output vector corresponding to the original channel $\mathbb{P}_{\mathbf{Y|X',H}}$, $\mathcal{I}_\gamma$ in Section \ref{SubSubec:UB_via_MetaConv_PP_UB} can be equivalently written as $\mathcal{I}_\gamma=\left\{\log\frac{d\mathbb{P}_{\mathbf{Y|X',H}}}{d\mathbb{Q}_\mathbf{Y|H}} \leq \log \gamma\right\}$. To analyse $\mathbb{P}[\mathcal{I}_\gamma]$, first we consider a random variable with the same distribution as the log likelihood ratio in the definition of $\mathcal{I}_\gamma$. Towards that, we define $$L_b^{(1)}= \sum_{i=1}^{2}\frac{H_{b,i}^2||\mathbf{X'}_{b,i}||^2}{\sigma_N^2\left(1+G_b^2\right)},~L_b^{(2)}=\sum_{i=1}^{2}\frac{H_{b,i}\langle\mathbf{X'}_{b,i},\mathbf{Z}_{b,i}\rangle}{\sigma_N^2\left(1+G_b^2\right)},$$  $$L_b^{(3)}=-\sum_{i=1}^{2}\frac{G_b^2||\mathbf{Z}_{b,i}||^2}{\sigma_N^4\left(1+G_b^2\right)},~L_b= {n_c}C(G_b^2)+\sum_{\ell=1}^3L_b^{(\ell)}.$$   Using the notation from Section \ref{Sec:Model_Notation}, define $\mathbf{S}_B(L_1)=\sum_{b=1}^BL_b$ and $\mathcal{I}_\gamma'=\{\mathbf{S}_B(L_1)\leq \log \gamma_n\}$.  Then, by direct verification, we have $\mathbf{S}_B(L_1)\stackrel{(D)}{=}\log\frac{d\mathbb{P}_{\mathbf{Y|X',H}}}{d\mathbb{Q}_\mathbf{Y|H}},$
Using this, we obtain $\mathbb{P}[\mathcal{I}_\gamma]=\mathbb{P}[\mathcal{I}_\gamma']$.
  Next, we lower bound $\mathbb{P}[\mathcal{I}_\gamma']$.  Define $$ L_b^{(4)}=\sum_{i=1}^{2}{\left(n_c\bar{P}\sigma_N^2-\mathcal{P}_{\text{WF}}(|H_b|)||\mathbf{Z}_{b,i}||^2\right)}/{\lambda \sigma_N^2 }.$$
   Denoting  $L_b' =  {n_c} C(G_b^2)+L_b^{(2)}+L_b^{(4)}$ and  $\mathbf{S}_B(L_1')=\sum_{b=1}^BL_b'$, define $\mathcal{I}_\gamma''=\{\mathbf{S}_B(L_1')\leq \log \gamma\}$. Using Claim \ref{Claim:L_bound} in Appendix \ref{App:Proof_PeakPow}, we obtain $\mathbf{S}_B(L_1)\leq \mathbf{S}_B(L_1')$. From this inequality, we have
$\mathbb{P}[\mathcal{I}_\gamma']\geq\mathbb{P}[\mathcal{I}_\gamma''].
$  
 \par Next, we intersect the event $\mathcal{I}_\gamma''$ with certain \emph{high probability} events so as to further  lower bound $\mathbb{
 P}[\mathcal{I}_\gamma'']$ appropriately. Towards defining the events, consider mutually independent collection of  i.i.d. $\mathcal{N}(0,1)$ random variables $\{U_b^{(i)},~b\in[1:B]\}$, for  $i\in\{1,2\}$. Also, recall the definition of the notation $C(\mathbf{G}^2)$ from Section \ref{Sec:Model_Notation} and denote $W_B^{(1)}=\left(||C(\mathbf{G}^2)||_1-Bn_c\mathbf{C}(\bar{P})\right)/ \sqrt{\mathbb{V}\left[n_cC(G_1^2)\right]}.$ Then, for some positive constant $c_1$ (depending on the parameters fixed in this analysis), define the event $
\mathcal{E}_2= \left\{W_B^{(1)}\leq \sum_{b=1}^BU_b^{(1)}+2c_1\log B \right \}.$ 
  Further, let $W_B^{(2)}=\sum_{b=1}^B(L_b^{(4)}/\sqrt{\mathbb{V}[L_1^{(4)}]}).$ Then, for $c_2>0$ (depending on the other parameters), let
$$\mathcal{E}_3=\left \{W_B^{(2)}\leq \sum_{b=1}^BU_b^{(2)}+2c_2\log B\right \}.$$
   
 In order to show that the event $\mathcal{E}_2 \cap \mathcal{E}_3$ is indeed a high probability event, we make use of the strong approximation principle of partial sum of i.i.d. random variables. Towards that, since $\mathbb{E}\left[|H_1|^2\right]<\infty$, observe that there exists $t_1>0$ such that $\mathbb{E}[\exp(tC(G_1^2))]<\infty$, for  $|t|\leq t_1$. Similarly, since $\mathcal{P}_{\text{WF}}(\cdot)\leq \lambda$, there exists $t_2>0$ such that  $\mathbb{E}[\exp(tL_1^{(4)})]<\infty$, for  $|t|\leq t_2$. Hence, using strong approximation of partial sum of i.i.d. random variables (\cite{csorgo2014strong}, Theorem 2.6.2), there exist positive constants $c_3,~c_4$ such that $\mathbb{P}\left[\mathcal{E}_2^c\cup \mathcal{E}_3^c\right]\leq {c_3}/{B^{c_4}}.$
\par Now, denote $\nu_1=\mathbb{V}\left[n_cC(G_1^2)\right]+\mathbb{V}[L_1^{(4)}].$ Also, let $$\nu_2\left(\mathbf{X'},\mathbf{H}\right)\equiv\nu_2=\frac{1}{B}\sum\limits_{b=1}^B\sum\limits_{i=1}^{2}\frac{2H_{b,i}^2||\mathbf{X'}_{b,i}||^2}{\sigma_N^2\left(1+G_b^2\right)^2}.$$ Finally, let $c_5=\max\{c_1,c_2\}$, where $c_1$ and $c_2$ are as in the definitions of $\mathcal{E}_2$ and $\mathcal{E}_3$.  Then, using the definition of events $\mathcal{E}_1$ and $\mathcal{E}_2$, the strong approximation bound mentioned above, and the fact that sum of independent standard normal random variables is a Gaussian random variable with appropriate mean and variance, we lower bound $\mathbb{P}\left[\mathcal{I}_\gamma''\right]$ as
\begin{equation}
\label{Eqn:Bnd_PI_gam_dprime_LB_SA_Lem_PP_UB}
\mathbb{P}\left[\mathcal{I}_\gamma''\right]\geq \mathbb{E}\left[\Phi\left(\frac{\log \gamma-Bn_c\mathbf{C}(\bar{P})-c_5\log B}{\sqrt{B\left(\nu_1+\nu_2\left(\mathbf{X'},\mathbf{H}\right)\right)}}\right)\right]- \frac{c_3}{B^{c_4}}.
\end{equation}	
 
Next, conditioned on $\mathbf{H}=\mathbf{h},~\mathbf{X'}=\mathbf{x'}$, we further lower bound $\nu_2$ appearing on the RHS of \eqref{Eqn:Bnd_PI_gam_dprime_LB_SA_Lem_PP_UB}  by optimizing over choice of $\mathbf{x'}\in\mathcal{X}_n(\bar{P})$. Towards that, we note that 
\begin{equation}
\label{Eqn:Opt_equality_Lem_PP_UB}
\min_{\mathbf{x'}\in\mathcal{X}_n(\bar{P})}\nu_2=\min_{\mathscr{P}\in\mathcal{P}_B(\bar{P})}\mathbb{E}_{V_1,V_2}^B\left[\sum\limits_{i=1}^{2}\frac{2V_{i}^2\mathscr{P}(V_i)}{\sigma_N^2\left(1+f(V_1,V_2)^2\right)}\right],
\end{equation}
where $V_1,V_2$ are independent random variables with common distribution being the \emph{empirical distribution} of $\mathbf{h}$, $\mathbb{E}_{V_1,V_2}^B\left[\cdot\right]$ denotes the expectation with respect to the \emph{empirical distribution} of $\mathbf{h}$, $f(V_1,V_2)=(V_1^2+V_2^2)\sqrt{\mathcal{P}_{\text{WF}}(V_1^2+V_2^2)}/\sigma_N$   and $$\mathcal{P}_B(\bar{P})\triangleq \{\mathscr{P}:\mathbb{R}\mapsto \mathbb{R}_+,\mathbb{E}_{V_1}^B[\mathscr{P}(V_1)]=\bar{P}]\}.$$ Next, we note that $\mathscr{P}$ that attains the above minimum has the property $\mathscr{P}^*(h)\leq \lambda,$ for $h\in\{h_{b,i},b\in[1:B],i=1,2\}$. This is justified by noting that for $\mathscr{P}^*$ that attains the above lower bound, replacing $\mathscr{P}^*$ with  $\tilde{\mathscr{P}}(h)=\mathscr{P}^*(h)$ if  $h> 1/\sqrt{\lambda},~ \mathcal{P}_{\text{WF}}(h)\geq \mathscr{P}^*(h)$, and  $\tilde{\mathscr{P}}(h)=\mathcal{P}_{\text{WF}}(h)$, else. With this,we further lower bound $\nu_2$ if we choose $\log \gamma<Bn_c\mathbf{C}(\bar{P})+c_5\log B.$ We will ensure that the choice of $\gamma$ is subject to this constraint. Using $\mathscr{P}^*(h)\leq \lambda$, next we apply McDiarmid's inequality (\cite{boucheron2013concentration}, Theorem 6.2) to replace $\nu_2$ with $\mathbb{E}[\nu_2]-\delta_B'$, for some $\delta_B'=o(\sqrt{B})$ to further lower bound the RHS of \eqref{Eqn:Bnd_PI_gam_dprime_LB_SA_Lem_PP_UB}. Thus, we obtain $\mathbb{P}\left[\mathcal{I}_\gamma''\right]\geq \epsilon + c_6/n^{c_4}$ 
by choosing $\log \gamma = Bn_c\mathbf{C}(\bar{P})+\sqrt{V_\text{BF}(\bar{P})}\Phi^{-1}(\epsilon+\sqrt{n})+c_5\log B$ and $c_6>0$. Here, $V_\text{BF}(\bar{P})$ is as in \eqref{Eqn:Defn_V_BF_PP} and is obtained as a lower bound to $\nu_1+\mathbb{E}[\nu_2]$. Also, condition on $\gamma$ is ensured as $\epsilon<1/2$. This concludes the proof. 

 \end{proof}

 \section{Proof of Theorem \ref{Th:AvgPow_Bnds} }
\label{App:Proof_AvgPow}
\subsection{Lower bound on rate with \textbf{AP} constraint}
 \label{SubSec:LB_AP}
\par The proof is similar to that of the lower bound with \textbf{PP} constraint in Section \ref{SubSec:LB_PP}. With the same setting as therein and decoupling coding and  power control policy, it is immediate to verify that $\mathbb{E}_{\mathbf{H}}\left[\sum_{b=1}^{B}\sum\limits_{k=1}^{2n_c}\tilde{X}_{[b,k]}^2(m)\mathcal{P}_{\text{WF}}\left(\vert H_b\vert\right)\right] = 2Bn_c\bar{P}$
is satisfied for all $m$. Hence, using the same analysis as in the \textbf{PP} case (excluding the analysis of \textbf{PP} constraint violation event), $R_a^*\geq \mathbf{C}(\bar{P})+\sqrt{\frac{V_{\text{BF}}(\bar{P})}{n}}\Phi^{-1}\left(\epsilon\right)+O\left(\frac{\log n}{n}\right).$ 
 \qed
 \subsection*{Upper bound on rate with \textbf{AP} constraint }
\par The proof follows exactly along the same lines of the proof in Section \ref{SubSec:UB_PP} till lower bounding the event $\mathbb{P}[\mathcal{I}_\gamma]$ in Section \ref{SubSubec:LB_P_I_gamma_PP_UB}. In case of \textbf{AP} case, we make use of Lemma \ref{Lem:P_I_gamma_LB_AP_UB} to lower bound this probability term.  From Lemma \ref{Lem:P_I_gamma_LB_AP_UB}, for some positive constants $c_{9},~c_{10}$ and an appropriate choice of $\gamma$, $\mathbb{P}\left[\mathcal{I}_\gamma\right]\geq \epsilon+{c_9}/{n^{c_{10}}}.$ Using this estimate on the relaxed meta converse as in \textbf{PP} case, $\log M_a^* \leq \log \gamma +c_{10}\log n-\log c_9.$ With $V_{\text{BF}}(\bar{P})$ as in \eqref{Eqn:Defn_V_BF_PP}, choose $\log \gamma=n\mathbf{C}(\bar{P})+\Phi^{-1}(\epsilon)\sqrt{nV_{\text{BF}}(\bar{P})}+o(\sqrt{n})$. Plugging this in the above bound, $R_a^* \leq \mathbf{C}(\bar{P})+\Phi^{-1}(\epsilon)\sqrt{\frac{V_{\text{BF}}(\bar{P})}{n}}+o\left(\frac{1}{\sqrt{n}}\right).$
 \qed 
  \begin{lem}
\label{Lem:P_I_gamma_LB_AP_UB}
Let $\mathcal{I}_\gamma$ be defined as in Section \ref{SubSubec:UB_via_MetaConv_PP_UB} and $\epsilon$ be the average probability of error for the channel in Section \eqref{SubSec:Channel_Model}. There exists positive constants $c_9>0$ and $c_{10}>0$ such that  $\mathbb{P}\big[\mathcal{I}_\gamma\big]\geq  \epsilon+c_9/n^{c_{10}}.$
\end{lem}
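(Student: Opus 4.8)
The plan is to follow the proof of Lemma~\ref{Lem:P_I_gamma_LB_PP_UB} almost verbatim, isolating the single place where the pathwise (\textbf{PP}) energy constraint was used. First I would record that the auxiliary channel of Section~\ref{SubSubec:Ch_Aux_PP_UB}, the information density $i(\mathbf{X'},\mathbf{H},\mathbf{Y})$, and its distributional identity with $\mathbf{S}_B(L_1)=\sum_{b=1}^B L_b$ (where $L_b=n_c C(G_b^2)+\sum_{\ell=1}^3 L_b^{(\ell)}$) are insensitive to the type of power constraint, so $\mathbb{P}[\mathcal{I}_\gamma]=\mathbb{P}[\mathbf{S}_B(L_1)\le\log\gamma]$ exactly as before. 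Applying Claim~\ref{Claim:L_bound} term by term then gives $\mathbf{S}_B(L_1)\le\sum_b\big(n_c C(G_b^2)+L_b^{(2)}\big)+\|\mathbf{X'}\|^2/\lambda-\sum_b\mathcal{P}_{\text{WF}}(|H_b|)\|\mathbf{Z}_b\|^2/(\lambda\sigma_N^2)$; under \textbf{PP} the middle term is the deterministic constant $n\bar P/\lambda$, whereas under \textbf{AP} one only has $\mathbb{E}[\|\mathbf{X'}\|^2]\le n\bar P$, so $\|\mathbf{X'}\|^2$ must be carried as a random quantity.

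The new ingredient is that, since we restrict to power-allocation policies, $\|\mathbf{X'}\|^2$ is a fixed constant times the i.i.d.\ partial sum $\sum_b\mathcal{P}(H_b)$, with $\mathbb{E}[\mathcal{P}(H_1)]\le\bar P$ and $\mathbb{E}[\mathcal{P}^{2+\delta}(H_1)]<\infty$ by hypothesis. Carrying out the rest of the \textbf{PP} argument --- the Berry-Esseen step conditional on $(\mathbf{X'},\mathbf{H})$, the strong-approximation events $\mathcal{E}_2,\mathcal{E}_3$ for the $\mathbf{H}$-fluctuations of $\sum_b n_cC(G_b^2)$ and of $\sum_b\mathcal{P}_{\text{WF}}(|H_b|)\|\mathbf{Z}_b\|^2$ (whose moment generating functions near the origin are finite because $\mathbb{E}[|H_1|^2]<\infty$ and $\mathcal{P}_{\text{WF}}\le\lambda$), together with a concentration estimate for the extra partial sum $\sum_b\mathcal{P}(H_b)$ --- reduces $\mathbb{P}[\mathcal{I}_\gamma]$ to $\mathbb{E}\big[\Phi\big((\log\gamma-m)/\sqrt{v}\big)\big]$ up to an additive $O(n^{-c})$ error, where the mean $m\le Bn_c\mathbf{C}(\bar P)$ by the average power constraint --- here $\mathbb{E}[\mathcal{L}(G_1^2)]=\bar P/\lambda$ is what cancels the $\|\mathbf{X'}\|^2/\lambda$ contribution in expectation --- and the variance $v\ge nV_{\text{BF}}(\bar P)$ uniformly over admissible policies, a comparison in the spirit of Lemma~\ref{Claim:V_BF_monotone}. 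The hypothesis $\mathbb{E}[\mathcal{P}^{2+\delta}(H_1)]<\infty$ is used here in two ways: it is the Lyapunov condition making the Berry-Esseen correction polynomially small, and it keeps $v$ finite. Choosing $\log\gamma=Bn_c\mathbf{C}(\bar P)+\sqrt{V_{\text{BF}}(\bar P)}\,\Phi^{-1}(\epsilon+n^{-1/2})+c_5\log B$ then yields $\mathbb{P}[\mathcal{I}_\gamma]\ge\epsilon+c_9/n^{c_{10}}$ for suitable positive constants $c_9,c_{10}$.

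The main obstacle is precisely this loss of a pathwise constraint: because $\|\mathbf{X'}\|^2$ is random under \textbf{AP}, the clean deterministic substitution available under \textbf{PP} disappears, and one must instead (a) run the Gaussian/Berry-Esseen approximation for sums with \emph{unbounded} summands, which is what forces $\mathbb{E}[\mathcal{P}^{2+\delta}(H_1)]<\infty$, and (b) verify, \emph{uniformly over all admissible power-allocation policies}, that the approximating Gaussian has mean at most $Bn_c\mathbf{C}(\bar P)$ and variance at least $nV_{\text{BF}}(\bar P)$ --- the former from the average power constraint and Claim~\ref{Claim:L_bound}, the latter from the variance comparison behind Lemma~\ref{Claim:V_BF_monotone}. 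When the fading takes finitely many values $\mathcal{P}$ is automatically bounded, so step (a) is immediate and the moment hypothesis is superfluous, which is the remark recorded after Theorem~\ref{Th:AvgPow_Bnds}.
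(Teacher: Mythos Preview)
Your plan is essentially the paper's own: rerun Lemma~\ref{Lem:P_I_gamma_LB_PP_UB}, replacing the deterministic $n\bar P/\lambda$ coming from the \textbf{PP} constraint by the random i.i.d.\ sum $n_c\sum_b\mathcal P(H_b)/\lambda$, and use the $(2+\delta)$-moment hypothesis to control that extra sum. The one substantive difference is the tool you invoke for this last step. The paper does \emph{not} use a bare concentration bound or Berry--Esseen; it splits $L_b^{(4)}$ into $L_b^{(5)}\propto\mathcal P(H_b)$ and $L_b^{(6)}\propto-\mathcal P_{\text{WF}}(|H_b|)\|\mathbf Z_b\|^2$ and introduces a \emph{third} strong-approximation event $\mathcal E_4$ for the partial sums of $L_b^{(5)}$, appealing to a moment-based KMT theorem (finite $(2+\delta)$-moment in lieu of an MGF) so that all three sums are simultaneously coupled to independent Gaussians and can be merged into one normal. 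A mere concentration estimate would only give $\sum_b\mathcal P(H_b)=B\bar P+O_p(\sqrt B)$, which is the same order as the second-order term and would contaminate the coefficient; a Berry--Esseen bound gives the right Gaussian shape but no coupling with which to combine it with $\mathcal E_2,\mathcal E_3$. Two smaller remarks: conditional on $(\mathbf X',\mathbf H)$ the term $L_b^{(2)}$ is exactly Gaussian (linear in $\mathbf Z$), so there is no ``Berry--Esseen step'' there; and the variance lower bound $v\ge nV_{\text{BF}}(\bar P)$ is not obtained via Lemma~\ref{Claim:V_BF_monotone} but through the same $\nu_2$-optimisation and McDiarmid step as in Lemma~\ref{Lem:P_I_gamma_LB_PP_UB}.
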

\begin{proof}
In proving the lemma, we make use of the method and notation in Lemma \ref{Lem:P_I_gamma_LB_PP_UB}. With \textbf{AP} constraint as well, using the arguments therein, it is easy to observe $\mathbb{P}[\mathcal{I}_\gamma]=\mathbb{P}[\mathcal{I}_\gamma'].$
  Next, to lower bound $\mathbb{P}[\mathcal{I}_\gamma']$, define $
  L_b^{(5)}=\sum_{i=1}^{2}\frac{n_c\mathcal{P}(H_b)\sigma_N^2}{\lambda \sigma_N^2 },$ where $\mathbb{E}[\mathcal{P}(H_1)]=\bar{P}$. Also, define
  $  L_b^{(6)}=\sum_{i=1}^{2}\frac{-\mathcal{P}_{\text{WF}}(|H_b|)||\mathbf{Z}_{b,i}||^2}{\lambda \sigma_N^2 }.$  Denoting  $\tilde{L}_b =  {n_c} C(G_b^2)+L_b^{(2)}+L_b^{(5)}+L_b^{(6)}$ and  $S_B(\widetilde{L})=\sum_{b=1}^B\widetilde{L}_b$, define $\widetilde{\mathcal{I}}_\gamma=\{S_B(\tilde{L})\leq \log \gamma\}$. Using Claim \ref{Claim:L_bound} in Appendix \ref{App:Proof_PeakPow},  we obtain $S_B(L)\leq S_B(\widetilde{L})$. From this inequality, we have
$\mathbb{P}[\mathcal{I}_\gamma']\geq\mathbb{P}[\widetilde{\mathcal{I}}_\gamma].$
 \par Next, as in the \textbf{PP} case, we intersect the event $\widetilde{\mathcal{I}}_\gamma$ with certain \emph{high probability} events so as to further  lower bound $\mathbb{
 P}[\widetilde{\mathcal{I}}_\gamma]$ appropriately. Towards defining the events, consider mutually independent collection of  i.i.d. $\mathcal{N}(0,1)$ random variables $\{U_b^{(i)},~b\in[1:B]\}$, for  $i\in\{1,2,3\}$. Define $W_B^{(1)}$ and $\mathcal{E}_2$ as in Lemma \ref{Lem:P_I_gamma_LB_PP_UB}.
  Further, let $$W_B^{(3)}=\sum_{b=1}^B\left(\left(L_b^{(5)}-\mathbb{E}\left[L_1^{(5)}\right]\right)\Big/\sqrt{\mathbb{V}\left[L_1^{(5)}\right]}\right).$$ For $c_3>0$ (depending on the other parameters), denote $
\mathcal{E}_4=\left \{W_B^{(3)}\leq \sum_{b=1}^BU_b^{(2)}+2c_3\log B\right \}.$ 
Similarly, for $L_b^{(6)}$ as defined previously, define $$W_B^{(4)}=\sum_{b=1}^B\left(\left(L_b^{(6)}-\mathbb{E}\left[L_1^{(6)}\right]\right)\Big/\sqrt{\mathbb{V}\left[L_1^{(6)}\right]}\right).$$  
   For some $c_4>0$ (depending on the other parameters), let $\mathcal{E}_5=\left \{W_B^{(4)}\leq \sum_{b=1}^BU_b^{(3)}+2c_4\log B\right \}.$  
 In order to show that the event $\mathcal{E}_2 \cap \mathcal{E}_4 \cap \mathcal{E}_5$ is indeed a high probability event, we make use of the strong approximation principle of partial sum of i.i.d. random variables. Here, we make use of the additional assumption that $\mathbb{E}[\mathcal{P}^{(2+\delta)}(|H_1|))]]<\infty$, for any small $\delta>0$, in the statement of Theorem \ref{Th:AvgPow_Bnds}. Then, using the arguments invoked in Lemma \ref{Lem:P_I_gamma_LB_PP_UB}, using strong approximation of partial sum of i.i.d. random variables (\cite{dasgupta2008asymptotic}, Theorem 12.7), there exist positive constants $c_7,~c_8$ such that $\mathbb{P}\left[\mathcal{E}_2 \cup \mathcal{E}_4^c\cup \mathcal{E}_5^c\right]\leq {c_7}/{B^{c_8}}.$ Now, invoking the same lines of argument as in Lemma \eqref{Lem:P_I_gamma_LB_PP_UB}, we obtain the required result. 
 \end{proof}
\section{Proof of Theorem \ref{Thm:R_BF_LBUB_FBL_EH}}
  \label{App:EH_UB}
We will make use of Lemma \ref{Lem:EH_BF_MetaConverse} in our proof.  Furthermore, we assume that the energy harvesting constraints hold with equality. This will  \emph{not impact} the second order term (see \cite{polyanskiy2010channel}, \cite{fong2018achievable} for this trick known as the Yaglom map trick). We also assume that the energy harvesting constraint holds for $[B,n_c]$\textsuperscript{th} slot only, i.e., $||\mathbf{X'}||^2_2\leq ||\mathbf{E}||_1$ a.s..  This is a relaxation of the constraint in Section \ref{SubSubSec:EH_Trans} and hence can only improve the rate.
	 To proceed, we need the following definitions. Crucially, we rely on the notation introduced in Section \ref{Sec:Model_Notation}. Let $$L_b^{(1)}=\sum_{i=1}^{2}\frac{H_{b,i}^2||\mathbf{X'}_{b,i}||_2^2}{\sigma_N^2\left(1+G_b^2\right)},~ L_b^{(2)}=\sum_{i=1}^{2}\frac{H_{b,i}\langle\mathbf{X'}_{b,i},\mathbf{Z}_{b,i}\rangle}{\sigma_N^2\left(1+G_b^2\right)},$$ $$L_b^{(3)}=-\sum_{i=1}^{2}\frac{G_b^2||\mathbf{Z}_{b,i}||_2^2}{\sigma_N^4\left(1+G_b^2\right)},~L_b= {n_c}C(G_b^2)+\sum_{\ell=1}^3L_b^{(\ell)}.$$ Using the notation from Section \ref{Sec:Model_Notation}, $\mathbf{S}_B(L)=\sum_{b=1}^BL_b$ and $\mathbf{S}_B(L')=\sum_{b=1}^BL_b'$, where, $$L_b' =  {n_c} C(G_b^2)+\frac{\sigma_N^2||\mathbf{E}_b||_1-\mathcal{P}_{\text{WF}}(|H_b|)||\mathbf{Z}_{b}||_2^2}{\lambda \sigma_N^2}+L_b^{(2)}.$$ Let $\mathcal{E}=\{\overline{L}_B\leq \log \gamma_n\}$ and $\mathcal{E}_0=\{\overline{L}_B'\leq \log \gamma_n\}$. For $i\in\{1,2,3\}$, let $\{U_b^{(i)},~b\in[1:B]\}$ be mutually independent collection of  i.i.d. $\mathcal{N}(0,1)$ random variables.
	\par Define $W_B^{(1)}=\left[||C(\mathbf{G}^2)||_1-Bn_c\mathbf{C}(\overline{E})\right]/ \sqrt{\mathbb{V}[n_cC(G_1^2)]}$, where the notation $C(\mathbf{G}^2)$ is as mentioned in Section \ref{Sec:Model_Notation}.  For some positive constant $c_1$ (depending on other parameters), define $$\mathcal{E}_1= \{W_B^{(1)}\leq \sum_{b=1}^BU_b^{(1)}+2c_1\log B\}.$$ Let $W_B^{(2)}= \left[||\mathbf{E}||_1-Bn_c\overline{E}\right]/\sqrt{\lambda\sigma_E^2}$. For $c_2>0$ (depending on the other fixed parameters), define   $$\mathcal{E}_2=\{W_B^{(2)}\leq \sum_{b=1}^BU_b^{(2)}+2c_2B^{1/4}\}.$$ Let $L_b^{(4)}={\left[\mathcal{P}_{\text{WF}}(|H_b|)||\mathbf{Z}_{b}||_2^2-n_c\overline{E}\right]}/{\lambda \sigma_N^2 }$ and $W_B^{(3)}=\sum_{b=1}^BL_b^{(4)}/[\mathbb{V}[L_1^{(4)}]]^{1/2} $. For $c_3>0$ (depending on other parameters), let $$\mathcal{E}_3=\{W_B^{(3)}\geq \sum_{b=1}^BU_b^{(3)}-2c_3\log B\}.$$ 
	\par From the fact that $\mathbb{E}\left[|H_1|^2\right]<\infty$, observe there exists $t_1>0$ such that $\mathbb{E}[\exp(tC(G_1^2))]<\infty$, for  $|t|\leq t_1$. Similarly, since $\mathcal{P}_{\text{WF}}(\cdot)\leq \lambda$, there exists $t_2>0$ such that  $\mathbb{E}[\exp(tL_1^{(4)})]<\infty$, for  $|t|\leq t_2$. By assumption, $\mathbb{E}[E_{[1,1]}^4]<\infty$. Hence, using strong approximation of partial sum of i.i.d. random variables (see \cite{csorgo2014strong}, Theorem 2.6.2 and \cite{komlos1976approximation}, Theorem 4), there exist positive constants $c_4,~c_5$ such that $
	\mathbb{P}\left[\bigcup_{i=1}^3\mathcal{E}_i^c\right]\leq {c_4}/{B^{c_5}}.$
	\par Finally, let $\nu_1\triangleq nn_c\mathbb{V}[C(G_1^2)]+ n\sigma_E^2/\lambda^2+B\mathbb{V}[L_1^{(4)}]$ and $\nu_2\triangleq \nu_1+\sum_{b=1}^{B}{2L_b^{(1)}}/{(1+G_b^2)}$.  Now, we outline the key steps of the proof. Consider a scheme, i.e. distribution $P_{\mathbf{X'|E,H}}$, that satisfies the \emph{relaxed energy harvesting constraint} (mentioned at the beginning of the proof)  and attains an average probability of error no greater than $\epsilon$. We fix the auxiliary channel as in Section \ref{SubSubec:Ch_Aux_PP_UB}. With this choice, we consider lower bounding the probability term in \eqref{Eqn:Meta_Converse}. That is, $\mathbb{P}\left[\log\frac{d\mathbb{P}_{\mathbf{Y|X',H}}}{d\mathbb{Q}_\mathbf{Y|H}} \leq \log \gamma_n\right] \stackrel{(a)}{=} \mathbb{P}\left[\mathcal{E} \right]\stackrel{(b)}{\geq}\mathbb{P}\left[\mathcal{E}_0 \right].$ Here, $(a)$ follows from the fact that the corresponding random variables (in the definition of the events) are equal in distribution. Next, $(b)$ follows from Claim \ref{Claim:L_bound} in Appendix \ref{App:Proof_PeakPow} and the assumption of relaxed energy harvesting constraint. Now, we further lower bound $\mathbb{P}\left[\mathcal{E}_0 \right]$ using the strong approximation result mentioned above. Specifically, note that $\mathbb{P}\left[\mathcal{E}_0 \right] \geq \mathbb{P}\left[  \bigcap_{i=0}^{3} \mathcal{E}_i \right]$. However, $$\mathbb{P}\left[  \bigcap_{i=0}^{3} \mathcal{E}_i \right] \ge \Phi\left( \leq \frac{\log \gamma_n-n\mathbf{C}(\overline{E})}{\sqrt{\nu_2}}\right)-\delta_n.$$ Here, with  $\delta_n={(c_4n_c^{c_5})}/{n^{c_5}}$, the bound  follows from strong approximation bound $\mathbb{P}\left[\bigcup_{i=1}^3\mathcal{E}_i^c\right]\leq {c_4}/{B^{c_5}}.$ Next, we observe $$ \Phi\left( \leq \frac{\log \gamma_n-n\mathbf{C}(\overline{E})}{\sqrt{\nu_2}}\right) \ge \Phi\left( \frac{\log \gamma_n-n\mathbf{C}(\overline{E})}{\sqrt{nV_{\text{EF}}''(\overline{E})}}\right)$$ if we ensure $\log \gamma_n<n\mathbf{C}(\bar{E})$ and noting that $\nu_2\geq nV_{\text{EF}}''(\overline{E})$ ($V_{\text{EF}}''(\overline{E})$ is as defined in the statement of Theorem \ref{Thm:R_BF_LBUB_FBL_EH}). Finally, $\Phi\left( \frac{\log \gamma_n-n\mathbf{C}(\overline{E})}{\sqrt{nV_{\text{EF}}''(\overline{E})}}\right)-\delta_n\geq \epsilon+\delta_n$ is obtained by choosing $\log \gamma_n=n\mathbf{C}(\overline{E})+\sqrt{nV_{\text{EF}}''(\overline{E})}\Phi^{-1}(\epsilon+2\delta_n)$. Note that, as required, $\log \gamma_n<n\mathbf{C}(\overline{E})$ can be ensured for appropriately chosen $n$, as $\epsilon<1/2$. Thus, we have shown $\mathbb{P}[\mathcal{E}_0]\geq \epsilon+\delta_n$. The result follows  by plugging in the above estimate and $\gamma_n$ in \eqref{Eqn:Meta_Converse}, and applying Taylor's theorem.\qed  
\section{Proof of Theorem \ref{Thm_mod_dev}}
\label{App:Mod_Deviation}
First, with $a_n$ as in Definition \ref{Defn:Mod_Dev_Prop}, we consider lower bounding  $\underline{\lim}_{n\rightarrow \infty}\frac{1}{na_n^2}\log  p_{e,\text{avg}}^*(n,M_n)$ under \textbf{PP} constraint.  Consider any $(n,M_n,\epsilon_n,\bar{P})$ code with $M_n=\exp({n(\mathbf{C}(\bar{P})-a_n)})$. Fix $\theta>1$ and let $\gamma=\exp({n(\mathbf{C}(\bar{P})-\theta a_n)})$. Consider the auxiliary channel  in Section \ref{SubSubec:Ch_Aux_PP_UB}. Invoking the meta converse bound and applying the relaxations as in Appendix \eqref{App:Proof_PeakPow} (see Section  \ref{SubSubec:UB_via_MetaConv_PP_UB} therein), we obtain $p_{e,\text{avg}}^*(n,M_n) \geq \mathbb{P}[\mathcal{I}_\gamma]-\exp(na_n(\theta-1)).$  Using a similar  analysis as in Appendix \ref{SubSec:UB_PP} that involves lower bounding the probability term $\mathbb{P}[\mathcal{I}_\gamma]$ in terms of the probability of sum of i.i.d. standard Gaussian random variables, and invoking the result from \cite{dembo2010large} (Theorem 3.7.1),  we obtain the required lower bound as in \cite{polyanskiy2010channelmode}. A similar analysis yields $V_{\text{EF}}''(\overline{E})$ as lower bound on the moderate deviation constant under \textbf{EH} constraint.  With an additional moment constraint on the power allocation function (as in the statement of Theorem \ref{Th:AvgPow_Bnds}), we obtain $V_{\text{BF}}(\bar{P})$ as the lower bound for the \textbf{AP} constraint as well.
\par For the lower bound for \textbf{PP}, \textbf{AP} and \textbf{EH} constraint, the $\beta\beta$ bound and the power control policies explained in Appendix \ref{SubSec:LB_PP}, \ref{SubSec:LB_AP} and  \ref{SubSec:Save_Transmit} respectively. We fix the same auxiliary channel as in the analysis therein. For appropriate $\delta_n>0$, with i.i.d. $\mathcal{N}(0,1-\delta_n)$ input in the achievability proof for \textbf{PP}, we observe that there exists $(n,M_n,\epsilon_n,\bar{P})$ codes such that $\log M_n^*(n,\epsilon_n,\bar{P})\geq -\log \gamma_0 +O(1)$ (see the proof in Appendix \ref{SubSec:LB_PP}). Choose $\log \gamma_0=n\mathbf{C}(\bar{P})-\theta na_n$, $0<\theta<1$ so that we obtain the existence of codes with $\log M_n^*(n,\epsilon_n,\bar{P})\geq n\mathbf{C}(\bar{P})-\theta na_n $, for $n$ large, by [\cite{dembo2010large}], Theorem 3.7.1],  $\limsup\limits_{n\rightarrow\infty}\log \frac{p_{e,\text{avg}}^*(n,M_n)}{n}\leq -\frac{\theta^2}{2V_{\text{BF}}(\bar{P})}$. Taking $\theta\nearrow 1$, we obtain the required result. A similar analysis yields the result for the \textbf{AP} and \textbf{EH} constraint case as well.
\qed
  \bibliographystyle{IEEEtran}
\bibliography{Delay_PowControl} 
 \end{document}